\newtheorem{lem}{Lemma}
\newtheoremstyle{colon}%
{}
{}
{\itshape}%
{}%
{\bfseries}%
{:}%
{ }%
{}
\theoremstyle{colon}
\newcommand{\fourvalentvertex}[5]{
  \begin{tikzpicture}[baseline=(v)]
    \coordinate (v) at (0,0);
    \coordinate (a) at (0,1);
    \coordinate (b) at (1,0);
    \coordinate (c) at (0,-1);
    \coordinate (d) at (-1,0);

    \draw[-{Latex[round,length=3mm]}, thick] (a) -- ($(a)!0.6!(v)$);
    \draw[thick] (v) -- (a) node[midway,left] {\( #1 \)};
    \draw[-{Latex[round,length=3mm]}, thick] (v) -- ($(b)!0.4!(v)$);
    \draw[thick] (v) -- (b) node[midway,above,xshift=4pt] {\( #2 \)};
    \draw[-{Latex[round,length=3mm]}, thick] (v) -- ($(c)!0.4!(v)$);
    \draw[thick] (v) -- (c) node[midway,right,yshift=-4pt] {\( #3 \)};
    \draw[-{Latex[round,length=3mm]}, thick] (v) -- ($(d)!0.4!(v)$);
    \draw[thick] (v) -- (d) node[midway,below] {\( #4 \)};

    \node at ([shift={(0.2,0.2)}]v) {\( #5 \)};
  \end{tikzpicture}
}
\newcommand{\plaquette}[6]{%
  \begin{tikzpicture}[baseline=(center)]
    \coordinate (A) at (0,0);
    \coordinate (B) at (1.5,0);
    \coordinate (C) at (1.5,1.5);
    \coordinate (D) at (0,1.5);
    \coordinate (center) at (0.75,0.75);

    \draw[thick] (A) -- (B) node[midway,below] {\( #1 \)};
    \draw[-{Latex[round,length=3mm]}] (A) -- ($(A)!0.6!(B)$);
    \draw[thick] (B) -- (C) node[midway,right] {\( #2 \)};
    \draw[-{Latex[round,length=3mm]}] (B) -- ($(B)!0.6!(C)$);
    \draw[thick] (C) -- (D) node[midway,above] {\( #3 \)};
    \draw[-{Latex[round,length=3mm]}] (C) -- ($(C)!0.6!(D)$);
    \draw[thick] (D) -- (A) node[midway,left] {\( #4 \)};
    \draw[-{Latex[round,length=3mm]}] (A) -- ($(D)!0.4!(A)$);

    \node at (center) {\( #5 \)};
    \fill (A) circle (0.07cm);
  \end{tikzpicture}%
}
\def\tcm{T.C.M. Group, Cavendish Laboratory, University of Cambridge, J.J. Thomson Avenue, Cambridge, CB3 0HE, UK}
\def\DAMTP{DAMTP, University of Cambridge, Wilberforce Road, Cambridge, CB3 0WA, UK}
\begin{document}

\title{Classifying one-dimensional Floquet phases through two-dimensional topological order}
\author{Campbell McLauchlan}
\affiliation{Centre for Engineered Quantum Systems, School of Physics, University of Sydney, Sydney, NSW 2006, Australia}

\author{Vedant Motamarri}
\affiliation{\tcm}

\author{Benjamin B\'eri}
\affiliation{\tcm}
\affiliation{\DAMTP}

\begin{abstract}
Floquet systems display rich phenomena, such as time crystals, with many-body localisation (MBL) protecting the phases from heating. While several types of Floquet phases have been classified, a unified picture of Floquet MBL is still emerging.
Static phases have been fruitfully studied via ``symmetry topological field theory" (SymTFT), wherein the universal features of $G$-symmetric systems are elucidated by placing them on the boundary of a topological order of one dimension higher. In this work, we provide a SymTFT approach to classifying $G$-symmetric Floquet MBL phases in 1D, for $G$ a finite Abelian group with on-site unitary action. In the SymTFT, these 1D systems correspond to the boundaries of the quantum double associated to $G$, and the classification naturally arises from considering the Lagrangian subgroups and boundary excitations of the quantum double. The classification covers all known Floquet phases while uncovering others previously unexplored, along with bulk features of phases thought to have only boundary signatures. We refer to the latter phases as ``dual" time crystals.
For static phases, we show how anyons of the quantum double and (string) order parameters provide a natural and simple interpretation of known classification schemes. By extending our framework to the boundaries of twisted quantum doubles, we uncover a new time-crystalline phase with non-onsite symmetry, which cannot be obtained through local, symmetric Hamiltonian drives. 
We numerically demonstrate evidence for the absolute stability of this phase, and observe that for open boundary conditions it has greater stability to symmetric perturbations.
We finally discuss perspectives on using programmable quantum devices to realise and probe the phases we discuss. 
Our results show that SymTFT provides a powerful approach to unifying phases and features of Floquet systems.
\end{abstract}

\maketitle

\section{\label{sec:intro} Introduction}

Periodically driven systems display rich phase structures often without analogues in zero-temperature or equilibrium settings. Such phases can display Floquet Majorana fermions~\cite{Floquet_Majoranas_PRL_2011}, spontaneous symmetry breaking (SSB)~\cite{Khemani_2016,KeyserlingkSondhi2016floquetSSB} and symmetry-protected topological (SPT) phases~\cite{potter2016prx,KeyserlingkSondhi2016floquetSPT,else2016prb}, anomalous edge states~\cite{rudner2013anomalous}, topological insulation~\cite{roy2017prb}, and conformal field theory~\cite{Floquet_CFT}, just to name a few examples. One of the striking characteristics of certain driven systems is the breaking of discrete time-translation symmetry, i.e., the emergence of ``time crystals"~\cite{Wilczek2012quantumTC,ShapereWilczek2012classicalTC,WatanabeOshikawa2015TC,Khemani_2016,else2016timecrystal,KeyserlingkSondhi2016floquetSPT,KeyserlingkSondhi2016floquetSSB,KeyserlingkKhemaniSondhi2016stability,KhemaniKeyserlingkSondhi2017RepTh,khemani2019briefhistorytimecrystals,Sacha2018,Else2020}. While driven systems are normally expected to evolve to a trivial, infinite-temperature state, many-body localisation (MBL) can allow them to avoid this thermalisation~\cite{Fleish80,Gornyi2005,basko2006metal,Huse_localization_2013,serbyn2013local,Huse_MBL_phenom_14,chandran2015constructing,ros2015integrals,Rademaker2016LIOM,Wahl2017PRX,Goihl2018,NandkishoreHuse_review,AltmanReview,Abanin2017,Alet2018,ImbrieLIOMreview2017}. Recent work has considered theoretical proposals for demonstrating Floquet phases and features in experiments~\cite{Koyama_Floquet_rotor_PRR_2023,Wahl_2024}, and demonstrated the existence of time crystals in quantum processors~\cite{mi2022time,Xiang_Top_TC_Experiment_2024}.

The classification of Floquet phases is critical for understanding their range of features. Previous work has provided a classification of $G$-symmetric Floquet SSB phases~\cite{KeyserlingkSondhi2016floquetSSB} and Floquet SPT~\cite{KeyserlingkSondhi2016floquetSPT,potter2016prx,else2016prb} phases by including discrete time-translation among the system's symmetry in the analysis. However, more work can be undertaken to provide a unified picture of all phases demonstrated by a Floquet MBL system under a symmetry group $G$. Additionally, while Floquet SPT phases have been examined from the perspective of their boundary signatures, as we show, they also possess important bulk signatures, which have not been previously explored.

Recently, we have shown that the approach of symmetry topological field theory (SymTFT)~\cite{aasen2016topological,Aasen_cat,JiWen2020categorical,Kong_holographic_entanglement_2020,lichtman2020bulk,Kaidi22,Bhardwaj_PtII,ChatterjeeWen23,TH23,Huang_Chen_criticality,WenPotter_SPTgapless,bhardwaj2023club,Lootens_1D,Lootens23,Lootens21,albert2021spin,gaiotto2021orbifold}, while usually used to study ground states, can provide a fruitful unifying picture of driven 1D systems, that can elucidate known and uncover novel phenomena~\cite{motamarri2024symtftequilibriumtimecrystals}. In such an approach, 1D symmetric systems are placed on the boundary of a fictitious 2-dimensional ``bulk" system. Even in $\mathbb{Z}_2$-symmetric systems, previously unknown features become apparent, including ``dual" time crystals, charge pumping between a system and its dynamical boundary conditions, and connections to Floquet-enriched topological order~\cite{PotterMorimoto2017FSET,po2017radical}, including  Floquet codes~\cite{Hastings2021dynamically,Vuillot21,DavydovaPRXQ23,kesselring2022anyon,Vu2024Measurement_Induced,sullivan2023floquet,aasen2023measurement,ellison2023floquet,davydova2023quantum,dua2023engineering,aitchison2024competingautomorphismsdisorderedfloquet}.

In this work, we approach the problem of classifying $G$-symmetric 1D Floquet phases via the SymTFT framework. We provide a novel classification scheme for such phases, a scheme that naturally arises when viewing the 1D system to be the boundary of a 2D bulk. Our scheme encompasses previous approaches to classifying SSB and SPT Floquet phases while also uncovering previously unexplored phases, which are partially-$G$-symmetry-broken and combine regular and dual time-crystalline signatures. 
We consider $G$ to be an Abelian symmetry group, and mostly focus on cases where it has on-site, unitary action. Our focus on Abelian $G$ is natural in the driven setting of interest, as non-Abelian symmetry groups are not compatible with symmetry-preserving MBL and hence are unstable to heating~\cite{potter2016prx,Vasseur2016}. While we only consider unitary representations of $G$, this already presents new features beyond just a consolidation of previous classification schemes.

To begin, we consider the 1D gapped (more generally MBL) boundaries of 2D topologically ordered systems, namely those of the quantum double associated to the symmetry group $G$ (or $G$-TO, for short). These boundaries are classified by the Lagrangian subgroups of the $G$-TO~\cite{Kapustin_2011,Levin_2013,Barkeshli13a}, certain subgroups of the $G$-TO's anyonic excitations. The corresponding gapped boundaries, in turn, classify the gapped phases of 1D $G$-symmetric systems~\cite{aasen2016topological,Aasen_cat,JiWen2020categorical,Kong_holographic_entanglement_2020,lichtman2020bulk,Kaidi22,Bhardwaj_PtII,ChatterjeeWen23,TH23,Huang_Chen_criticality,WenPotter_SPTgapless,bhardwaj2023club}. Here we show that this approach also allows one to classify MBL and Floquet phases of such 1D systems, and furnishes new insights even for ground-state settings. 

We discuss the possible SSB, partially symmetry-broken, and SPT phases. For static SPT phases, we provide a novel and simple interpretation of the classification scheme of 1D SPT phases via the anyons of the 2D bulk. We then consider unitary drives, which leads us to our classification of Floquet phases: we establish a classification in terms of a Lagrangian subgroup of the $G$-TO, $\mathcal{M}$, along with an ``excitation class", labelled by some anyonic excitation $b\notin \mathcal{M}$. We find many phases with spontaneous time-translation symmetry breaking (time-crystallinity) along with phases that we classify as ``dual" time crystals, wherein the time crystallinity is diagnosed by local order parameters charged under a dual symmetry. 
The duality between charges and boundary conditions~\cite{gaiotto2021orbifold,Bhardwaj_PtII,Lootens_1D} is central in understanding many features of these phases. In this way we capture all known MBL Floquet phases of 1D $G$-symmetric systems, while also finding phases that were not previously discussed.

We also consider 1D systems with non-onsite symmetry action. Non-onsite symmetries have been observed to entail rich physics in static systems~\cite{SPT_gapless_edge_Wen,JiWen2020categorical,Wen_2013_GaugeAnomalies,zhang2024NonOnsiteSSB}, but have not been explored in driven settings.  To study them in the SymTFT framework, we consider the bulk to be a twisted quantum double (TQD)~\cite{Hu_2013,Levin-Wen,Levin_gu_DS_model,JiWen2020categorical,Ellison_2022_TQDs}, a generalisation of the $G$-TO mentioned above.
The boundaries give rise to novel driven 1D systems, and to illustrate their key feature we focus on the simplest case, $G=\mathbb{Z}_2$, where the corresponding 2D bulk is the double semion model~\cite{Levin-Wen}. The system we explore can exhibit time-translation symmetry breaking, but with fewer Floquet MBL phases than with onsite symmetry and with unusual differences between the cases with closed and open boundary conditions. %

This paper is structured as follows. In Section~\ref{sec:Illustrative_example}, we illustrate the ideas of this paper using a simple example, the $\mathbb{Z}_2$-symmetric, driven transverse-field Ising model. In Section~\ref{sec:G_TO}, we then introduce quantum double models based on Abelian groups, which will be used to study 1D phases with on-site symmetries. We also highlight the ``boundary algebra" which will be placed in correspondence with the 1D system of interest. In Section~\ref{sec:Static_phase_SymTFT} we then review the classification of static phases via Lagrangian subgroups. We provide a simple interpretation of the static phase classification scheme based on anyons of the quantum double model and (string) order parameters. In Section~\ref{sec:SymTFT_Floquet_phases}, we provide the SymTFT classification of Floquet phases. We introduce a specific form of fixed-point  (i.e., zero-correlation-length) drives,
and show that the scheme we introduce captures known and previously unknown phases. We discuss boundary conditions, relate our approach to existing classification schemes, demonstrate the time-translation symmetry breaking signatures of many of the phases, and consider some illustrative examples. In Section~\ref{sec:TQD_phases}, we extend this framework by investigating Floquet phases at the boundary of the double semion model. Finally, in Section~\ref{sec:Experimental_demo}, we discuss some experimental considerations. In Section~\ref{sec:Conclusion}, we conclude and provide avenues for future work.

\section{Illustrative example: Driven Ising model}\label{sec:Illustrative_example}

\begin{figure*}
    \centering
    \begin{tikzpicture}
\node[inner sep=0pt] at (-4,0)         {\includegraphics[width=0.95\linewidth]{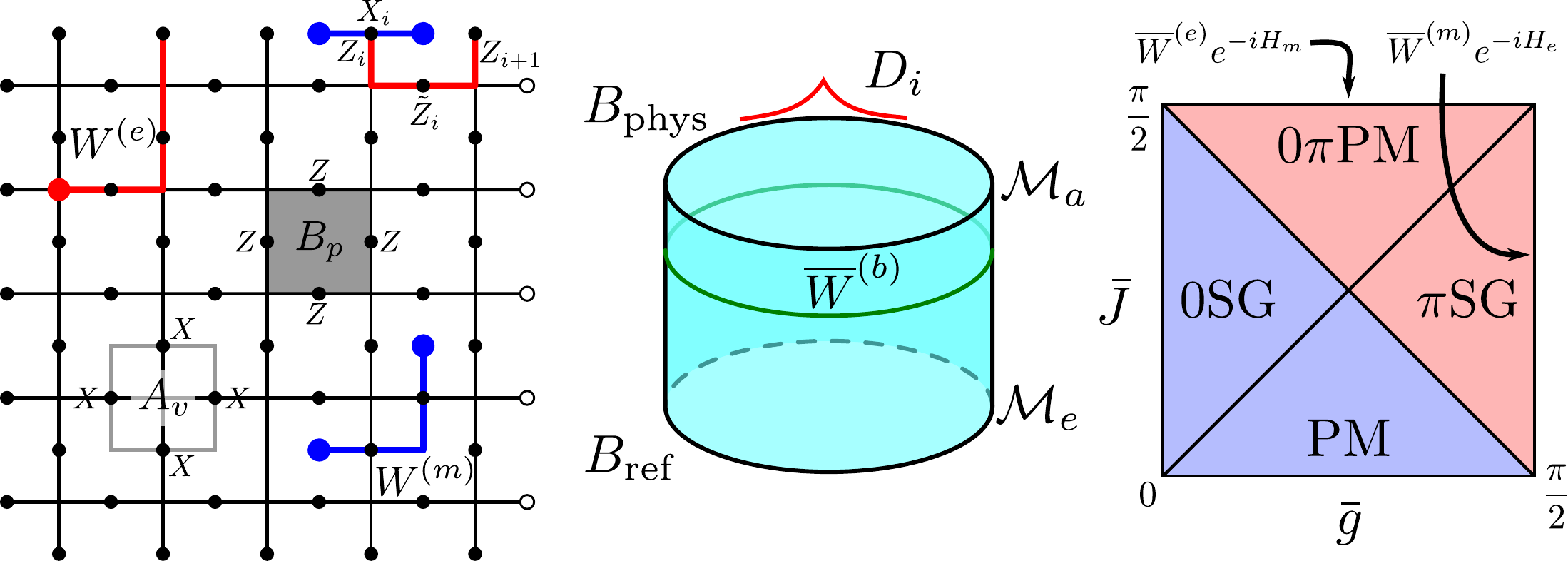}};
\node[inner sep=0pt] at (-9.63,-3.5) {(a)};
\node[inner sep=0pt] at (-3.55,-3.5)    {(b)};
\node[inner sep=0pt] at (2,-3.5)    {(c)};
\end{tikzpicture}
\caption{$\mathbb{Z}_2$-symmetric Floquet phases on the boundary of the toric code TO. (a) Toric code with left and right boundaries identified and top and bottom boundaries $e$-condensing. An example of a vertex operator ($A_v$) and a plaquette operator ($B_p$) are shown. String operators $W^{(m)}$ and $W^{(e)}$, and some members of the boundary algebra are shown. (b) Schematic of the $\mathbb{Z}_2$-TO on the cylinder with the top (bottom) boundary, $B_\text{phys}$ ($B_\text{ref}$), gapped according to Lagrangian subgroup $\mathcal{M}_a$ ($\mathcal{M}_e$), for anyon $a\in \lbrace e,m\rbrace$. Time-translation symmetry breaking (TTSB) is generated by the inclusion of string operator $\overline{W}^{(b)}$, for $b\notin \mathcal{M}_a$, in the drive. For strong disorder, local integrals of motion pick up exponentially decaying tails. This is schematically illustrated for LIOMs $D_i$, with support shown in red. (c) The phase diagram
of the driven Ising model. $\bar{J}$ and $\bar{g}$ are the means of disordered parameters $J_i$ and $g_i$ (see Eqs.~\ref{eqn:H_0_ZZ} and \ref{eqn:H_1_X}). We identify two TTSB fixed-point drives, relating them to drives from the model in (b) (see Eqs.~\ref{eqn:piSG_U} and \ref{eqn:0piPM_U}).}
\label{fig:classification_illustration}
\end{figure*}

We here explain our classification scheme using the example of a $\mathbb{Z}_2$-symmetric system. We begin by introducing $\mathbb{Z}_2$-topological order (TO) via the toric code phase. We then introduce the 1D system of interest, the driven transverse-field Ising model, before providing its SymTFT description by embedding it on the boundary of a toric code phase on a cylinder. We enumerate the fixed-point drives of the chain and explain how to move away from these fixed points to uncover the whole phase diagram, with phases stabilised by MBL. We then discuss symmetry-breaking perturbations and the absolute stability of the phases.

\subsection{\texorpdfstring{$\mathbb{Z}_2$-topological order: the toric code}{Z2-topological order: the toric code}}

The $\mathbb{Z}_2$-topologically ordered phase is modelled by Kitaev's toric code~\cite{kitaev2003fault,BravyiKitaev_SC}.
This is a system described by a local 2D Hamiltonian $H_{\text{toric}}$, defined on a square lattice with qubits placed on the edges of the lattice, see Fig.~\ref{fig:classification_illustration}(a). To define $H_{\text{toric}}$, we first define ``vertex" and ``plaquette" operators respectively:
\begin{align}
    A_v &= \prod_{j\in v} X_j\\
    B_p &= \prod_{j\in p} Z_j
\end{align}
where $v$ is the set of qubits adjacent to the vertex labeled by $v$, and $p$ is the set of qubits around the corresponding plaquette (a square in the lattice), as shown in Fig.~\ref{fig:classification_illustration}(a). The Hamiltonian is then defined as:
\begin{align}
    H_\text{toric} = -\sum_v A_v - \sum_p B_p.
\end{align}
The ground states are $+1$ eigenstates of $A_v$ and $B_p$ for all $v,p$. The system is topologically ordered: it has non-trivial ground state degeneracy determined by the boundary conditions of the manifold on which the lattice sits, non-trivial topological entanglement entropy, and anyonic excitations~\cite{kitaev2003fault,Kitaev2006,Kitaev_2006}.

We next describe these excitations. A ``string operator" $W_\gamma^{(e)} = \prod_{j\in \gamma} Z_j$ is a product of single-qubit Pauli operators along a path $\gamma$ of edges in the lattice, starting at vertex $v_1$ and terminating at $v_2$. Such an operator commutes with all $B_p$ and all $A_v$ away from the end-points of $\gamma$, but anti-commutes with $A_{v_1}$ and $A_{v_2}$. We interpret $W_\gamma^{(e)}$ as creating a pair of quasiparticles, called $e$ particles, located on $v_1$ and $v_2$. These quasiparticles are ``anyons", as we shall explain below. Now let us define
the dual $\widetilde{\mathcal{L}}$ of a lattice $\mathcal{L}$ by assigning a vertex in $\widetilde{\mathcal{L}}$ to each plaquette $p$ in $\mathcal{L}$, and connecting vertices in $\widetilde{\mathcal{L}}$ by edges if and only if the corresponding plaquettes in $\mathcal{L}$ are adjacent, which leaves the faces in $\widetilde{\mathcal{L}}$ associated with the vertices of $\mathcal{L}$. We then similarly define the ``dual string operator" $W_{\tilde{\gamma}}^{(m)} = \prod_{j\in\tilde{\gamma}} X_j$ to be the product of $X$ operators along a path $\tilde{\gamma}$ in the dual lattice. This operator creates $m$ particles on the plaquettes at the endpoints of $\tilde{\gamma}$. Examples of these quasiparticles are shown in Fig.~\ref{fig:classification_illustration}(a).

The $m$ and the $e$ particles are bosons,
since the state picks up no phase upon the exchange of two $e$ (equiv. $m$) particles. Denoting the exchange phase associated to anyon $a$ by $e^{i\theta_a}$ we thus have $e^{i\theta_e} = e^{i\theta_m} = +1$. However, an $e$ and $m$ particle braid non-trivially with one another: encircling one by the other, the state picks up a $-1$ phase due to the anti-commutation of the $m$ and $e$ string operators. We summarise this with the braiding phase $\mathcal{B}(e,m) = -1$. Along with these two quasiparticles, the toric code phase has two other anyons, the ``vacuum particle" $\mathbf{1}$, and the fermion $f \coloneqq e\times m$, which is obtained by fusing the $e$ and $m$ particles.

The lattice can be given periodic boundary conditions, i.e., defined on a torus, in which case the ground state is degenerate. String operators $\overline{W}^{(e)}_\gamma$ or $\overline{W}^{(m)}_{\tilde{\gamma}}$ that wrap around the torus handle commute with the Hamiltonian but act non-trivially on this space of ground states.
We will be considering the TO systems on a cylinder. The open boundaries at the top and bottom of the cylinder will be gapped [as achieved by using suitable boundary plaquette and vertex operators, as exemplified in Fig.~\ref{fig:classification_illustration}(a)]. (In the context of Floquet phases, we will refer to MBL, rather than gapped, boundaries, since there is no notion of a ground state in such systems.)
There are two distinct gapped boundaries of the toric code phase, which differ by the anyons that are \textit{condensed} at the boundary. 
If an $e$ ($m$) particle can be condensed at a boundary, it means we can terminate a string operator $W_\gamma^{(e)}$ ($W_{\tilde{\gamma}}^{(m)}$) on that boundary without it anti-commuting with any vertex (plaquette) operators. An example of such a $W_\gamma^{(e)}$ string operator terminating at an $e$-condensing boundary is shown in Fig.~\ref{fig:classification_illustration}(a). 
The anyons that are condensed at a particular boundary form a ``Lagrangian subgroup" and the gapped boundaries of a TO system are in one-to-one correspondence with these~\cite{Kapustin_2011,Levin_2013,Barkeshli13a}.
(We will define Lagrangian subgroups in the following section.)
For the toric code phase, there are only two Lagrangian subgroups: 
$\mathcal{M}_e = \lbrace \mathbf{1},e\rbrace$
and $\mathcal{M}_m = \lbrace \mathbf{1}, m\rbrace$.  These are both closed under the fusion operation because $e\times e = m\times m = \mathbf{1}$.

We now define the setup of interest: a toric code phase defined on a cylinder [see Fig.~\ref{fig:classification_illustration}(b)]. The bottom boundary of this system will be a ``reference boundary", $B_\text{ref}$. We will set $B_\text{ref}$ to be $e$-condensing, described by $\mathcal{M}_e$. Meanwhile the top boundary will ultimately describe the physical 1D system of interest. We will refer to that as the ``physical boundary", or $B_\text{phys}$.

\subsection{Driven Ising chain}

Here we introduce the physical $\mathbb{Z}_2$-symmetric 1D system we shall aim to describe via SymTFT, namely the driven Ising chain. This is a 1D chain of $L$ spins (or qubits) undergoing evolution by a Floquet unitary $U_F = \mathcal{T}\text{exp}(-i\int_0^T H(t)dt)$. We mostly consider two-step drives, where $H(t) = H_0/(T/2)$ for $0\leq t < T/2$, and $H(t) = H_1/(T/2)$ for $T/2\leq t < T$, with $H_0$ and $H_1$ defined below. Hence, we have:
\begin{align}
    U_F &= e^{-i H_1}e^{-i H_0},\\
    H_0 &= \sum_{j=1}^{L} J_j Z_j Z_{j+1},\label{eqn:H_0_ZZ}\\
    H_1 &= \sum_{j=1}^L g_j X_j.\label{eqn:H_1_X}
\end{align}
Here
we consider only the simplest Ising-symmetric terms in the Hamiltonians (without more complicated interactions). We also assume periodic boundary conditions (defining $Z_{L+1}\equiv Z_1$).
$U_F$ generates the periodic time evolution of the system. 
Due to this periodic nature of the driving, the system possesses discrete time-translation symmetry (TTS).
It is also symmetric under the $\mathbb{Z}_2$ symmetry generated by $P \coloneqq \prod_j X_j$. We choose the parameters $J_j$ and $g_j$ randomly about their respective means, $\bar{J}$ and $\bar{g}$, pulling them from uniform distributions with widths $\delta J$ and $\delta g$, respectively, with $\delta J \gg \delta g$ (or vice versa, depending on the phase of the system we consider), so that the eigenstates of $U_F$ are many-body localised~\cite{Fleish80,Gornyi2005,basko2006metal,Huse_localization_2013,serbyn2013local,Huse_MBL_phenom_14,chandran2015constructing,ros2015integrals,Rademaker2016LIOM,Wahl2017PRX,Goihl2018,NandkishoreHuse_review,AltmanReview,Abanin2017,Alet2018,ImbrieLIOMreview2017}.

This system has four Floquet phases, shown in Fig.~\ref{fig:classification_illustration}(c), which are exemplified by four distinct limits in phase space~\cite{Khemani_2016},
corresponding to four ``fixed-point" (i.e., zero-localisation-length) drives. Unlike zero-temperature phases, the features of these driven phases will be present in all eigenstates of $U_F$. Such eigenstates will have corresponding eigenvalues of the form $e^{i\varepsilon}$, for ``quasienergy" $\varepsilon$, since $U_F$ is unitary. 

To begin, we have the phases that are analogous to the phases of the un-driven Ising chain: these are the spin glass (SG) and paramagnet (PM). The fixed-point drive for the former is obtained by setting all $g_i = 0$; the eigenstates of $U_F$ hence become those of $H_0$: we have doubly-degenerate eigenspaces with eigenstates labeled by the values of $Z_j Z_{j+1}$ and of $P$. This state exhibits long-range order and spontaneous symmetry-breaking, as detected in correlations of the order parameter $Z_j$. The PM phase, by contrast, can be obtained by setting all $J_i = 0$. Now the eigenstates of $U_F$ become those of $H_1$, and these are paramagnetic. 

We now cover the non-trivial Floquet phases. First, let us set $g_i = \frac{\pi}{2}$
for all $i$. Then $U_F \propto P e^{-i H_0}$. While the eigenstates of this operator are the same as those of the SG phase, this phase is very distinct. Consider state $\ket{p,\, \{s_i\}}$ for $P\ket{p,\, \{s_i\}}=p\ket{p,\, \{s_i\}}$ and $Z_i Z_{i+1}\ket{p,\, \{s_i\}} = s_i \ket{p,\, \{s_i\}}$. Then $U_F \ket{p,\, \{s_i\}} = p e^{-i\sum_j J_j s_j}\ket{p,\, \{s_i\}}$ and hence the two states $\ket{p=+1,\, \{s_i\}}$ and $\ket{p=-1,\, \{s_i\}}$ 
have quasienergies that differ by $\pi$ -- this is called $\pi$-spectral pairing. This results from time-translation symmetry breaking (TTSB) exhibited by this phase~\cite{Wilczek2012quantumTC,ShapereWilczek2012classicalTC,WatanabeOshikawa2015TC,Khemani_2016,else2016timecrystal,KeyserlingkKhemaniSondhi2016stability,KhemaniKeyserlingkSondhi2017RepTh}: we note that observables display period-doubling, $Z_j(t) = (-1)^tZ_j(0)$ for $t\in \mathbb{Z}$ counting the number of Floquet periods undergone by the system. %
This phase is referred to as the $\pi$-spin glass phase ($\pi$SG). 

Finally, we can take $J_i = \frac{\pi}{2}$. So $U_F \propto \prod_j Z_j Z_{j+1} e^{-i H_1} = Z_1 Z_L e^{-iH_1}$, assuming we have open boundary conditions. Now we find TTSB on the boundary of the system: $X_1(t) = (-1)^t X_1(0)$ for $t\in \mathbb{Z}$. Hence, this phase also exhibits $\pi$-spectral pairing between eigenstates differing in the eigenvalue of $Z_1 Z_L$. We will refer to the system as a dual time crystal, for reasons that will be clear later. Meanwhile, there is degeneracy between the two states with equivalent eigenvalue of $Z_1 Z_L$. This phase is referred to as the $0\pi$PM phase. With closed boundary conditions, it seems indistinguishable from the PM phase, since the non-trivial time-periodicity and spectral pairing occurs on the boundary of the system. However, as we will show, with closed but dynamical boundary conditions this phase does strikingly differ from the PM phase.

\subsection{SymTFT description of the driven Ising chain}

Here we describe how the boundary conditions at $B_\text{phys}$ for the toric code on a cylinder, together with a suitable topological drive ingredient, correspond to the phases in the driven Ising model. We describe this correspondence in terms of the actual Hilbert spaces, and in terms of categorical symmetries and order parameters.

As described above, we let the toric code phase be defined on a cylinder, with $B_\text{ref}$ condensing $e$ anyons [see Fig.~\ref{fig:classification_illustration}(b)]. We consider the subspace of the Hilbert space in which we have no anyons in the bulk of the 2D system. Physically, we can imagine the bulk (and $B_\text{ref}$) to have an energy gap far larger than any energy scale associated with the dynamics on $B_\text{phys}$.
We will show that the four fixed point drives introduced in the previous section can be obtained for $B_\text{phys}$ by choosing for it a Lagrangian subgroup and an ``excitation class". The fact that the $B_\text{phys}$ Hilbert space can be identified with that of the Ising chain can be made explicit, by noting two fixed point limits. First, consider $B_\text{phys}$ to be static and gapped, with Lagrangian subgroup $\mathcal{M}_e$. Then the plaquette operators are modified along $B_\text{phys}$ to now be weight-3 [see Fig.~\ref{fig:classification_illustration}(a)]. If we label the sites along $B_\text{phys}$ by $i\in [1,L]$ and use $\tilde{Z}_i$ to refer to the operator acting on the qubit in the row below, in between $i$ and $i+1$, then the boundary plaquette operators become:
\begin{align}
    B_p^\text{bound} = Z_i \tilde{Z}_{i}Z_{i+1}.
\end{align}
This boundary operator can be understood as a string operator, which we denote $W^{(e)}_{(i,i+1)}$, running between nearest-neighbor sites in the lattice.

Meanwhile, if we project the boundary qubits into $X_i$ eigenstates, the vertex operators along the boundary become weight-3, since we can replace the stabiliser $A_v$ with the weight-3 term $A_v^\text{bound} = A_v X_i$, with $i$ the site immediately above $v$ [see Fig.~\ref{fig:classification_illustration}(a)]. Now the boundary condenses $m$ anyons. Hence these two gapped boundaries, described by $\mathcal{M}_e$ and $\mathcal{M}_m$ respectively, correspond exactly to the two fixed point limits of the transverse-field Ising chain, upon setting the ``link" degrees of freedom $\tilde{Z}_{i} = +1$. (We shall later also consider retaining the $\tilde{Z}_{i}$ and reinterpreting them as gauge degrees of freedom related to the boundary conditions of the 1D system.)
These $X_i$ operators also correspond to string operators, which we denote $W^{(m)}_{(i,i+1)}$. These strings are understood to run between ``link" $i$ (which sits between sites $i-1$ and $i$) and link $i+1$. 

To summarise, we have the following assignments between boundary algebra terms and string operators:
\begin{align}
    Z_i \tilde{Z}_i Z_{i+1} &\leftrightarrow W^{(e)}_{(i,i+1)}\\
    X_i &\leftrightarrow W^{(m)}_{(i,i+1)}.
\end{align}
Notice that the symmetry, $P = \prod_i X_i$ can be re-interpreted as a string operator for an $m$-anyon wrapping around the circumference of the cylinder: $P = \prod_i W^{(m)}_{(i,i+1)}\coloneqq \overline{W}^{(m)}$. Meanwhile, we can also define the operator obtained by wrapping an $e$-anyon string around the cylinder: $\prod_i Z_i\tilde{Z}_iZ_{i+1} \coloneqq \overline{W}^{(e)}$. We refer to this operator as the dual symmetry.
We also define $W^{(e)}_k$ to be an $e$-string operator running from $B_\text{phys}$, where it intersects at site $k$, to $B_\text{ref}$. We assign:
\begin{align}
    Z_k \leftrightarrow W^{(e)}_k.
\end{align}
$W_k^{(e)}$ does not create any anyons away from $B_\text{phys}$ because the $e$ anyon at
its endpoint condenses at $B_\text{ref}$. It also anti-commutes with the symmetry, $\overline{W}^{(m)}$, and so, in terms of the 1D system at $B_\text{phys}$, it corresponds to a local order parameter to detect spontaneous symmetry breaking.
Finally, we define $W^{(m)}_k$ to be an $m$-string operator running from $B_\text{phys}$, where it intersects at link $k$, to $B_\text{ref}$, after passing through a short anyon-permuting domain wall we can introduce in the bulk which effects the interchange $e\leftrightarrow m$. This is required so that we can terminate the string operator on $B_\text{ref}$ without creating any excitations there (see Ref.~\onlinecite{motamarri2024symtftequilibriumtimecrystals} for further details). We then have the assignment of link Pauli operators:
\begin{align}\label{eq:Xtilde}
    \tilde{X}_k \leftrightarrow W^{(m)}_k.
\end{align}

Above, we have shown that we can describe the static phases of the Ising chain (the SG and PM) via the Lagrangian subgroups of the toric code phase. We may also describe the two time crystal phases. Notice that $e^{-i\frac{\pi}{2}\sum_iX_i} \propto \prod_{i}X_i = \overline{W}^{(m)}$, and $e^{-i\frac{\pi}{2}\sum_i Z_i\tilde{Z}_iZ_{i+1}} \propto \prod_i \tilde{Z}_i = \overline{W}^{(e)}$. Hence the $\pi$SG and $0\pi$PM phases, at least at their fixed point limits for now, can be obtained by:
\begin{align}
    U_F^{\pi\text{SG}} &= \overline{W}^{(m)} e^{-iH_e},\label{eqn:piSG_U}\\
    U_F^{0\pi\text{PM}} &= \overline{W}^{(e)} e^{-iH_m},\label{eqn:0piPM_U}
\end{align}
where we define $H_e = \sum_i J_i Z_i \tilde{Z}_i Z_{i+1}$ as the Hamiltonian corresponding to the $\mathcal{M}_e$ Lagrangian subgroup, and similarly $H_m = \sum_i g_i X_i$ for the $\mathcal{M}_m$ Lagrangian subgroup. 
Both systems have closed boundary conditions, which we can alter subsequently. In the absence of any $W^{(m)}_k$ in the drive (or the operators that we time-evolve in correlation functions), the link degrees of freedom are not dynamical and the string of $\tilde{Z}_i$ operators in Equation~\ref{eqn:0piPM_U} can all be set to $+1$. We will see later, however, that they can describe twisted boundary conditions.

In summary, we can obtain the four fixed point drives by defining a gapped Hamiltonian $H_\mathcal{M}$ on $B_\text{phys}$, corresponding to a Lagrangian subgroup $\mathcal{M}$,
and prepending the drive $e^{-iH_\mathcal{M}}$ with a kick from an operator $\overline{W}^{(a)}$ for some anyon $a$. If the anyon $a$ is not in $\mathcal{M}$ we have the two terms in Equations~\ref{eqn:piSG_U} and \ref{eqn:0piPM_U}. Otherwise, if $a\in\mathcal{M}$, the anyon string can be absorbed into $B_\text{phys}$ -- it is a product of terms from $H_\mathcal{M}$. Hence, the drive $\overline{W}^{(e)} e^{-iH_e}$ is equivalent to $e^{-iH_e}$, which is the drive for the SG phase. Similarly, $\overline{W}^{(m)} e^{-iH_m} \equiv e^{-iH_m}$, the drive for the PM phase.

While we see that we do not produce TTSB if $a\in \mathcal{M}$, let us now
show how a non-trivial prepending string operator can generate TTSB. %
Consider the $\pi$SG phase. We can observe the characteristic period doubling by considering a string operator $W^{(e)}_i$.
Since both boundaries absorb $e$ anyons in this phase, the string operator creates no anyons anywhere. We see that $W^{(e)}_i (t) = [(U_F^{\pi \text{SG}})^\dagger]^t W^{(e)}_i [U_F^{\pi \text{SG}}]^t = (-1)^t W^{(e)}_i$, since $W^{(e)}_i$ anti-commutes with $\overline{W}^{(m)}$ and commutes with $H_e$. We also have long-range spatiotemporal
order in all eigenstates of the drive:
\begin{align}
    \bra{n}W^{(e)}_i(t) W^{(e)}_j \ket{n} &= (-1)^{t} \bra{n}W^{(e)}_i W^{(e)}_j\ket{n}\nonumber\\
    &\xrightarrow{|i-j|\rightarrow \infty} (-1)^t \cdot c(i,j,\ket{n})\neq 0,
\end{align}
for some function $c(i,j,\ket{n})=\pm 1$ which accounts for the values of the $Z_i\tilde{Z}_iZ_{i+1}$ operators in eigenstate $\ket{n}$.
From the SymTFT perspective, it is the non-trivial braiding between $e$ and $m$ anyons that creates this time crystal signature. This is because $\overline{W}^{(m)} W^{(e)}_i \overline{W}^{(m)}W_i^{(e)} = \mathcal{B}(e,m)$ results in the period-doubling of the unequal-time correlator above.
The non-trivial temporal dynamics of the phase can be understood from the perspective of the non-local degree of freedom of the toric code phase (the ``logical qubit" whose logical operators correspond to $\overline{W}^{(m)}$ and $W_i^{(e)}$). The drive periodically flips this global degree of freedom. %

Similarly, the prepending $\overline{W}^{(e)}$ operator in $U_F^{0\pi\text{PM}}$ results in TTSB and long-range spatiotemporal order as observed by the \textit{dual} order parameter $\tilde{X}_k$ (charged under the dual symmetry, $\overline{W}^{(e)}$~\cite{JiWen2020categorical}):
\begin{align}
    \bra{n}W^{(m)}_i(t) W^{(m)}_j\ket{n} &\xrightarrow{|i-j|\rightarrow\infty} (-1)^t c'(i,j,\ket{n})\neq 0.
\end{align}
The link degrees of freedom are interpreted in terms of the boundary conditions of the 1D system: setting all of these equal to $+1$ corresponds to introducing periodic boundary conditions, while setting an odd number of them to be $-1$ is interpreted as introducing anti-periodic boundary conditions. Equivalently, $\overline{W}^{(e)} = \pm 1$ for periodic/anti-periodic boundary conditions. Hence, we interpret the dual order parameter as \textit{twisting} boundary conditions (between periodic and anti-periodic).
Due to the long-range spatiotemporal order being witnessed by the dual order parameter, we refer to the $0\pi$PM phase as a dual time crystal, and will discuss its features further below.

\subsection{Away from fixed-point drives}
\label{sec:AFFPD}

The SymTFT description has utility also away from the fixed-point drives described above. Due to MBL, we can find an extensive number of local integrals of motion (LIOMs), in terms of which we can characterise and also express the drive~\cite{Khemani_2016,MBL_spin_chains_2016,Huse_MBL_transition,MBL_Heisenberg_2008,khemani2019briefhistorytimecrystals}. Let us consider the more general setup of the Floquet unitary:
\begin{align}
    U_F &= e^{-iH_1}e^{-iH_0},\\
    H_0 &= \sum_j J_j Z_j\tilde{Z}_j Z_{j+1} + \text{perturbations},\\
    H_1 &= \sum_{j}g_j X_j + \text{perturbations},
\end{align}
where the ``perturbations" can include such terms as $Z_{j-1}Z_jZ_{j+1}Z_{j+2}$, $X_iX_{i+1}$, $X_i$, %
etc., but where, for simplicity, we restrict to those perturbations in $H_0$ ($H_1$) that commute with $Z_j\tilde{Z}_jZ_{j+1}$ ($X_j$).
We will assume that these additional interactions enter with couplings much smaller than the larger of the distribution widths of $J_i$ or $g_i$, and that they commute with the $\mathbb{Z}_2$ symmetry $P$ (and the dual symmetry).

Suppose $U_F$ is close to $U_F^{\pi \text{SG}}$. Then we take $g_j = \frac{\pi}{2}+ \tilde{g}_j$, where the mean of $\tilde{g}_j$ is $0$ and its distribution width $\delta \tilde{g} \ll \delta J$.
We have that:
\begin{align}
    U_F &\propto \left(\prod_j X_j\right) e^{-i \sum_j \tilde{g}_j X_j + \text{perturbations}}e^{-i H_0} \\
    &\approx P e^{-i\mathcal{H}(D_j)},
\end{align}
where
$\mathcal{H}(D_j)$ is a time-independent Hamiltonian diagonal in a basis of quasi-local operators $D_j$, the LIOMs or ``$\ell$-bits":
\begin{align}\label{eqn:MBL_Hamiltonian}
    \mathcal{H}(D_j) = \sum_j \tilde{J}_j D_j + \sum_{j\neq k} \tilde{J}_{jk}D_j D_k + \ldots
\end{align}
where the $\tilde{J}_{jk\ldots}$ are exponentially-decaying in the maximum separation between the indices. %
The $\ell$-bits themselves are related to the fixed-point LIOMs, which in this phase are $Z_j\tilde{Z}_jZ_{j+1}$ for all $j$, via a quasi-local unitary operator $\mathcal{U}$: $D_j = \mathcal{U}^\dagger Z_j\tilde{Z}_jZ_{j+1} \mathcal{U}$. $D_j$
is then localised to link $(j,j+1)$, potentially with exponentially decaying tails of support on nearby sites, as shown in Fig.~\ref{fig:classification_illustration}(b). This relationship between LIOMs at and away from the fixed-point generically holds for MBL systems~\cite{khemani2019briefhistorytimecrystals}.

Similarly,
if $U_F$ is close to $U_F^{0\pi \text{PM}}$, we can take $J_j = \frac{\pi}{2} + \tilde{J}_j$ for $\tilde{J}_j$ weakly disordered around $0$ and we arrive at $U_F \approx \overline{W}^{(e)} e^{-i\mathcal{H}(\tau_j^x)}$. Here, the $\tau_j^x$ are the $\ell$-bits obeying the relation $\tau_j^x = \mathcal{U}^\dagger X_j \mathcal{U}$, for some quasi-local unitary operator $\mathcal{U}$ (since $X_j$ are the fixed-point LIOMs). $\mathcal{H}(\tau_j^x)$ is given by an expression similar to Equation~\ref{eqn:MBL_Hamiltonian}, with the $\tau_j^x$ replacing the $D_j$.

We only apply disorder and perturbations to the qubits on the physical boundary of the $\mathbb{Z}_2$-TO, and so the remainder of the system remains unaffected by these features. Note also that the perturbations we add commute with the operators $\overline{W}^{(a)}$, for $a=e,m$, and so these $\overline{W}^{(a)}$ operators can be moved away from the boundary $B_\text{phys}$ -- they are a topological feature of the drive. Hence, away from the fixed points, the four distinct drives, $U_F^{\text{SG}}$, $U_F^{\pi\text{SG}}$, $U_F^{0\pi\text{PM}}$, $U_F^{\text{PM}}$, have the following generic forms:
\begin{align}
    U_F^{(\pi)^k \text{SG}} &= \left(\overline{W}^{(m)}\right)^k e^{-i\mathcal{H}(D_j)}\\
    U_F^{(0\pi)^k\text{PM}} &= \left(\overline{W}^{(e)}\right)^k e^{-i\mathcal{H}(\tau^x_j)}
\end{align}
for $k\in \lbrace 0, 1\rbrace$ and where the $\ell$-bits are either domain wall operators
$D_j$, or operators $\tau^x_j$, depending on the phase. The Hamiltonian $\mathcal{H}$ is quasi-local in one of these choices of $\ell$-bit. Note that for the cases in which we have $D_j$ as $\ell$-bits, we may also refer to operators $\tau^z_j$ and $\tau^x_j$, defined using the same quasi-local unitary relating $D_j$ to the fixed-point LIOMs: $\tau^z_j\equiv \mathcal{U}^\dagger Z_j \mathcal{U}$ and $\tau^x_j \equiv \mathcal{U}^\dagger X_j \mathcal{U}$.

\subsection{Breaking the (dual) Ising symmetry}\label{sec:breaking_dual_Ising_sym}

The $\pi$SG phase is stable against $\mathbb{Z}_2$-symmetry-breaking perturbations. In such cases, we can relate the eigenstates of the symmetric unitary, $U_F^0$, with those of the perturbed
unitary, $U_F^\text{pert}$, via a quasi-local unitary $\mathcal{U}$, just as we did above. Then there exists an emergent symmetry, $\tilde{P} = \mathcal{U}^\dagger P \mathcal{U} = \prod_j \tau^x_j$, that is preserved by $U_F^\text{pert}$~\cite{KeyserlingkKhemaniSondhi2016stability}. The drive can be written in the form: $U_F^\text{pert} = \tilde{P} e^{-i \mathcal{H}(D_j)}$. In terms of the SymTFT description via the toric code, we may dress the boundary plaquette and vertex operators as $B_p \mapsto \mathcal{U}^\dagger B_p \mathcal{U}$ and $A_v \mapsto \mathcal{U}^\dagger A_v \mathcal{U}$ which maintains their commutation relations and, by the locality of $\mathcal{U}$, their locality. Then, once again, the emergent symmetry $\prod_j \tau^x_j \sim \overline{W}^{(m)}$ for some operator $\overline{W}^{(m)}$ in the bulk of the $\mathbb{Z}_2$-TO and $\sim$ refers to equivalence up to a product of stabilisers of the $\mathbb{Z}_2$-TO. We can find the same $\pi$-spectral pairing and hence time-translation symmetry breaking in the perturbed model, leading to its ``absolute stability".

Meanwhile, the $0\pi$PM phase is stable against \textit{dual} symmetry-breaking (but symmetric under $P$) perturbations, where the dual $\tilde{\mathbb{Z}}_2$-symmetry is given by~\cite{JiWen2020categorical} $\overline{W}^{(e)} = \prod_j \tilde{Z}_j$, the product of all link Pauli-$Z$ operators. %
The argument follows the case above: we can find a quasi-local $\mathcal{U}$ that relates the unperturbed eigenstates with the perturbed ones and hence there exists an emergent dual symmetry, $\prod_j \tilde{\tau}^z_j$ for $\tilde{\tau}^z_j = \mathcal{U}^\dagger \tilde{Z}_j \mathcal{U}$. We show numerical evidence for this in Ref.~\cite{motamarri2024symtftequilibriumtimecrystals}. 

While one might worry that the unitary $\mathcal{U}$ for the dual-symmetry-breaking perturbation case of the $0\pi$PM may not be local, the feature of absolute stability in both this and the $\pi$SG phases is guaranteed by Kramers-Wannier duality, which in the SymTFT is effected by a domain wall that swaps $e\leftrightarrow m$ anyons. Wrapping such a domain wall around the cylinder exchanges the Lagrangian subgroups on $B_\text{phys}$: $\mathcal{M}_e \leftrightarrow \mathcal{M}_m$. It also exchanges $\overline{W}^{(e)} \leftrightarrow \overline{W}^{(m)}$. It therefore maps the $0\pi$PM phase to the $\pi$SG phase. The $\tilde{\mathbb{Z}}_2$-symmetry breaking perturbation is mapped to a $\mathbb{Z}_2$-symmetry breaking perturbation, under which we know the $\pi$SG is absolutely stable~\cite{KeyserlingkKhemaniSondhi2016stability}. The $\pi$-spectral pairing of the $0\pi$PM phase with closed boundary conditions results from the fact that eigenstates are labelled by the values of the $\ell$-bits $\tau^x_j$ along with the value of $\overline{W}^{(e)}$. The latter is $+1$ ($-1$) if the boundary conditions are (anti-)periodic. Hence the $\pi$-pairing is between pairs of eigenstates with periodic/anti-periodic boundary conditions.
While $\tilde{\mathbb{Z}}_2$-symmetry-breaking perturbations couple these boundary conditions, since the spectral pattern is $\pi$-pairing (and not a degeneracy, as it would be for the PM), absolute stability emerges by the same mechanism as for the $\pi$SG.

\section{\texorpdfstring{$\mathbf{G}$-topological order and the 1D operator algebra for SymTFT}{G-topological order and the 1D operator algebra for SymTFT}}\label{sec:G_TO}

SymTFT, in both zero-temperature and non-equilibrium settings, can be used to provide a description of a 1D $G$-symmetric system in terms of a 2D topologically ordered system~\cite{JiWen2020categorical,Kong_holographic_entanglement_2020,Apruzzi21,Kaidi22,Bhardwaj_PtII,ChatterjeeWen23,TH23,Huang_Chen_criticality,WenPotter_SPTgapless,bhardwaj2023club,motamarri2024symtftequilibriumtimecrystals}. The 1D system can be a quantum field theory or a spin system---we focus on the latter in this paper---and it resides on the boundary of the 2D TO. In this section, we describe the quantum double models for $G$-TO and we describe the algebra of boundary operators that will be placed in correspondence with the 1D $G$-symmetric system of interest.

\subsection{Quantum double models}\label{sec:quantum_doubles}

Consider a finite, Abelian
group $G$. We have an associated topologically ordered phase known as the quantum double model $\mathcal{D}(G)$, or $G$-gauge theory~\cite{kitaev2003fault,Cui_2020}. This theory will be defined on a directed lattice, $\mathcal{L}$, 
which is a tesselation of an orientable manifold, $\Sigma$, with local Hilbert spaces assigned to the edges of $\mathcal{L}$: $\mathcal{H}_\text{local} = \text{span}(  \lbrace \ket{g} \, | \, g\in G\rbrace  )$. Each of these local Hilbert spaces is a copy of the group algebra $\mathbb{C}[G]$ (i.e.,  the space of formal linear combinations of group elements with complex coefficients), and as such, following Kitaev~\cite{kitaev2003fault}, we define ``vertex'' and ``plaquette'' projection operators, $A_v$ and $B_p$, which act on the sites around $v$ and $p$, respectively, in such a way that $[A_v, B_p] = 0$, for all $v$ and $p$ (we define these in Appendix~\ref{app:QD_models}). The (fixed-point) Hamiltonian of the system is
\begin{align}\label{eq:H_qd}
    H_0 = -\sum_v A_v - \sum_p B_p.
\end{align}

The ground space is defined as the space of states obeying $A_v\ket{\psi} = B_p \ket{\psi} = \ket{\psi}$, for all $v$ and $p$.
The set of topologically distinct excitations in this Hamiltonian corresponds to an anyon model $\mathcal{A}_G$. Since we are dealing with finite, Abelian $G$, this set is  (see Appendix~\ref{app:QD_models}): $\mathcal{A}_G = \lbrace a = (g,\alpha) \, | \, g\in G, \, \alpha \in \text{Rep} \, G\rbrace$, where $\text{Rep}\, G$ is the set of irreducible representations (irreps), $\alpha \, : \, G \rightarrow U(1)$, of $G$. %
The fusion of two anyons, $a=(g,\alpha), b=(g',\alpha')\in\mathcal{A}_G$, is given by $(gg', \alpha \alpha')$, with the representation defined as $\alpha \alpha ' (g) = \alpha (g) \alpha' (g)$. The anti-anyon of $a$ is $\bar{a} = (g^{-1}, \alpha^{-1})\in\mathcal{A}_G$. The exchange phase of two $a = (g,\alpha)$ anyons is $e^{i\theta_a} = \alpha(g)$. The braiding phase of anyons $a = (g,\alpha)$ and $b = (g',\alpha')$, accrued by encircling $a$ clockwise around $b$, is $\mathcal{B}(a,b) = \alpha (g') \alpha' (g)$.
If the manifold, $\Sigma$, is a torus, the space of ground states of $H_0$ is degenerate, with a dimension given by the number of anyons in $\mathcal{A}_G$~\cite{Cui_2020}. We will be considering the system on a cylinder, where the ground state degeneracy will be determined by the top and bottom boundary conditions (see below).

Anyons
appear at the end points of \textit{string operators}, or Wilson and/or `t Hooft lines. (In models away from fixed points, these operators become ``dressed" and their support is predominantly on finite-width strips~\cite{Hastings_Wen_2005,Bravyi_2010_TQO_Stab,bauer_area_2013,Huse_localization_2013,Bridgeman_ribbon_ops_2016,Wahl_LIOMs_2020,Wahl_2024,MBL_Venn_2024}.)
We define one type of string operator on the edges of the lattice $\mathcal{L}$: given a path $\gamma_{ij}$ of edges, terminating at vertices $i$ and $j$, we have string operator $W^{(1,\alpha)}_{\gamma_{ij}}$,
directed from $i$ to $j$,
that creates an anyon $\bar{a} = (1,\alpha^{-1})$ on $i$ and an anyon $a = (1,\alpha)$ on $j$. This operator is defined to act with representation $\alpha$ on all edges along $\gamma$ that have a direction aligned with $\gamma$ (from $i$ to $j$) and with $\bar{\alpha}$ along edges that are anti-aligned. We have that $W^{(1,\alpha)}_{\gamma_{ij}}$ commutes with all vertex operators $A_v$ except for $A_i$ and $A_j$. %
We next consider paths on the dual lattice $\tilde{\mathcal{L}}$.
Consider a path $\tilde{\gamma}$ of edges in $\tilde{\mathcal{L}}$, terminating at $\mathcal{L}$-plaquettes $k$ and $l$. We define another type of string operator
$W_{\tilde{\gamma}_{kl}}^{(g,1)}$, directed from
$k$ to $l$, which acts with group element $g$ on all edges directed to the right (when facing along the direction of $\tilde{\gamma}$) and $g^{-1}$ on all edges directed to the left. This operator creates anyon $(g^{-1}, 1)$ on $k$ and anyon $(g,1)$ on $l$. $W_{kl}^{(g,1)}$ commutes with all plaquette operators $B_p$ except for $B_k$ and $B_l$.

As we will be dealing with finite Abelian $G$, we will make use of the notation for the (generalised) toric codes~\cite{kitaev2003fault}.
From the fundamental theorem of finite Abelian groups, we may decompose $G \simeq \mathbb{Z}_{k_1}\times \mathbb{Z}_{k_2}\times \ldots \times \mathbb{Z}_{k_p}$, for some number of factors, $p$. 
We will label the generator of each cyclic group $m_i$, such that $m_i^{k_i} = 1$. 
The irreducible representations of $G$ are products of the $k_i^{\text{th}}$-roots of unity, one of which we will label $e_i$ (chosen so that we may generate the other roots of unity via products of $e_i$). 
Hence, a general anyon in the model can be written as $(m_1^{i_1}m_2^{i_2}\ldots m_p^{i_p}, e_1^{j_1}e_2^{j_2}\ldots e_p^{j_p}) \equiv m_1^{i_1}m_2^{i_2}\ldots m_p^{i_p} e_1^{j_1}e_2^{j_2}\ldots e_p^{j_p}$, where we remove the brackets notation in this context as the meaning of the anyon is unambiguous.

\subsection{\texorpdfstring{Adding boundaries to the $\mathbf{G}$-TO}{Adding boundaries to the G-TO}}

As above, we will consider SymTFT on a cylinder, identifying one of the 1D boundaries as the physical boundary, $B_\text{phys}$,
which is the symmetric spin system of interest, and the rest of the system as belonging to the $G$-TO bulk. The bulk will be described by a Hamiltonian with an energy gap much larger than any of the energy scales of the dynamics at $B_\text{phys}$ (hence when describing the eigenvalues of the drive, we will always implicitly project into the low-energy subspace of the bulk Hamiltonian). We have one boundary, the reference boundary $B_\text{ref}$, which we will also require to possess a large energy gap, and hence can be thought of as being part of the bulk.

The gapped boundaries of an Abelian $G$-TO are in one-to-one correspondence with the \textit{Lagrangian subgroups} of its anyon model~\cite{Kapustin_2011,Levin_2013,Barkeshli13a,Cong_2017}.
A Lagrangian subgroup is a subset $\mathcal{M}\subset \mathcal{A}_G$ satisfying the following:
\begin{enumerate}
    \item $\mathcal{M}$ is closed under fusion,
    \item All $a\in \mathcal{M}$ are bosons: $e^{i\theta_a} = 1$,
    \item All $a, a' \in \mathcal{M}$ braid trivially with one another: $\mathcal{B}(a,a') = 1$, %
    \item The group is maximal: if $b\in\mathcal{A}_G$ braids trivially with all $a\in\mathcal{M}$, then $b\in\mathcal{M}$.
\end{enumerate}
One can show that any Lagrangian subgroup $\mathcal{M}\subset \mathcal{A}_G$ obeys $|\mathcal{M}| = |G|$. We can immediately find two examples of Lagrangian subgroups: $\mathcal{M}_G = \lbrace (g,1) \, | \, g\in G\rbrace$ and $\mathcal{M}_{\text{Rep}\, G} = \lbrace (1,\alpha) \, |\, \alpha \in \text{Rep}\, G\rbrace$.

The Hamiltonian of the $G$-TO with a boundary corresponding to Lagrangian subgroup $\mathcal{M}$ contains a term
$H_\text{bdry}  = \sum_{\langle i j\rangle , \, a\in \mathcal{M}} J_{ij} W_{ij}^{(a)}$, where we understand the string
operator to be running between adjacent boundary sites $i,j$
either in $\mathcal{L}$ or $\tilde{\mathcal{L}}$, or a combination thereof (if $\mathcal{M}$ contains $G$ and $\text{Rep}\, G$ components). Such a boundary is said to \textit{condense} anyons from $\mathcal{M}$, since a string $W^{(a)}$ corresponding to anyon $a\in \mathcal{M}$ terminating on the boundary creates no excitations at that boundary.

The couplings
$J_{ij}$ in $H_\text{bdry}$ will in general be taken to be disordered, and we will be considering the \textit{eigenstate} order of the static and dynamic phases exhibited on the boundary. While anyon condensation is usually defined in relation to the ground state of the system, we can define the boundary to condense anyons created by $W^{(a)}$ operators by noting that eigenstates display long-range correlations in these operators: $\lim_{|i-j|\rightarrow \infty}\lim_{L\rightarrow \infty}\bra{n}W^{(a)}_{ij}\ket{n} \neq 0$. We will discuss this further below. When dealing with Floquet phases in the following section, we will use the properties of many-body localised systems to explicitly replace the operators $W^{(a)}_{ij}$ with quasi-local integrals of motion that are related to the $W^{(a)}_{ij}$ by a quasi-local unitary operator $\mathcal{U}$. We discuss this further in Section~\ref{sec:MBL_bdrys}.

\subsection{The Boundary Algebra}\label{sec:Boundary_algebra}

To characterise $B_\text{phys}$, we identify the \textit{boundary algebra} (BA) as the algebra of string operators corresponding to the physical degrees of freedom~\cite{ChatterjeeWen23,motamarri2024symtftequilibriumtimecrystals}
--- these are identified with the operators acting on the physical (i.e., the 1D) system (even though they may be deformed to pass through the bulk). We take local operators $W^{(g,\alpha)}_{ij}$ (i.e., with $i,j$ near each other on $B_\text{phys}$)
as belonging to the BA. The Hamiltonian and (later) the Floquet unitary for $B_\text{phys}$ will be constructed from these BA operators.

The BA relations are obtained from the anyon braiding statistics. If we define intervals on $B_\text{phys}$ $[i,j]$ and $[k,l]$, such that one endpoint of $[i,j]$ lies inside the interval $[k,l]$ and vice versa, then we have:
\begin{align}
    W^{(a)}_{ij} W^{(b)}_{kl} = \mathcal{B}(a,b)W^{(b)}_{kl}W^{(a)}_{ij}.
\end{align}
These string operators 
commute with all the bulk vertex and plaquette operators of the model. Examples of these operators for the case of the $\mathbb{Z}_2$-toric code model are shown in Fig.~\ref{fig:classification_illustration}(a).

We will now consider the role of the group $G$ in the boundary algebra. We illustrate our discussion below in Fig.~\ref{fig:Static_phases}(a). We will take $G$ to be a symmetry group of the 1D system, with onsite, unitary action $V_k(g)$, belonging to the BA. We will fix this action to be given by the string operator $W_{k\, k+1}^{(g,1)}$ (where we here interpret $k$ and $k+1$ to refer to adjacent vertices in the dual lattice).
This creates anyons $(g^{-1},1)$ and $(g,1)$ either side of site $k$ (see operator $X_i$ in Fig.~\ref{fig:classification_illustration}(a)). The global symmetry operator is then given by $\overline{W}^{(g,1)} \equiv \prod_k W^{(g,1)}_{k\, k+1}$.

\begin{figure*}
\begin{tikzpicture}
    \node[inner sep=0pt] at (-4,0){\includegraphics[width=0.75\linewidth]{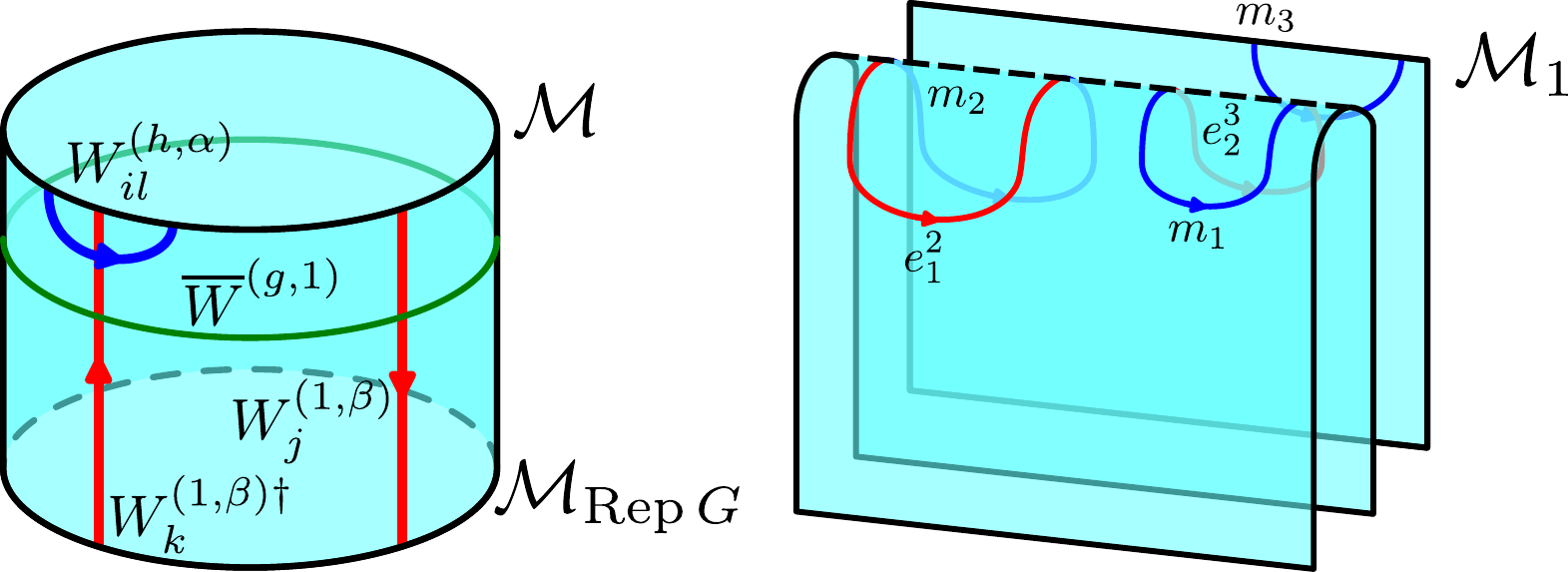}};
    \node[inner sep=0pt] at (-8.6,-3) {(a)};
    \node[inner sep=0pt] at (-1.55,-3)    {(b)};
\end{tikzpicture}
    \caption{Schematic 
    illustration of $G$-symmetric 1D static phase constructions in SymTFT, and of the boundary algebra (BA). (a) We illustrate general members of the BA. Local operator $W_{il}^{(h,\alpha)}$, shown in blue, corresponds to the process of anyons $(h,\alpha)$ and $(h^{-1},\alpha^{-1})$ being created close to the boundary and condensed at that boundary.
    Local order parameters $W_j^{(1,\beta)}$, correspond to strings running between top and bottom boundaries, and global symmetry/dual symmetry transformations $\overline{W}^{(g,\gamma)}$ (only a symmetry transformation shown in green here) correspond to strings wrapping the cylinder. The example shown is for a partially-SSB phase, with symmetry broken to subgroup $H<G$. The Lagrangian subgroup $\mathcal{M}$ of the physical boundary (top boundary) 
    includes anyons $(h,\alpha)$ for $h\in H$ and some $\alpha\in\text{Rep}\, G$, such that $\alpha(H) = 1$. 
    Local order parameters $W_k^{(1,\beta)\dagger}$ and $W_j^{(1,\beta)}$ (with $\beta$ satisfying $\beta(H) = 1$) signal long-range order on $B_\text{phys}$. These do not commute with symmetry operator $\overline{W}^{(g,1)}$ for $g\notin H$ shown in green. (b) An example of SPT order for $G = \mathbb{Z}_4\times \mathbb{Z}_6\times \mathbb{Z}_3$, with Lagrangian subgroup $\mathcal{M}_1 = \langle m_1 e_2^3, e_1^2m_2, m_3\rangle$. Only a short section of the $G$-TO and $B_\text{phys}$ is shown. The three layers represent $\mathbb{Z}_4$, $\mathbb{Z}_6$, and $\mathbb{Z}_3$ toric code layers, respectively. String operators corresponding to the three anyons generating $\mathcal{M}_1$ are shown. The $\mathbb{Z}_4$ and $\mathbb{Z}_6$ layers are folded together along the boundary (dashed line represents a domain wall) to indicate the multi-layer anyons that appear in $\mathcal{M}_1$.}\label{fig:Static_phases}
\end{figure*}

We also include the
local order parameters in the BA, 
which will signal phase transitions on $B_\text{phys}$. These order parameters must transform in some representation of $G$ under the action of the global symmetry: $(\prod_kV_k(g))^\dagger \mathcal{O}_i (\prod_kV_k(g)) = \alpha(g)\mathcal{O}_i$, for some irrep $\alpha$ (considering, as always, $G$ to be Abelian).
As such, a natural candidate for such an order parameter will be $W^{(1,\alpha)}_{k\rightarrow l}$, with $k$ a site on $B_\text{phys}$ and $l$ some site far separated from $B_\text{phys}$. 
To prevent this string operator from creating excitations in the bulk, we let $l$ be a site on $B_\text{ref}$, and we will choose the Lagrangian subgroup of $B_\text{ref}$ to be $\mathcal{M}_{\text{Rep}\, G}$ so that $W^{(1,\alpha)}_{k\rightarrow l}$ creates no anyons at $l$. Since the endpoint of this string operator can be deformed away from $l$ to elsewhere on $B_\text{ref}$ using suitable boundary vertex or plaquette operators, we drop this subscript and let the local order parameter at site $k$ be $W^{(1,\alpha)}_k$. [Locality here is understood in the sense of the 1D system---$W^{(1,\alpha)}_k$ corresponds to an operator at position $k$ of the 1D system---while of course $W^{(1,\alpha)}_k$ is non-local in terms of the 2D TO.] To see that this transforms appropriately, we use the braiding statistics between the associated anyons once again:
\begin{align}
    \overline{W}^{(g,1)\dagger} W^{(1,\alpha)}_k \overline{W}^{(g,1)} &= \mathcal{B}((1,\alpha),(g,1)) W^{(1,\alpha)}_k \nonumber\\
    &= \alpha(g)W^{(1,\alpha)}_k,
\end{align}
which is the appropriate transformation we defined above. %
We will use these local order parameters in what follows. Examples are schematically illustrated in Fig.~\ref{fig:Static_phases}(a).

Similarly to the above, we also define local \textit{dual} order parameters $W_k^{(g,1)}$ to belong to the BA. These run between link $k$ on $B_\text{phys}$ and some position on $B_\text{ref}$. For these to not introduce an excitation on $B_\text{ref}$, we introduce a short domain wall that interchanges $G$ and Rep$\, G$ anyons. We let the string operator $W_{k\rightarrow B_\text{ref}}^{(g,1)}$ run through this domain wall before terminating at $B_\text{ref}$, where it now produces no excitation. The $W_k^{(g,1)}$ are charged under the dual symmetries $\overline{W}^{(1,\alpha)}$. As we shall explain, $\overline{W}^{(1,\alpha)}$ detects $G$-twisted boundary conditions; the $W_k^{(g,1)}$ are thus boundary condition twisting operators. 
To summarise, the boundary algebra is generated by the following string operators from the bulk $G$-TO:
\begin{align}\label{tab:BAsummary}
\vcenter{\hbox{\begin{tabular}{@{}c@{}}
Local Symmetric \\
Operators
\end{tabular}}} &\; \leftrightarrow \; W^{(g,\alpha)}_{ij} \equiv \text{Anyon strings}\nonumber\\
\vcenter{\hbox{\begin{tabular}{@{}c@{}}
Local Symmetry \\
Action
\end{tabular}}} &\; \leftrightarrow \; W^{(g,1)}_{ij} \equiv \text{$m$-strings}\nonumber\\
\vcenter{\hbox{\begin{tabular}{@{}c@{}}
Local Dual \\
Symmetry Action
\end{tabular}}} &\; \leftrightarrow \; W^{(1,\alpha)}_{ij} \equiv \text{$e$-strings}\\
\vcenter{\hbox{\begin{tabular}{@{}c@{}}
Local Order \\
Parameters
\end{tabular}}} & \; \leftrightarrow \; W_k^{(1,\alpha)}\equiv \text{$e$-anyons on $B_\text{phys}$}\nonumber\\
\vcenter{\hbox{\begin{tabular}{@{}c@{}}
Local Dual \\
Order Parameters
\end{tabular}}} & \;\leftrightarrow\; W_k^{(g,1)}\equiv \text{$m$-anyons on $B_\text{phys}$}.\nonumber
\end{align}
The ``anyon strings" refer to those that are confined to some local region on $B_\text{phys}$, while ``anyons on $B_\text{phys}$" refer to single unpaired anyons located on $B_\text{phys}$, with the other string end-point located on $B_\text{ref}$. $e$ and $m$ anyons refer to general products of $e$-type/$m$-type anyons in the various $\mathbb{Z}_{k_i}$ toric code layers (see Section~\ref{sec:quantum_doubles}).

\section{SymTFT for static phases}\label{sec:Static_phase_SymTFT}

We will now use the above construction to present a SymTFT classification of static phases in spin chains with finite Abelian symmetry with on-site, unitary action. Our coverage recapitulates certain classification schemes and features present in the literature~\cite{lichtman2020bulk,Kong_holographic_entanglement_2020,ChatterjeeWen23,TH23}, while introducing novel aspects. We cover spontaneously symmetry-broken (SSB), partially-SSB, and symmetry-protected topological (SPT) phases. We will describe how one can include twisted boundary conditions in these 1D models and we will also discuss a simple interpretation of an SPT classification scheme, which, to our knowledge, is a new approach to SPT classifications.
We provide illustrative examples and summaries of the scheme discussed.

\subsection{Static SSB and partially-SSB phases}

We now introduce the static classification scheme for SSB and partially-SSB phases in terms of the Lagrangian subgroups of the $G$-TO. 
The physical system is in a (partially-)SSB phase, in which the symmetry group is broken to a subgroup $H < G$, if local order parameters transforming trivially under $H$ but non-trivially under $G$, and only those operators, display long-range correlations~\cite{Chaikin_Lubensky_1995,Beekman_2019,Levin_2020}. 
Note that, in general, we will consider correlation functions in any eigenstate of the drive $U_F$, to keep the discussion general enough to cover dynamic phases later.
In the more familiar case of $H=\lbrace 1 \rbrace$, all local order parameters transforming non-trivially under $G$ signal long-range order via these correlation functions. In the simple case of the transverse-field Ising model in the SSB phase we have that local operators $Z_j$ detect long-range order, since they transform non-trivially under the global symmetry $\prod_i X_i$. We have that
$C(j,k) \equiv \langle Z_j Z_k\rangle - \langle Z_j\rangle \langle Z_k\rangle \neq 0$ in the limit of $|j-k|\rightarrow \infty$.

Now we will show how this physics is captured by SymTFT (see Fig.~\ref{fig:Static_phases}(a) for an illustration). Suppose that the physical boundary of the $G$-TO is gapped and described by a Lagrangian subgroup $\mathcal{M} = \lbrace (h,\beta) \rbrace$. The $G$-components of $\mathcal{M}$ form a subgroup $H = \lbrace h\rbrace$ (since $\mathcal{M}$ is closed under fusion). We now show that this $H$ is precisely the 1D system's unbroken symmetry group.

The string operators $W_k^{(1,\beta)}$ transforming trivially under $H$ are precisely for those $(1,\beta)$ that are in $\mathcal{M}$, since $\mathcal{M}$ contains all bosons that braid trivially with all other anyons in $\mathcal{M}$. And since these anyons $(1,\beta)$ are also in $\mathcal{M}_\text{ref}$, we can terminate $W_k^{(1,\beta)}$ on $B_\text{ref}$ without creating excitations. Hence, considering a non-trivial $\beta\in \text{Rep}\, G$, these local order parameters display long-range order in the 1D model's drive eigenstates:
\begin{align}
    &\lim_{|k-j|\rightarrow \infty}\lim_{L\rightarrow \infty} \mathbb{E}(|\bra{n}W_k^{(1,\beta)\dagger}W_j^{(1,\beta)}\ket{n}| - \nonumber \\
    &\qquad \qquad\qquad |\bra{n}W_k^{(1,\beta)\dagger}\ket{n} \bra{n}W_j^{(1,\beta)}\ket{n}|)\nonumber\\
    &= \lim_{|k-j|\rightarrow \infty}\lim_{L\rightarrow \infty}\mathbb{E}(|\bra{n}W_k^{(1,\beta)\dagger}W_j^{(1,\beta)}\ket{n}|)\neq 0,\label{eq:Z2sig}
\end{align}
for $\ket{n}$ any eigenstate of the drive that is also an eigenstate of all symmetry operators.
In the above, we denote by $\mathbb{E}$ the expectation value over disorder realisations (pulling drive parameters from their respective distributions), and we take absolute values in order to account for the glassiness in the eigenstates of the drive (resulting from the arbitrary eigenvalues
of the LIOMs). We also use the fact that $W_k^{(1,\beta)\dagger}W_j^{(1,\beta)} = W_{jk}^{(1,\beta)} \equiv \prod_{i=j}^k W_{i\, i+1}^{(1,\beta)}$
fixes the eigenstate (up to a phase), while $W_k^{(1,\beta)}$ acts non-trivially on $\ket{n}$, since it does not commute with the symmetry group action under the assumption that $\beta$ is non-trivial.

For an order parameter transforming non-trivially under $H$, however, no such long-range correlator is formed. To see this, consider $(1,\alpha)\notin \mathcal{M}$. Then: 
\begin{align}
\bra{n}W_k^{(1,\alpha)\dagger}W_j^{(1,\alpha)}\ket{n} &= t_{j}^{-1}\bra{n}W_k^{(1,\alpha)\dagger}W_j^{(1,\alpha)} W^{(h,\beta)}_{j\, j+1}\ket{n}\nonumber\\
&= t_j^{-1} \alpha(h)\bra{n}W_k^{(1,\alpha)\dagger} W^{(h,\beta)}_{j\, j+1} W_j^{(1,\alpha)}\ket{n}\nonumber\\
&= \alpha(h)\bra{n}W_k^{(1,\alpha)\dagger} W_j^{(1,\alpha)}\ket{n} = 0.
\end{align}
In the above, $W^{(h,\beta)}_{j\, j+1}$ is an element of the BA with some $(h,\beta)\in \mathcal{M}$---hence, $\ket{n}$ is an eigenstate of the operator with eigenvalue $t_j$, related to the eigenvalues of LIOMs around $j$.
We suppose this string operator is localised to site $j$ but far-separated from $k$. Hence, commuting it past $W_j^{(1,\alpha)}$ results in the phase $\alpha(h)\neq 1$, while commuting it past $W_k^{(1,\alpha)\dagger}$ results in no phase.

We see that the above prescription captures the long-range order of the symmetry-broken phases on $B_\text{phys}$. A choice of Lagrangian subgroup on $B_\text{phys}$ uniquely specifies the (partially-)SSB phase on the boundary (up to a further SPT-grading we describe in the next section). This is because the choice of $\mathcal{M}$ fully specifies those local order parameters that display long-range order (they correspond %
to the anyons $(1,\beta)$ whose Rep$\, G$ component features in $\mathcal{M}$), and those that are only short-range correlated. The transformation of these operators under symmetry group $G$ is specified by the braiding rules of the associated anyons, $(g,1)$ and $(1,\beta)$. There are as many (partially-)SSB phases (up to an SPT-grading) as there are subgroups of $G$, since we can find an associated Lagrangian subgroup for each $H$ (see Appendix~\ref{app:SSB_phases_subgroup_H}).

\subsection{Static SPT phase order parameters}

With
respect to any unbroken symmetry $H$, the system may
have SPT order~\cite{Chen2010,Chen_classification_2011,schuch_classification_MPS_2011,Pollmann_Turner_2012_SPT_1D,chen2012SPTphases,Chen_2013_SPT_Group_cohomology,Verresen_2017_SPT}. In such a phase, we expect to be able to find ``string order parameters" displaying long-range correlation~\cite{String_order_param_spin_chains_1989,Perez_Garcia_2008,Haegeman_SPT_String_order_2012,Pollmann_Turner_2012_SPT_1D,Else_2013_hidden_sym_breaking,Duivenvoorden_2013}.
Such an operator is obtained by combining the onsite symmetry with the local order parameters: 
\begin{align}
    W^{(h,\alpha)}_{ij} = W^{(h,1)}_{ij} W^{(1,\alpha)}_{ij} = W^{(h,1)}_{ij} W^{(1,\alpha)}_i W^{(1,\alpha)\dagger}_j
\end{align}
where in the second equality we deform the string $W^{(1,\alpha)}_{ij}$ to two strings running between the physical and reference boundaries: each terminates on $B_\text{ref}$ without creating excitations. In order for this string order parameter to exhibit long-range correlations, we require that $(h,\alpha) \in \mathcal{M}$, the Lagrangian subgroup of $B_\text{phys}$, and also that $(1,\alpha)$ is charged under the symmetry: $\alpha (h') \neq 1$ for some $h'\in H$. This means $(1,\alpha)\notin \mathcal{M}$, since otherwise we could find long-range correlations detected by $W^{(1,\alpha)}_{ij}$, which would indicate that $H$ is spontaneously broken. Hence, $(1,\alpha)$ must braid non-trivially with some $(h',\alpha')\in \mathcal{M}$. Thus $\lim_{|i-j|\rightarrow \infty}\lim_{L\rightarrow\infty}\bra{n}W^{(h,\alpha)}_{ij} \ket{n} \neq 0$ while $\bra{n} W^{(1,\alpha)}_{ij} \ket{n} = 0$ for eigenstates $\ket{n}$.
Hence, we have long-range correlation signalled by string order parameters.
This signals that the phase has non-trivial SPT order---it cannot be connected to the trivial phase by a short-depth, symmetric circuit ~\cite{String_order_param_spin_chains_1989,Perez_Garcia_2008,Haegeman_SPT_String_order_2012,Pollmann_Turner_2012_SPT_1D,Else_2013_hidden_sym_breaking,Duivenvoorden_2013}. 

For an example, we can point to the Haldane phase as exemplified by a cluster state model, which has $\mathbb{Z}_2\times \mathbb{Z}_2$ SPT order~\cite{SPT_MBQC_2012,Else_2013_hidden_sym_breaking}:
\begin{align}
    H_\text{cluster} = - \sum_j Z_{j-1} X_j Z_{j+1}.
\end{align}
This can be understood, from SymTFT, as the boundary of two copies of the toric code~\cite{lichtman2020bulk}. The symmetries of the cluster Hamiltonian are generated by $P_1 = \prod_{j} X_{2j}$ and $P_2 = \prod_{j} X_{2j-1}$ (the product of $X$ operators on even or odd sites). Note that the phase involves no symmetry breaking. We find long-range correlations via the string order parameters: $\langle Z_{2j} X_{2j+1}X_{2j+3}\ldots X_{2k-1} Z_{2k}\rangle \neq 0$ and $\langle Z_{2j-1} X_{2j}X_{2j+2}\ldots X_{2k} Z_{2k+1}\rangle \neq 0$. 

Let us see what gapped boundary of the $\mathbb{Z}_2\times \mathbb{Z}_2$ bulk this phase is associated with. We will consider the first of the above string order parameters to begin with. The operators $Z_{2j}$ are charged under the symmetry $P_1$ and hence are associated with representation $\alpha \equiv (-1,1)$ of $\mathbb{Z}_2\times \mathbb{Z}_2$ (see Appendix~\ref{app:QD_models})---they become $e$ string operators
in one of the toric code copies, running between physical and reference boundaries. Suppose we label the $e$ anyon in this copy $e_1$. Meanwhile, the string of $X_{2j+1}$ operators is given by the symmetry operator $P_2$ restricted to a finite interval, which is a string operator associated with an $m$ anyon in one of the copies. Since $e_1$ braids trivially with this anyon (they are defined on non-overlapping qubits), $P_2$ must be associated with $m_2$ string operators. $e_1$ and $m_2$ are not themselves part of the Lagrangian subgroup, since their associated string operators do not commute with the Hamiltonian, but the anyon $e_1m_2$ is. Similarly, from observing the second of the two string order parameters, $m_1e_2$ is also part of the Lagrangian subgroup. Thus we find a gapped boundary of the two toric code copies with associated Lagrangian subgroup $\lbrace 1, e_1 m_2, m_1 e_2, f_1 f_2\rbrace$ (where $f = e\times m$ is a topological fermion)~\cite{lichtman2020bulk}.

\subsection{Classification of phases with an unbroken symmetry subgroup}\label{sec:Simple_SPT_Class}

Above, we have provided a means of describing
SSB and SPT phases through SymTFT. However, we have not yet enumerated the possible SPT phases appearing in 1D. More broadly, we can describe several distinct phases with the same unbroken symmetry subgroup $H\leq G$. SymTFT also provides a simple way of doing this, via Lagrangian subgroups and string order parameters. Our approach is similar to that of Ref.~\onlinecite{Verresen_PRX_2021}, however SymTFT provides a very natural interpretation and counting method in terms of Lagrangian subgroups. To our knowledge, this feature has not been pointed out elsewhere in the literature.

For unbroken symmetry group $H$, and a (possibly trivial) subgroup $M\leq H$, suppose that the Lagrangian subgroup $\mathcal{M}$ contains $(m,1)$ for all $m\in M$. We also assume that $M$ is the largest such subgroup.
In order for $\mathcal{M}$ to be maximal, we must have rk$(H/M)$ (where rk$(G)$ is the number of independent generators of $G$)
independent anyons of the form $(h,\alpha)$ with $h\notin M$ and $\alpha (m)=1$ for all $m\in M$ (so that $(h,\alpha)$ braids trivially with $(m,1)$), but with $\alpha$ not a trivial representation of $H$ (otherwise $(1,\alpha)$ would be a boson that braids trivially with all members of $\mathcal{M}$; it and its inverse would therefore be a member of $\mathcal{M}$ and hence $(h,1) = (h,\alpha)(1,\alpha^{-1}) \in \mathcal{M}$ and so $h\in M$).
Hence, we can label such anyons of the form $(h,\alpha)$ by a member of $H/M$ and a non-trivial irrep of $H/M$. 
This results in rk$(H/M)$ independent, non-trivial string order parameters which evidence an SPT phase, as we will see below. 

For each such $(h,\alpha)\in \mathcal{M}$, we have $\alpha (h') \neq 1$ for some $h'\in H$, since $\alpha$ is not a trivial representation of $H$. Hence, we can form an independent string order parameter of the form $W_{ij}^{(h,\alpha)}$ whose endpoints are charged under the symmetry group $H$. This string will display long-range correlation, owing to the fact that $(h,\alpha)\in \mathcal{M}$. %
Since $(h,1),(1,\alpha)\notin \mathcal{M}$,
as we have stated, the string order parameter $W_{ij}^{(h,\alpha)}$ does not reduce to a regular (i.e., local) order parameter, so we have genuine SPT order.

Note that, because the anyons forming the string order parameters are labelled by members and irreps of $H/M$, the SPT order belongs to a class in $H^2(H/M , U(1))$---we identify $H/M$ as the ``protecting" symmetry group, while the SPT phase is trivial with respect to the $M$ symmetry. This permits us to break the $M$ symmetry while retaining SPT order, by replacing $\{ (m,1)\}$ in $\mathcal{M}$ with $\{(1,\beta)\}$, where all anyons $(1,\beta)$ transform
non-trivially under $M$ (and only $M$): $\beta(m)\neq 1$ for some $m\in M$, $\beta(h) = 1$ for all $h\in H$, $h\notin M$. In this case, we maintain the long-range correlations in all existing string order parameters.
In Appendix~\ref{app:SPT_coset_cohomology}, we prove that in such cases, the SPT order is labelled by a cocycle in $H^2(H/M,U(1))$, and provide more details.

\subsubsection{Counting SPT Phases}
Having described how the structure of the Lagrangian subgroup can result in non-trivial string order parameters, let us use this to count the number of non-trivial SPT phases with symmetry group $H$. 
This reproduces
the classification of SPT phases
obtained through cohomology considerations~\cite{Chen2010,Chen_classification_2011,schuch_classification_MPS_2011,Pollmann_Turner_2012_SPT_1D,chen2012SPTphases,Chen_2013_SPT_Group_cohomology}, but does so in a conceptually simpler way. We will do so by starting with the trivial phase (for the unbroken symmetry), in which $H=M$, with Lagrangian subgroup $\mathcal{M}^H_\text{triv} = \lbrace (h,1) \, | \, h\in H\rbrace$, and performing a series of ``moves" on the anyons in $\mathcal{M}^H_\text{triv}$, each of which preserves the braiding statistics of the anyons. Note that the phase corresponding to $\mathcal{M}^H_\text{triv}$ is not SPT-ordered, since we cannot form any string order parameters whose end-points transform non-trivially under $G$. 

In what follows, we will use the generalised toric code notation for anyons, while ignoring for now any broken symmetry. We will assume that the unbroken symmetry group $H$ is of the form $\mathbb{Z}_{k_1}\times \ldots \times \mathbb{Z}_{k_n}$ for some integers $k_i$. %
We define $m_{k_i}$ and $e_{k_i}$ as generators and irreps of group $H$, respectively, in the way explained in Sec.~\ref{sec:quantum_doubles}. In this situation, we have $\mathcal{M}^H_\text{triv} = \langle m_{k_i}\, |\, i=1,\ldots , n\rangle$.

To obtain a phase with a string order parameter, we start attaching Rep$\, H$ anyons (i.e., those of the form $e_{i}^p$, $p\in\mathbb{Z}_{k_i}$) to the anyons of the Lagrangian subgroup, since these will allow for string order parameter endpoints to be charged under the symmetry. We must do this in a way that preserves the braiding statistics of $\mathcal{M}$. We cannot, for example, include $e_{i}m_{i}$ in $\mathcal{M}$, since this is not a boson. Thus we need to attach an anyon $e_{i}^p$ to $m_{j}$, for layer indices $i\neq j$: we thereby obtain products of anyons \textit{between} toric code layers in $\mathcal{M}$. To perform this attachment while preserving the braiding statistics with $m_{i}$, we must attach $e_{j}^q$ (for some $q\in \mathbb{Z}_{k_j}$) to $m_{i}$. $p$ and $q$ have to be chosen so that $e_{i}^p m_{j}$ and $m_{i}e_{j}^q$ braid trivially. This attachment is a ``move" that does not break any boundary symmetry (this can be seen by noting that we cannot find any long-range correlation in local order parameters $W^{(e_i)}_k$, since $e_i\notin \mathcal{M}$) %
and hence takes us to a new SPT phase. The new phase displays long-range order in $W_{ij}^{(e_i^p m_j)} = W_{ij}^{(m_{j})} (W_i^{(e_{i})})^p (W_j^{(e_{i})})^{p\dagger}$ (and similarly for $W_{ij}^{(m_ie_j^q)}$), whose endpoints transform non-trivially under the symmetry.

How many phases can we thus obtain? Consider two toric-code layers (indexed $i$ and $j$) from the full system, with $\mathbb{Z}_k$ and $\mathbb{Z}_m$ TO respectively. Let $\omega_k$ ($\omega_m$) be a simple $k^{\text{th}}$ ($m^\text{th}$) root of unity. Now consider anyons $a = m_i e_j^p$ and $b = e_i^q m_j$, for $p\in \mathbb{Z}_m$ and $q\in\mathbb{Z}_k$. We have:
\begin{align}
    \mathcal{B}(a,b) = \omega_k^q \omega_m^p.
\end{align}
This is equal to $1$ if and only if:
\begin{align}
    \frac{q}{k} + \frac{p}{m} \in \mathbb{Z}.
\end{align}
This can be shown to have gcd$(k,m)$ number of solutions for $p\in \mathbb{Z}_m$ and $q\in\mathbb{Z}_k$. We prove this in Lemma~\ref{lem:sum_of_fracs} of Appendix~\ref{app:SPT_Lemma}, with the solutions themselves presented there as well.

Hence, for layers $i$ and $j$, we have gcd$(k,m)$ possible anyon attachments that can lead to distinct SPT phases. Performing such moves between all pairs of layers in the TO system, we find the following classification of SPT phases:
\begin{align}\label{eqn:SPT_class}
    \prod_{i<j} \mathbb{Z}_{\text{gcd}(k_i,k_j)}
\end{align}
where each $\mathbb{Z}_{\text{gcd}(k_i,k_j)}$ factor determines which attachment has occurred between layers $i$ and $j$. This reproduces the classification obtained through group cohomology theory~\cite{chen2012SPTphases,potter2016prx,Chen2010,Chen_classification_2011,schuch_classification_MPS_2011,Pollmann_Turner_2012_SPT_1D,chen2012SPTphases,Chen_2013_SPT_Group_cohomology}. One can reconstruct the cohomology, or ``symmetry fractionalisation", via the string order parameters of the SPT~\cite{Verresen_PRX_2021}. Note that each of the phases we thus produce may have a different ``trivial" subgroup $M$. In Appendix~\ref{app:SPT_Lagrangian_subgroup_to_cohomology}, we show how to reconstruct the cohomology of the SPT phase from the corresponding Lagrangian subgroup.

\subsubsection{\texorpdfstring{Example: $G = \mathbb{Z}_4\times \mathbb{Z}_6\times \mathbb{Z}_3$}{Example: G = Z4 x Z6 x Z3}}

We have covered the example $G = \mathbb{Z}_2 \times \mathbb{Z}_2$ above, but let us note that in that case, setting $H=G$ (so the symmetry is unbroken), we have, according to Equation~\ref{eqn:SPT_class}, a $\mathbb{Z}_2$ classification of the SPT phases, which is indeed the case---previously, we found only one non-trivial SPT phase and one trivial phase (these were the only two phases with completely unbroken symmetry).
Here we examine the example of $G = \mathbb{Z}_4\times \mathbb{Z}_6\times \mathbb{Z}_3$, shown in Fig.~\ref{fig:Static_phases}(b). From Equation~\ref{eqn:SPT_class}, we expect to find a $\mathbb{Z}_2\times \mathbb{Z}_3$ classification of SPT phases with no broken symmetry. As usual, we start from the trivial phase with Lagrangian subgroup:
\begin{align}
    \mathcal{M}_\text{triv} = \langle m_1, m_2, m_3\rangle,
\end{align}
generated by $m$-type anyons of the three layers of $\mathbb{Z}_4$-, $\mathbb{Z}_6$-, and $\mathbb{Z}_3$-toric code.
Let us consider the possible $e$-anyon attachments that can be made. We see that the following is an allowed attachment between layers 1 and 2:
\begin{align}
    \mathcal{M}_1 = \langle m_1 e_2^3 , e_1^2 m_2, m_3\rangle .
\end{align}
Indeed, this is the only valid attachment that maintains trivial statistics between the anyons in layers 1 and 2 (see Appendix~\ref{app:SPT_Lemma}). 

There are no possible attachments between layers 1 and 3 since gcd$(4,3) = 1$. Between layers 2 and 3 we have two non-trivial attachments:
\begin{align}
    \mathcal{M}_2 &= \langle m_1 , m_2 e_3 , e_2^4 m_3\rangle , \\
    \mathcal{M}_3 &= \langle m_1 , m_2 e_3^2 , e_2^2 m_3 \rangle .
\end{align}
We can finally perform layer 1/2 attachments and layer 2/3 attachments sequentially:
\begin{align}
    \mathcal{M}_4 &= \langle m_1 e_2^3 , e_1^2 m_2 e_3 , e_2^4 m_3\rangle , \\
    \mathcal{M}_5 &= \langle m_1 e_2^3, e_1^2 m_2 e_3^2 , e_2^2 m_3 \rangle .
\end{align}
Hence, we end up with the expected 6 SPT phases and a $\mathbb{Z}_2\times \mathbb{Z}_3$ classification, the first factor corresponding to the layer 1/2 attachment and the second to the layer 2/3 attachments. 

We sketch out the different types of static phases we consider in Table~\ref{tab:static_phases} including their layered toric code picture.

\begin{table*}
    \centering
    \begin{tblr}{
        colspec = {Q[c,m,wd=2.2cm] X[c,m] X[c,m] X[c,m]},
        colsep  = 7pt,
        rowsep  = 8pt,
        hline{1,2} = {-}{0.75pt},
        vline{2} = {1-4}{0.75pt},
        row{1} = {font=\bfseries},
        row{2} = {abovesep=20pt},
        column{1} = {font=\bfseries},
        cell{3}{1} = {valign=m},
    }
            &
        {Symmetry-breaking\\ to $H\leq G$}
        &
        {$(M\leq H)$-trivial\\ phase}
        &
        {$(H/M)$-SPT\\ phase} \\

    {Lagrangian\\ Subgroup $\mathcal{M}$}
        &
        {$\displaystyle
        \bigl\{ (1,\beta)\;\big|\;
        \beta \in \mathrm{Rep}\,G,$\\
        $\beta(h)=1\;\forall h\in H
        \bigr\}
        \subseteq \mathcal{M}$}
        &
        $(m,1) \in \mathcal{M},\; \forall m\in M$
        &
        {$\displaystyle
        \bigl\{ (h,\alpha)\;\big|\;
        h\in H,\,
        \alpha\in \mathrm{Rep}\,H/M,$\\
        $\alpha(h)\neq 1 \text{ for some } h\in H
        \bigr\}
        \subseteq \mathcal{M}$} \\[8pt]

    \raisebox{3.5\height}{\parbox{2.2cm}{
\centering
Layered $\mathbb{Z}_k$-toric\\
code example
}}
        &
        \includegraphics[width=0.9\linewidth]{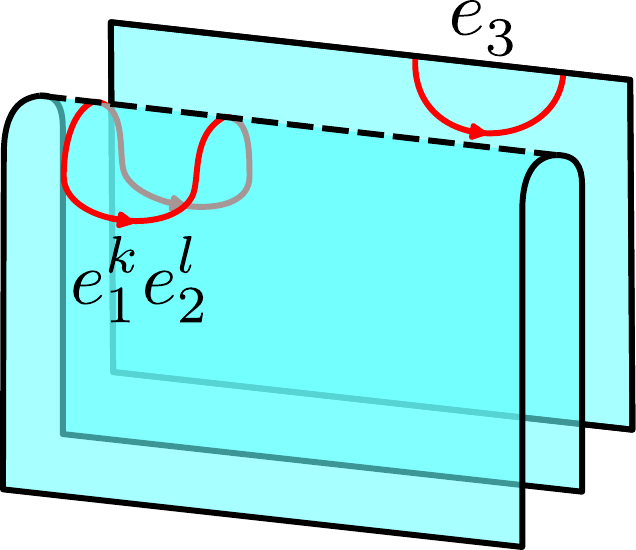}
        &
        \includegraphics[width=0.9\linewidth]{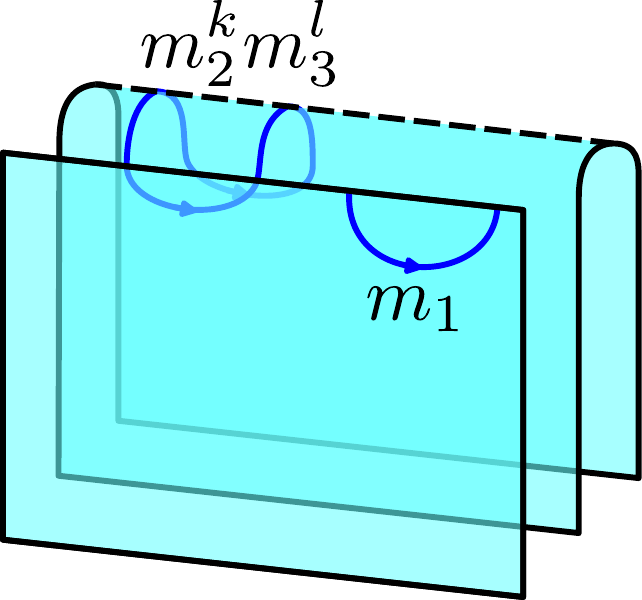}
        &
        \includegraphics[width=0.9\linewidth]{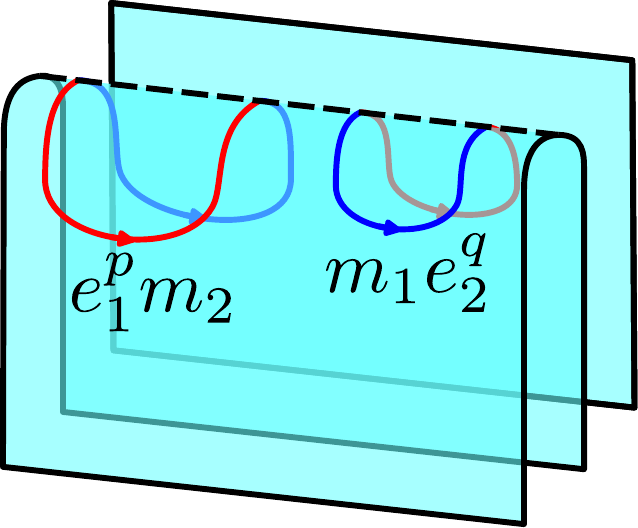} \\[12pt]

    \end{tblr}

    \caption{Summary of static phases understood through SymTFT.
    We provide a characterisation of the Lagrangian subgroup corresponding to each type of order: spontaneous symmetry-breaking to a subgroup $H\leq G$, trivial with respect to subgroup $M\leq H$ (the unbroken symmetry group), and SPT-ordered with respect to symmetry group $H/M$ (the phases being a subset of those that are trivial with respect to $M$).
    We provide a schematic illustration of an example of a layered $\mathbb{Z}_k$-toric code system whose boundary exemplifies each particular order [$B_{\text{phys}}$ ($B_{\text{ref}}$) is at the top (bottom)]. Examples of string operators condensed at $B_{\text{phys}}$ are shown (we do not show all condensed string operators).
    In the SSB phases, $e$-type anyons are included in $\mathcal{M}$. Any pure $m$-type anyons in $\mathcal{M}$ generate the subgroup $M$ with respect to which the phase is trivial. SPT order corresponds to the toric code layers folding along a domain wall that mixes $e$ and $m$ anyons, as shown.}
    \label{tab:static_phases}
\end{table*}

\subsection{Boundary conditions}\label{sec:static_BCs}

For the SymTFT on the cylinder, $B_\text{phys}$ has periodic boundary conditions to begin with. However, we can twist these boundary conditions or we can even introduce open boundaries  in natural ways. We will not discuss open boundaries at length in this paper (see Section~\ref{sec:Conclusion} for further discussion of these).
One can introduce these by simply adding a segment of a trivial phase along $B_\text{phys}$,
which acts as the vacuum to which the 1D system interfaces~\cite{motamarri2024symtftequilibriumtimecrystals}.
This may involve an interface between 1D boundaries with different Lagrangian subgroups, at which point one expects to find \textit{duality defects} or \textit{twist defects} of the $G$-TO~\cite{Bombin10,BarkeshliQi_PRX12,YouWen12,Barkeshli13c,BBCW19,motamarri2024symtftequilibriumtimecrystals}. 

In the context of closed boundary conditions, we can ``$g$-twist" boundary conditions with an operator $W^{(g,1)}_j$ running between $B_\text{phys}$ and $B_\text{ref}$ (see also Ref.~\cite{motamarri2024symtftequilibriumtimecrystals}). As discussed in Section~\ref{sec:G_TO}, in order to avoid creating excitations at $B_\text{ref}$, we introduce a short electric-magnetic duality domain wall. Any $(g,1)$ string operator that passes through this wall switches to $(1,\alpha)$-type and vice versa (note that $G$ and Rep$\, G$ are isomorphic and so we can define such a domain wall unambiguously). We pass the string operator $W^{(g,1)}_j$ through this domain wall, changing it to a $W^{(1,\alpha)}$ operator, then terminate this on $B_\text{ref}$ without creating excitations. To understand the effect of this inclusion, note that a translation of the 1D system by $L$ lattice spacings drags the $(g,1)$ anyon around the cylinder, resulting in the system returning to its original state up to a global symmetry operator, $\overline{W}^{(g,1)}$~\cite{aasen2016topological,Aasen_cat,motamarri2024symtftequilibriumtimecrystals}. This $g$-twisted boundary condition is detected by operators $\overline{W}^{(1,\alpha)}$ wrapping around $B_\text{phys}$: $\overline{W}^{(1,\alpha)} W^{(g,1)}_j \overline{W}^{(1,\alpha)\dagger} = \alpha(g) W^{(g,1)}_j$. For $G = \mathbb{Z}_2$ this simply gives us periodic (no twist) or anti-periodic ($(-1)$-twisted) boundary conditions.

\section{SymTFT for Floquet phases}\label{sec:SymTFT_Floquet_phases}

We now discuss the application of SymTFT to Floquet 1D systems. 
We will firstly discuss MBL from a SymTFT perspective, before defining the general form of the drive that we will consider. We will then provide a general classification of Floquet drives, demonstrate signatures of time-translation symmetry breaking (TTSB),
discuss boundary conditions, and the relation of our classification scheme to previous schemes. We also provide some illustrative examples.

\subsection{\texorpdfstring{MBL for boundaries of the $\mathbf{G}$-TO}{MBL for boundaries of the G-TO}}\label{sec:MBL_bdrys}

MBL is often defined for a general system as the existence of an extensive set of quasi-local conserved quantities, i.e., the LIOMs, whose eigenvalues label the eigenstates of the system~\cite{serbyn2013local,Huse_MBL_phenom_14,chandran2015constructing,ros2015integrals,Rademaker2016LIOM,Vasseur2016}. In our case, suitably dressed versions of a subset of the boundary algebra terms (see Section~\ref{sec:Boundary_algebra}) provide natural candidates for such LIOMs.

Since we are dealing with Abelian $G$, there are no obstructions to MBL in the boundary algebra, in either symmetry-broken or symmetry-preserving cases, unlike for cases with non-ablelian symmetry groups~\cite{KeyserlingkSondhi2016floquetSPT,KeyserlingkSondhi2016floquetSSB,Vasseur2016}. %
The bulk of the system is not supplied with any disorder and has a large energy gap relative to the disorder strength of $B_\text{phys}$, so all bulk stabiliser values are frozen out to $A_v, B_p = 1$. All perturbations that we add to the drives will commute with the bulk stabilisers, as the perturbations are elements of the boundary algebra, and so will not frustrate the bulk stabilisers. As such, the MBL is restricted to the 1D boundary $B_\text{phys}$.
Starting in the fixed-point limits,
we build drives from Hamiltonians (which include only boundary algebra terms) analogous to the $H_0$ and $H_1$ Hamiltonians in Section~\ref{sec:Illustrative_example} in the $G=\mathbb{Z}_2$ Ising case. Specifically, we will include terms in a Hamiltonian that correspond to anyon string operators corresponding to a particular Lagrangian subgroup. Eigenstates of these Hamiltonians are labelled by eigenvalues $\lbrace w_i\rbrace$ of those $W_{i\, i+1}^{(a)}$ operators for $a\in \mathcal{M}$, along with possibly up to rk$G$ other labels for symmetry-broken states: $\ket{\Psi} = \ket{\lbrace w_i\rbrace,p_1,\ldots,p_{\text{rk}G}}$. The string operators $W_{i\, i+1}^{(a)}$ form the LIOMs. Adding perturbations, we expect to be able to find a quasi-local unitary $\mathcal{U}$ (which acts trivially in the bulk)
that diagonalises these Hamiltonians, for strong-enough disorder, cf.~Sec.~\ref{sec:AFFPD}
(such systems have been extensively studied in 1D~\cite{MBL_spin_chains_2016,abanin2014theory,Huse_localization_2013,serbyn2013local,Huse_MBL_phenom_14,chandran2015constructing,ros2015integrals,Goihl2018,ImbrieLIOMreview2017}). $\mathcal{U}$ dresses the LIOMs with exponentially-decaying tails. The exact forms of the boundary algebra operators do not matter, since a finite-depth unitary can alter them to other local operators obeying the same algebraic relations. 

\subsection{Form of Floquet drives}

We begin with an ansatz for the Floquet drives. This involves
Hamiltonians
\begin{align}\label{eqn:HM_floq}
    H_\mathcal{M} = \sum_{a\in \mathcal{M}}\sum_{i=1}^L J_i^{(a)} \left(W_{i\, i+1}^{(a)} + W_{i\, i+1}^{(a)\dagger}\right)
\end{align}
built from the members of the boundary algebra, where $\mathcal{M}\subset \mathcal{A}_G$ is a Lagrangian subgroup of the anyon model of the $G$-TO, which ensures that all the string operators mutually commute.
We consider first these fixed-point Hamiltonians, and avoid longer-range interactions such as $W_{i\, i+2}^{(a)}$ for simplicity, before adding perturbations later.
More generally, we could consider Hamiltonians $H_a$ involving string operators only of a single anyon type $a\in \mathcal{A}_G$.
Using these $H_a$, we could form a large set of general Floquet drives of the form
\begin{align}
    U_F^{a,b,c,\ldots} = \ldots e^{-iH_{c}} e^{-i H_{b}} e^{-i H_{a}}, 
\end{align}
for $a,b,c,\ldots \in \mathcal{A}_G$. However, to construct fixed-point Hamiltonians, we require exactly local integrals of motion. Hamiltonians $H_\mathcal{M}$ for $\mathcal{M}$ a Lagrangian subgroup provide a complete set of these LIOMs. Hence we specialise our ansatz to:
\begin{align}\label{eqn:UF_Mabc}
    U_F^{\mathcal{M},a,b,c,\ldots} = \ldots e^{-iH_{c}} e^{-i H_{b}} e^{-i H_{a}} e^{-iH_\mathcal{M}}
\end{align}
for some Lagrangian subgroup $\mathcal{M}$.
To construct fixed-point drives, we require that $e^{-i H_a},e^{-iH_b},\ldots$ all commute exactly with the LIOMs. 
In this case, it is clear that either $H_a, H_b, \ldots$ will involve LIOMs themselves (i.e., $a,b,\ldots \in \mathcal{M}$), or one or more of $e^{-iH_a},e^{-iH_b},\ldots$ forms a logical operator:
\begin{align}
    e^{-iH_a} = \overline{W}^{(a)}.
\end{align}
No other possibilities allow for Equation~\ref{eqn:UF_Mabc} to describe a fixed-point drive. 
Hence, the general form of our fixed-point Floquet drives will be taken to be:
\begin{equation}
    U^{(b,\mathcal{M})}_F = \overline{W}^{(b)} e^{-i H_{\mathcal{M}}},
    \label{eqn:fp_drives}
\end{equation}
where
$H_{\mathcal{M}}$ is a Hamiltonian of the form of Equation~\ref{eqn:HM_floq}.
Hence, the fixed-point drives are associated with a Lagrangian subgroup $\mathcal{M}$ and an anyon $b\in \mathcal{A}_G$
(note that $\overline{W}^{(\mathbf{1})} = \mathds{1}$). 

The above simple form for Floquet drives allows us to consider phases with eigenstate order and non-trivial dynamics via TTSB. The prepending logical operator $\overline{W}^{(b)}$ is crucial for this latter purpose as we show below. We will also show that the above ansatz is general enough to recover previously classified
forms of SSB and SPT Floquet drives for appropriate choice of $\mathcal{M}$ and $b$. 

\subsection{Spatiotemporal order of Floquet eigenstates}\label{sec:ST_order_eigenstates}

We can examine the spatiotemporal order of drive eigenstates via the decay of disorder correlators, which can capture TTSB and long-range spatial order both in the presence and absence of $G$-spontaneous symmetry breaking.
It is also general enough to capture boundary or dual signatures of TTSB, such as those exemplified by the $0\pi$PM phase (cf. Sec.~\ref{sec:Illustrative_example}). Disorder correlators are built from finite-interval symmetry operator strings, which obey ``volume law" decay when the symmetry is broken, but ``area law" decay otherwise. We use the adaptation of this to TTSB as suggested in Appendix B of Ref.~\onlinecite{KhemaniKeyserlingkSondhi2017RepTh}. That is, we study how the expectation of $U^{(b,\mathcal{M})}_F(\ell,r)= W^{(b)}(\ell,r) e^{-i H_{\mathcal{M}} (\ell,r)}$ decays in the eigenstates $\ket{n}$ of $U_F^{(b,\mathcal{M})}$.
Here, $U_F^{(b,\mathcal{M})}(\ell,r)$ is defined as $U_F^{(b,\mathcal{M})}$ restricted to a finite-interval $[\ell,r]$.
We define the disorder correlator to be:
\begin{align}
    T_{\ell r, n} \coloneqq |\bra{n} U^{(b,\mathcal{M})}_F(\ell,r)\ket{n}|.
\end{align}

Let us consider the disorder correlator in two cases: (1) $b\in \mathcal{M}$, and (2) $b\notin \mathcal{M}$.
We will start with the fixed-point drive
first, before extending to perturbed drives. 
In case (1), the string operator $W^{(b)}(\ell, r)$ corresponds to an anyon $b$ in the Lagrangian subgroup, and so $W^{(b)}(\ell, r)$ fixes eigenstates $\ket{n}$ up to a phase. $e^{-iH_\mathcal{M}(\ell,r)}$ is also a function of the LIOMs (up to some finite-size correction at the boundaries).
In that case, $T_{\ell r, n} = |\bra{n} U^{(b,\mathcal{M})}_F(\ell,r)\ket{n}| = \text{constant}$. This holds away from the fixed point, as we explain in Appendix~\ref{app:extending_from_fixed_points}.
Meanwhile, in case (2), $W^{(b)}(\ell, r)$ flips LIOMs on $B_\text{phys}$ at locations $\ell$ and $r$, since $b$ is not a member of the Lagrangian subgroup. Hence in this case, $T_{\ell r, n} = |\bra{n} W^{(b)}(\ell,r)\ket{n}| = 0$. Away from the fixed point, this gets modified to an exponentially decaying function of $|\ell - r|$, resulting from the exponential tails in the LIOMs. We explain this in Appendix~\ref{app:extending_from_fixed_points}.

Therefore, we find, for a general drive:
\begin{align}
   |T_{\ell r,n}| &\coloneqq |\bra{n} U^{(b,\mathcal{M})}_{F}(\ell,r)  \ket{n}| \nonumber\\
    &= \begin{cases}
    \sim e^{-|\ell-r|/\xi}, & \text{if $b\notin\mathcal{M}$}\\
    \text{constant}, & \text{if $b\in \mathcal{M}$}
    \end{cases}\label{eqn:Disorder_corr_two_cases}
\end{align}
for any eigenstate $\ket{n}$ of the drive, where $\xi$ is the localisation length, which for the fixed-point drive is zero. Hence, we find that TTSB, as probed by the disorder correlator, is determined by the inclusion or exclusion of $b$ in $\mathcal{M}$ in the fixed-point drive $U_F^{(b,\mathcal{M})}$.

The disorder correlators have the same characteristic decay regardless of if there is any $G$-SSB (this is an advantage of the approach), or equally, regardless of if the TTSB occurs in the bulk or if, without the use of dual order parameters (as in the $0\pi$PM phase for the $\mathbb{Z}_2$-symmetric system), it is only detected on the boundaries
of the 1D system. The TTSB is completely determined by the choice of $b$ and $\mathcal{M}$. We can explicate this further by using the patch symmetries method from Ref.~\cite{JiWen2020categorical}. We consider both the Floquet unitary restricted to interval $[\ell, r]$ and a string operator $W^{(a)}_{kj}$ for some $a\in \mathcal{M}$,
with interval $[k,j]$ chosen such that $k$ lies deep in the bulk of $[\ell,r]$ and $j$ lies far away from $[\ell,r]$. We then have:
\begin{align}
    \bra{n} &U_F^{(b)}(\ell,r)^{t\,\dagger} W^{(a)}_{kj} U_F^{(b)}(\ell,r)^{t}\ket{n} = \nonumber\\
    & \mathcal{B}(a,b)^{t} \bra{n}W^{(a)}_{kj}\ket{n} \label{eqn:LR_ST_order}%
\end{align}
where $\bra{n}W^{(a)}_{kj}\ket{n} \neq 0$ in the limit of $|k-j|\rightarrow \infty$.
This displays the characteristic period-tupling of TTSB, resulting from the braiding phase between anyons $b$ and $a$. Once again, this diagnosis of TTSB in $\ket{n}$ eigenstates is the same regardless of whether there is any $G$-SSB. %
Equation~\ref{eqn:LR_ST_order} signals long-range \textit{spatiotemporal} order (it is an unequal-time correlator), which is a defining characteristic of TTSB.

The form of the anyon $a$ determines the type of time crystalline order we expect. As we have described in Section~\ref{sec:Illustrative_example}, one can think of TTSB occurring due to the inclusion of symmetry or dual-symmetry operators in $U_F$. When the anyon $b$ is of the form $(g,1)$, $\overline{W}^{(b)}$ is a symmetry operator, whereas when $b = (1,\alpha)$, it is a dual symmetry operator. Both phases can exhibit bulk features if we consider the (gauge) link degrees of freedom in the lattice (see Section~\ref{sec:breaking_dual_Ising_sym}). In full generality, $b=(g,\alpha)$ results in a breaking of both standard and dual symmetries. We provide an experimental protocol in Section~\ref{sec:Experimental_demo} for measuring correlators in all phases. We note that away from fixed points, we may only have emergent symmetries (or dual symmetries), if the perturbations we add to the drive do not commute with the symmetry operators~\cite{KeyserlingkKhemaniSondhi2016stability}.

In this subsection we have shown that the spatiotemporal order of eigenstates of the drive is determined by the choice of $b$ and $\mathcal{M}$ in $U_F^{(b,\mathcal{M})}$, with the qualitative form of all correlators holding away from the fixed point drives, owing to MBL (see Appendix~\ref{app:extending_from_fixed_points}). In short, if $b\notin \mathcal{M}$ ($b\in \mathcal{M}$), the disorder correlator decays (does not decay) to zero (Equation~\ref{eqn:Disorder_corr_two_cases}) and we observe (do not observe) period-tupling of observables (Equation~\ref{eqn:LR_ST_order}). Equivalently, we observe (do not observe) TTSB in eigenstates of the drive. In the next subsection, we leverage this to provide a classification of Floquet drives.

\subsection{Classification of Floquet Drives}
\label{sec:CFD}

We have seen that SymTFT allows us to label Floquet drives with a pair, $(b,\mathcal{M})$, where $\mathcal{M}$ determines the eigenstate order and $b$ characterises the time-crystalline/TTSB signatures. In particular, we observe non-trivial dynamic signatures whenever $b\notin \mathcal{M}$. For any anyon $a \in \mathcal{M}$, one can create equivalent signatures with drives $U_F^{(b,\mathcal{M})}$ and $U_F^{(b\times a,\mathcal{M})}$. Physically, there is an equivalence between $b$ and $b\times a$ at the boundary of the $G$-TO, since said boundary condenses $a$ anyons. This is essentially an identification of $a$ with the vacuum particle, and hence $b$ and $b\times a$ are entirely indistinguishable. From the perspective of the 1D system, we can write $U_F^{(b\times a,\mathcal{M})} = \overline{W}^{(b)}\overline{W}^{(a)}e^{-iH_\mathcal{M}}$, with $\overline{W}^{(a)}$ expressible in terms of the LIOMs. $\overline{W}^{(a)}$ therefore provides only an overall phase to eigenstates of $U_F^{(b,\mathcal{M})}$ and hence can be seen to provide no extra dynamics. We define the class of ``equivalent excitations" on the boundary defined by $\mathcal{M}$ as $[b] = \lbrace b\times a \, |\, a\in \mathcal{M}\rbrace$.

One can thus classify distinct possible Floquet drives for a $G-$symmetric system with a Lagrangian subgroup, $\mathcal{M}$, and an excitation class $[b]$. We summarise this as:
\begin{equation}
    \text{Cl}_{\text{F}}(G) =  \{ \mathcal{M} \times \mathcal{A}_G / \mathcal{M}  |  \mathcal{M} \in \text{Cl}_{\text{Lag}}(\mathcal{A}_G) \} 
\end{equation}
where $\mathcal{A}_G / \mathcal{M}$ denotes different equivalence classes of excitations for Lagrangian subgroup $\mathcal{M}$, and Cl$_\text{Lag}(\mathcal{A}_G)$ is the set of all Lagrangian subgroups of anyon model $\mathcal{A}_G$.

Further, we can look at the classification of drives given symmetry breaking $G \to H$. For an unbroken symmetry group $H$, the possible eigenstate orders $\mathcal{M}$ are given by the static/undriven SPT classification, which we have covered in Section~\ref{sec:Simple_SPT_Class}. We denote this static SPT classification as $\text{Cl}_0(H)$. Recall that the classification is given by a cocycle from $H^2(H,U(1))$. Given such an $\mathcal{M}$, one can show that its equivalence classes of excitations follow: $\mathcal{A}_G / \mathcal{M} \equiv G/H \times \text{Rep}(H)$ (see Appendix~\ref{app:excitation_classes}). Thus, we get the Floquet classification, when breaking to subgroup $H$: 
\begin{equation}\label{eqn:sym_broken_class_scheme}
    \text{Cl}_{\text{F}}(G,H) = \text{Cl}_0(H) \times G/H \times \text{Rep}(H).
\end{equation}
The drives producing TTSB %
are those in which the $\mathcal{A}_G/\mathcal{M}$ component is non-trivial.

\subsection{Observable Signatures and Twisted Boundary Conditions}\label{sec:sig}

We here discuss observable unequal time correlators that detect long-range spatiotemporal order in TTSB phases established in our classification scheme. We need not introduce open boundary conditions to observe TTSB in any of these phases (it is usually thought that Floquet SPT phases, such as the $0\pi$PM phase, have signatures that are only observable at open boundaries). Specifically, local order parameters or local dual order parameters display these correlations. We distinguish between time-crystalline phases, in which local order parameters display long-range spatiotemporal order, and dual time-crystalline phases, in which that is displayed by dual order parameters. $B_\text{phys}$ can display both of these types of time-crystalline order. Specifically, for any $a\in \mathcal{M}$ and Floquet eigenstate $\ket{n}$
we have [cf.~Eq.~\eqref{eq:Z2sig}]:
\begin{align}
    &C^{(a)}_{kj;\ket{n}}(t)=
    \bra{n} W_k^{(a)\dagger}(t)W_j^{(a)}\ket{n}  \label{eq:dyncorr}\\
    &\quad \quad =\bra{n}\left(U^{(b,\mathcal{M})\dagger}_F\right)^t W_k^{(a)\dagger}\left(U_F^{(b,\mathcal{M})}\right)^tW_j^{(a)} \ket{n}\\
    &\quad \quad= \mathcal{B} (a,b)^t \bra{n}W_{jk}^{(a)}\ket{n}
\end{align}
where $\bra{n}W_{jk}^{(a)}\ket{n}\neq 0$ in the limit of $|k-j|\rightarrow\infty$ (note that its value depends on $k$ and $j$, due to the values of the LIOMs between these sites). The correlator displays a characteristic period-tupling, dictated by the braiding between $a$ and $b$. This holds also away from the fixed point, provided the 1D system at $B_\text{phys}$ is many-body localised.

For $\mathcal{B} (a,b)\neq1$
with $a = (1,\alpha)\in \mathcal{M}$ and $b\notin \mathcal{M}$, we have regular time-crystalline order, whereas for $a=(g,1)\in\mathcal{M}$, $b\notin \mathcal{M}$, we have a dual time crystal. In the latter case,
the order parameters are operators $W^{(g,1)}_k$ that $g$-twist boundary conditions. That is, by making boundary conditions dynamic---Eq.~\eqref{eq:dyncorr} correlates twisting and later untwisting boundary conditions---one can observe signatures of the dual time crystals with closed boundary conditions. More generally, one can expect order parameters to correspond to anyons that feature both $G$ components and Rep$\, G$ components.

We now discuss a further feature of these dynamical boundary conditions. In addition to using $W^{(g,1)}_k$ as a dual order parameter, we can also make it part of the Floquet unitary:
$U_F^{(b,\mathcal{M}),g} \propto W^{(g,1)}_k \overline{W}^{(b)} e^{-iH_\mathcal{M}}$, with the proportionality constant simply a phase chosen so that the pre-factor is Hermitian. %
More generally, one can include $W^{(a)}_k $ with  $a=(g,\alpha)$ in the drive:
$U_F^{(b,\mathcal{M}),a}\propto W^{(a)}_k \overline{W}^{(b)} e^{-iH_\mathcal{M}}$.
We require that $a\in \mathcal{M}$,
otherwise $W_k^{(a)}$ will not commute with all of the LIOMs. 

To see the effect of this inclusion in the Floquet unitary, we can consider the dynamics of the $\overline{W}^{(b)}$ operator under the modified drive: $U_F^{(b,\mathcal{M}),a\dagger}\overline{W}^{(b)}U_F^{(b,\mathcal{M}),a} = \mathcal{B}(b,a) \overline{W}^{(b)}$. Hence, we find that $W^{(a)}_k$
is associated with a fixed-quasi-energy mode (generalising zero and $\pi$ modes~\cite{Floquet_Majoranas_PRL_2011,Kitagawa_top_char_2010,Thakurathi_Floquet_majorana_2013}). 
When we include $W_k^{(a)}$ in the Floquet unitary, we can dynamically cycle through eigenstates of $\overline{W}^{(b)}$. We can interpret this as dynamically pumping charge and/or twisting boundary conditions.
This can be made more general still %
by duality-twisting boundary conditions~\cite{motamarri2024symtftequilibriumtimecrystals}.

\subsection{Relation to previous classification schemes}

In this section, we describe how the SymTFT approach to classifying Floquet phases relates to other schemes appearing in the literature~\cite{KeyserlingkSondhi2016floquetSSB,else2016prb,potter2016prx,KeyserlingkSondhi2016floquetSPT}, and how our approach extends these classification schemes.
The schemes for classifying Floquet SPT phases~\cite{else2016prb,potter2016prx,KeyserlingkSondhi2016floquetSPT}
can be understood via projective representations of an enlarged symmetry group, $G\times \mathbb{Z}$, accounting for both the familiar $G$ symmetry along with discrete time-translation symmetry $\mathbb{Z}$.
This is analogous to how static SPT phases can be understood via projective representations of $G$ (see Appendix~\ref{app:SPT_class}). We can provide simple counting arguments to show that our classification reproduces these previously known results:
Refs.~\onlinecite{else2016prb,potter2016prx,KeyserlingkSondhi2016floquetSPT} identify Floquet SPT phases with elements of the second cohomology group $H^2(G\times \mathbb{Z}, U(1))$, leading to a number $|\text{Cl}_0 [G] \times G|$ of Floquet SPT phases, where $\text{Cl}_0[G]$ labels the static SPT order. This matches our scheme when the full symmetry is preserved ($H = G$), as one can see from Eq.~\ref{eqn:sym_broken_class_scheme}.

We now seek to better understand the structural
relationship between these classification schemes and ours. %
The classification of static SPT phases has been accounted for already -- it can be determined from the non-trivial braiding statistics of the anyon model. In Appendix~\ref{app:SPT_Lagrangian_subgroup_to_cohomology}, we provide a graphical route to explicitly calculating the cocycle in $H^2(G, U(1))$ that labels the SPT phase, based on the corresponding Lagrangian subgroup. This takes care of the Cl$_0[G]$ grading.
Meanwhile, the extra data afforded by the $\times G$ component of the classification tells about the action of the symmetry on the patch-restricted Floquet unitary, $U_F(\ell, r)$.

For concreteness, let us focus on the case of $G = \mathbb{Z}_2$, with the non-trivial Floquet topological phase corresponding to $U_F^{(e,\mathcal{M}_m)} = \overline{W}^{(e)}e^{-i t H_{\mathcal{M}_m}}$ with $\mathcal{M}_m = \lbrace 1, m \rbrace$ (recall that $\mathcal{M}_\text{ref} = \lbrace 1, e\rbrace$). We restrict the unitary to interval $[\ell, r]$, %
leading to: $U_F^{(e,\mathcal{M}_m)}(\ell,r) = W^{(e)}_{\ell r} e^{-i  H_{\mathcal{M}_m}(\ell,r)} = W_\ell^{(e)} W_r^{(e)} e^{-i t H_{\mathcal{M}_m}(\ell,r)}$. 
The classification of this phase, using the scheme of Refs.~\cite{else2016prb,potter2016prx,KeyserlingkSondhi2016floquetSPT}, arises from the non-commutation of the endpoint operators $W_\ell^{(e)}$ and $W_r^{(e)}$ with the symmetry $\overline{W}^{(m)}$: $\overline{W}^{(m)}W_\ell^{(e)}\overline{W}^{(m)\dagger}W_\ell^{(e)\dagger} = \mathcal{B}(e,m) = -1$. Here, $\mathcal{B}(e,\cdot): G\rightarrow U(1)$ is a representation of $G$ that
labels the Floquet SPT phase, in the framework of Ref.~\onlinecite{KeyserlingkSondhi2016floquetSPT}, and results in a projective representation of the symmetry group $G\times \mathbb{Z}$ near site $\ell$~\cite{potter2016prx,else2016prb}. That is, we define the symmetry group action restricted to the sites near $\ell$ as generated by $W_\ell^{(e)}$, $W^{(m)}_{\ell-1\, \ell+1}$. But this forms only a projective representation of $\mathbb{Z}_2\times \mathbb{Z}$ since the generators anti-commute. Such projective representations are labelled by elements of $H^2(G\times \mathbb{Z},U(1))$. Hence, we see that in our scheme, the topological Floquet classification arises via the braiding statistics of anyons in SymTFT. 

The number
of non-trivial Floquet SPT phases that supplement the static SPT phases are the number of excitation classes for the boundary, since in such cases the eigenstate expectation of the patch-restricted $U_F$ decays exponentially to zero, signaling TTSB (see Equation~\ref{eqn:Disorder_corr_two_cases}). There are $|G|$ such anyons, and they are of the form $(1,\alpha)$ for $\alpha \in \text{Rep}\, G$, since such anyons are not in $\mathcal{M}$ (otherwise there would be some symmetry-breaking) and they braid non-trivially with some $(g,1)$.
This is how SymTFT accounts for the Cl$_0[G]\times G$ classification of Floquet SPT phases of Refs.~\onlinecite{else2016prb,potter2016prx,KeyserlingkSondhi2016floquetSPT}.
The appearance of $W_\ell^{(a)}$ and $W_r^{(a)\dagger}$
at the patch-restricted $U_F$'s endpoints can be interpreted to provide a non-trivial ``pumped charge'', matching this feature in the SPT characterisation
of Ref.~\onlinecite{KeyserlingkSondhi2016floquetSPT}.

Ref.~\onlinecite{KeyserlingkSondhi2016floquetSSB} provides a classification of Floquet fully symmetry-broken phases. For Abelian $G$, they find $|G|$ Floquet SSB phases.
In our framework, a fully symmetry-broken phase corresponds to $H = \{1\}$, with Cl$_0[H]$ being trivial. From Equation~\ref{eqn:sym_broken_class_scheme}, we see that in such a case, we have $\text{Cl}_F(G,\{1\}) = G$. For these phases, we choose $B_\text{phys}$'s Lagrangian subgroup to be $\mathcal{M}_{\text{Rep}\, G}$ and indeed we can find $|G|$ distinct drives corresponding to the excitation classes $(g,1)$ for $g\in G$.

While our scheme captures these two regimes (no $G$-SSB and full $G$-SSB), it also captures additional phases. Specifically, Equation~\ref{eqn:sym_broken_class_scheme} captures the classification %
with partial $G$-symmetry breaking, both for static phases and dynamic phases with
(dual) time-crystalline order. These have not been previously classified in the literature.

\begin{table*}[!t]
    \centering
    \begin{tblr}{
        colspec = {X[c,m] X[c,m] X[c,m] X[c,m]},
        colsep  = 12pt,
        rowsep  = 6pt,
        hline{1,2} = {-}{0.75pt},
        row{1} = {font=\bfseries},
    }
    Static Ordering
        & Lagrangian Subgroups
        & Excitation Classes
        & {Non-trivial\\ spatiotemporal order} \\

    Full SSB
        & $\mathcal{M} = \langle e_1, e_2\rangle$
        & $b \in \langle m_1, m_2 \rangle$
        & $\mathbb{Z}_2\text{TC}\times \mathbb{Z}_2 \text{TC}$ \\[6pt]

    {Partial SSB\\ (to a $\mathbb{Z}_2$ subgroup)}
        &
        $\displaystyle
        \mathcal{M} =
        \begin{cases}
            \langle e_1, m_2\rangle ,\\
            \langle m_1, e_2\rangle ,\\
            \langle e_1 e_2, m_1 m_2\rangle
        \end{cases}$
        &
        $\displaystyle
        b \in
        \begin{cases}
            \langle m_1, e_2\rangle ,\\
            \langle e_1, m_2\rangle ,\\
            \langle e_1, m_1 \rangle
        \end{cases}$
        &
        $\mathbb{Z}_2\text{TC} \times \mathbb{Z}_2 \text{DTC}$ \\[10pt]

    SPT
        & $\mathcal{M} = \langle e_1 m_2, m_1 e_2\rangle$
        & $b \in \langle e_1, e_2\rangle$
        & $\mathbb{Z}_2 \text{DTC}\times \mathbb{Z}_2 \text{DTC}$ \\[6pt]

    Trivial
        & $\mathcal{M} = \langle m_1, m_2\rangle$
        & $b \in \langle e_1, e_2\rangle$
        & $\mathbb{Z}_2 \text{DTC}\times \mathbb{Z}_2 \text{DTC}$
    \end{tblr}

    \caption{The possible Floquet phases for the $G= \mathbb{Z}_2\times \mathbb{Z}_2$-symmetric spin chain, with the associated (via SymTFT) Lagrangian subgroups $\mathcal{M}$, defining the static ordering, and excitation classes $b$, defining the temporal ordering. We also provide the non-trivial spatiotemporal order possible in the associated phases -- either a time crystal (TC) or dual time crystal (DTC) ordering, or both.}
    \label{tab:Z2Z2_phases}
\end{table*}

\subsection{\texorpdfstring{$\mathbb{Z}_2\times \mathbb{Z}_2$ Examples}{Z2 × Z2 Examples}}
Here we investigate the phases of the $G=\mathbb{Z}_2\times \mathbb{Z}_2$-symmetric Floquet spin chain, using the classification scheme developed above. The possible phases are listed in Table~\ref{tab:Z2Z2_phases}.
We highlight known SPT and SSB phases for this system and also point out several partially-symmetry-broken phases that, to our knowledge, have not been examined in the literature. We will consider the $G$-TO as two copies of the toric code TO, with anyons $e_i$ and $m_i$ for $i=1,2$ labelling the layer in which the anyon resides. We begin by focusing on the physical 1D system, embedded at the boundary $B_\text{phys},$ with periodic (untwisted) boundary conditions. 

We have $|G| = 4$ fully symmetry-broken phases ($H=1$) (see Equation~\ref{eqn:sym_broken_class_scheme}). The Lagrangian subgroup for this phase will be $\mathcal{M}_{\text{Rep}\, G} = \langle e_1, e_2\rangle$ and the excitation classes correspond to anyons $b = 1, m_1, m_2, m_1 m_2$. Three of these phases ($b\neq 1$) have time-crystalline signatures given by:
\begin{align}
    C_{kj;\ket{n}}^{(e_i)}(t)
    &= \bra{n} U_F^{t\dagger} W_k^{e_i\dagger} U_F^t W_j^{e_i}\ket{n} %
\\
&= \mathcal{B}(e_i, b)^t \bra{n} W_{jk}^{e_i}\ket{n}.
\end{align}
The correlation function $|\bra{n} W_{jk}^{e_i}\ket{n}|\neq 0$, approaches a constant as $|k-j|\rightarrow \infty$ (taking an absolute value to account for the signs of the LIOMs in eigenstate $\ket{n}$) and so we find period doubling so long as $e_i$ and $b$ braid non-trivially.

We then have $|\text{Cl}_0 [G] \times G | = |\mathbb{Z}_2 \times \mathbb{Z}_2 \times \mathbb{Z}_2| = 8$ fully symmetry-preserved Floquet SPT phases. The static classification $\text{Cl}_0 [G]=\mathbb{Z}_2$ corresponds to choosing $\mathcal{M} = \mathcal{M}_G$ (trivial SPT) or $\mathcal{M} = \langle e_1 m_2, m_1 e_2\rangle$ (non-trivial SPT). For each of these, we can consider the four excitation classes given by elements of, e.g., $\langle e_1 , e_2\rangle$ (we consider these as representatives of equivalence classes, under fusion with anyons from $\mathcal{M}$, c.f.~Sec.~\ref{sec:CFD}). The excitations $\overline{W}^{(b)}$ in $U_F$ can be absorbed into $B_\text{ref}$, making all these phases seem equivalent when periodic boundary conditions are imposed. However the interval-restricted Floquet unitary $U_F^{(b,\mathcal{M})}(\ell,r)$ displays non-trivial features of these phases.
In this operator, we can deform $\overline{W}^{(b)} \mapsto W_\ell^{(b)} W_r^{(b)\dagger}$ (which indicates charge pumping between the boundaries, in the terminology of Ref.~\onlinecite{KeyserlingkSondhi2016floquetSPT}).
We have two independent symmetries, given by $P_1 = \overline{W}^{(m_1)}$ and $P_2 = \overline{W}^{(m_2)}$, which can be flipped by $W_\ell^{(b)}$ and $W_r^{(b)\dagger}$ ($b\in \lbrace e_1, e_2, e_1e_2\rbrace$), %
leading to four known possibilities: $W_\ell^{(b)}$ could anti-commute with $P_1$, $P_2$, both or neither. 

What is not previously well-known is that these non-trivial Floquet SPT phases are dual time-crystal phases. Let us introduce gauge degrees of freedom in the same way as we did for the $\mathbb{Z}_2$-symmetric system, noting that $\overline{W}^{(b)}$ has support on these gauge qubits (for $b = e_1,e_2,e_1e_2$). Then dual order parameters signal long-range spatiotemporal order, as explained in Sec.~\ref{sec:ST_order_eigenstates}. We will provide further details below.

We finally have 12 remaining, previously unexplored phases. These correspond to the partial breaking of $G\rightarrow H = \mathbb{Z}_2$. There are three ways to select a $\mathbb{Z}_2$ subgroup of $\mathbb{Z}_2 \times \mathbb{Z}_2 = \langle (1,0), (0,1)\rangle$. We have $(\mathbb{Z}_2)_1 = \langle (1,0)\rangle$,  $(\mathbb{Z}_2)_2 = \langle (0,1)\rangle$ and $(\mathbb{Z}_2)_{12} = \langle (1,1)\rangle$. These respectively correspond to Lagrangian subgroups $\langle m_1 , e_2 \rangle$, $\langle e_1 , m_2 \rangle$ and $\langle m_1 m_2 , e_1 e_2 \rangle$. Each of these will have $|G| = 4$ excitation classes and no static SPT classification ($\text{Cl}_0 [\mathbb{Z}_2] = 1$). 

Take the $H = (\mathbb{Z}_2)_1$ unbroken symmetry as an example: $\mathcal{M} = \langle m_1 , e_2\rangle$. The possible anyonic excitations are (taking a single representative from each equivalence class): $1$, $e_1$, $m_2$ and $e_1 m_2$. For $U_F = \overline{W}^{(m_2)} e^{-i H_\mathcal{M}}$, we can once again find time-crystalline order with period-doubling, as discussed in Section~\ref{sec:ST_order_eigenstates}. Meanwhile, for $U_F = \overline{W}^{(e_1)} e^{-i H_\mathcal{M}}$, we can once again find dual-time-crystalline order. %
For $U_F = \overline{W}^{(e_1 m_2)} e^{-i H_\mathcal{M}}$, we find both a regular and dual time crystal. The same conclusions hold for $H = (\mathbb{Z}_2)_2$ and $H = (\mathbb{Z}_2)_{12}$.

We can realise these phases in a 1D spin chain of $N$ sites (where $N$ is even), and $N$ gauge degrees of freedom, via the following Floquet unitaries:
\begin{align}\label{eq:PSSBTC}
    U_F &= e^{-i H_1}e^{-i H_0},\\
    H_0 &= \sum_{j=1}^{N/2} J_{j}^{(1)} Z_{2j-1}\tilde{Z}_{2j-1}Z_{2j+1} + J_j^{(2)}Z_{2j}\tilde{Z}_{2j}Z_{2j+2} + \nonumber\\
    & \qquad \qquad \qquad d_j^{(1)}Z_{2j-1}\tilde{Z}_{2j-1}Z_{2j}\tilde{Z}_{2j}Z_{2j+1}Z_{2j+2},
    \\
    H_1 &= \sum_{j=1}^{N/2} h_j^{(1)}X_{2j-1} + h_j^{(2)} X_{2j} + d_j^{(2)} X_{2j-1}X_{2j}\nonumber\\
    &\quad  + w_j (Z_{2j-1}\tilde{Z}_{2j-1}X_{2j}Z_{2j+1} + Z_{2j}\tilde{Z}_{2j}X_{2j+1}Z_{2j+2}).  
\end{align}
We can add perturbations to these Hamiltonians, but here we just consider fixed-point drives. The model has closed boundary conditions, so we set $Z_{N+1} \equiv Z_1,\, Z_{N+2}\equiv Z_2$ and similarly for $X$ operators. From a SymTFT picture, the odd (even) sites in this model correspond to sites belonging to the first (second) layer of $\mathbb{Z}_2$-toric code. 
To obtain fixed-point drives that exemplify the phases of interest, we choose either $J_j^{(i)}$, $h_j^{(i)}$, $d_j^{(i)}$ or $w_j^{(i)}$
to be strongly disordered (with large distribution widths compared to any non-zero perturbations, cf.~Sec.~\ref{sec:AFFPD}),
for $i=1,2$---the associated terms in the Hamiltonians will be the LIOMs for the phase. We then choose the remainder of the parameters to be equal to $0$ or $\pi/2$. 

We summarise how these parameter choices correspond to the phases listed in Table~\ref{tab:Z2Z2_phases}. We first set $d^{(i)}_j, w^{(i)}_j = 0$. Setting $J^{(i)}$, for $i=1,2$, to be non-zero and strongly disordered
corresponds to the fully SSB phase, while setting $h_j^{(1)}$ ($h_j^{(2)}$) equal to $\pi/2$ results in a factor of $\prod_j X_{2j-1} \equiv \overline{W}^{(m_1)}$ ($\prod_j X_{2j} \equiv \overline{W}^{(m_2)}$) in $U_F$. These choices account for the $\mathbb{Z}_2\text{TC}\times\mathbb{Z}_2\text{TC}$ phases (see Table~\ref{tab:Z2Z2_phases}). Unequal time correlators:
\begin{align}
    \bra{n}Z_{2j-1}(t)Z_{2k-1}\ket{n} &= \mathcal{B}(e_1,b)^t \bra{n}Z_{2j-1}Z_{2k-1}\ket{n}\\
    \bra{n}Z_{2j}(t)Z_{2k}\ket{n} &= \mathcal{B}(e_2,b)^t \bra{n}Z_{2j}Z_{2k}\ket{n}
\end{align}
reflect which of these phases we are in, where $b = m_1, m_2,m_1m_2$ is the anyon in $\overline{W}^{(b)}$ that appears in $U_F$.

Now consider one of the partially SSB phases from Table~\ref{tab:Z2Z2_phases}, e.g., with $\mathcal{M} = \langle e_1, m_2\rangle$. To obtain this Lagrangian subgroup, we set $J_j^{(1)}$ and $h_j^{(2)}$ non-zero and strongly disordered. The other terms are set according to the choice of $b$. To set $b = m_1$, we take $h_j^{(1)} = \pi/2$ and set all other terms to zero, obtaining $\overline{W}^{(m_1)} = \prod_{j=1}^{N/2} X_{2j-1}$, while taking $b = e_2$ requires setting $J_j^{(2)} = \pi/2$ and all others zero, obtaining $\overline{W}^{(e_2)} = \prod_{j=1}^{N/2}\tilde{Z}_{2j}$. Finally, $b = m_1e_2$ ($b=\mathbf{1}$) corresponds to setting both $J_j^{(2)}$ and $h_j^{(1)}$ to $\pi/2$ (zero), with all other terms set to zero. The $m_1$ component of $b$ corresponds to the $\mathbb{Z}_2\text{TC}$ phases, while the $e_2$ components correspond to the $\mathbb{Z}_2\text{DTC}$ phases. The two unequal time correlators through which we can observe these two time-crystalline signatures are:
\begin{align}
    \bra{n}Z_{2j-1}(t)Z_{2k-1}\ket{n} &= \mathcal{B}(e_1,b)^t \bra{n}Z_{2j-1}Z_{2k-1}\ket{n}\\
    \bra{n}\tilde{X}_{2j}(t)\tilde{X}_{2k}\ket{n} &= \mathcal{B}(m_2,b)^t \bra{n}\tilde{X}_{2j}\tilde{X}_{2k}\ket{n}.
\end{align}
The second of these is built from operators that are interpreted as twisting (i.e., introducing anti-periodicity in) boundary conditions in the even-site spin chain at time 0, then un-twisting those boundary conditions at time $t$.

The second (third) set of partially-SSB phases, corresponding to $\mathcal{M} = \langle m_1, e_2\rangle$ ($\mathcal{M} = \langle e_1e_2, m_1m_2\rangle$) are obtained by taking $J_j^{(2)}$ and $h_j^{(1)}$ ($d^{(1)}_j$ and $d^{(2)}_j$) strongly disordered. $e_1$-components of $b$ are obtained by setting $J_j^{(1)} = \pi/2$, while $m_i$-components are obtained by setting $h_j^{(i)} = \pi/2$. %
(Each of these choices correspond just to particular representatives of the excitation classes and are not unique.)
The static SPT ordering is obtained by taking only $w^{(i)}$ to be strongly disordered,
while the trivial phase results from taking only $h^{(i)}$ to be strongly disordered.
We obtain the $b=e_i$ excitations for these two static orderings by setting $J^{(i)}_j = \pi/2$.

\section{Floquet Phases from Twisted Quantum Doubles}\label{sec:TQD_phases}

We now explore the possibility for $G$-symmetric Floquet phases extending beyond the classification scheme introduced above, by considering MBL boundaries of 2D twisted quantum doubles (TQDs)~\cite{Dijkgraaf_Witten_1990,Hu_2013,Ellison_2022_TQDs}. We explicitly consider the $\mathbb{Z}_2$-symmetric boundary of the double semion model~\cite{Levin-Wen,Levin_gu_DS_model,Ellison_2022_TQDs} which, as we shall argue, already captures some key general features of TQD-inspired 1D Floquet drives.
We find SSB and TTSB phases on the boundary of the double semion model that differ non-trivially from those found in other $\mathbb{Z}_2$-symmetric systems, owing to the symmetry having a non-onsite action. One non-trivial feature is that the TTSB phase displays exact $\pi$-pairing even at finite sizes under symmetry-breaking perturbations, but \textit{only} in a system with open boundaries. 

The double semion model has anyon content that is described as follows. There exist two independent non-trivial anyons (semions), $s$ and $\bar{s}$, along with an anyon resulting from their fusion, $s\bar{s}$. These obey the following fusion rules and braiding statistics:
\begin{align}
    &s\times s = \bar{s}\times \bar{s} = \mathbf{1}, \; s\times \bar{s} = s\bar{s},\\
    &e^{i\theta_s} = e^{-i\theta_{\bar{s}}} = i,\; e^{i\theta_{s\bar{s}}} = 1,\\
    &\mathcal{B}(s,s) = \mathcal{B}(\bar{s},\bar{s}) = -1,\;
    \mathcal{B}(s,\bar{s}) = 1.
\end{align}
Since the model has a single non-trivial boson,
there is only one Lagrangian subgroup, $\mathcal{M} = \lbrace \mathbf{1}, s\bar{s}\rbrace$, and hence one gapped boundary. On this boundary, $s$ and $\bar{s}$ particles are identified, and hence, e.g., the dynamics of $s$ particles
captures the low-energy physics on the boundary~\cite{JiWen2020categorical,Wen_2013_GaugeAnomalies}. 

\begin{figure}
    \centering
    \includegraphics[width=0.9\linewidth]{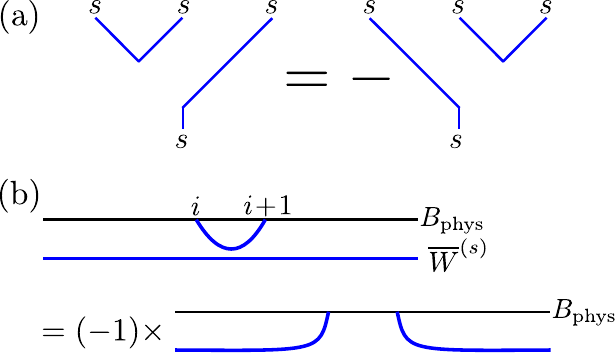}
    \caption{Reconnecting semion strings. The reconnection sign~\cite{Levin-Wen}, in panel (a), implies that starting from a $(1s1)$ configuration on three adjacent sites, hopping the $s$ to the right
    and then creating an $s$ pair to get $(sss)$ gives $(-1)$ times the state we get by moving the $s$ to the left and then creating the $s$ pair at the other two sites~\cite{JiWen2020categorical}. [The same holds for the reverse process starting from $(sss)$.] It also implies, as shown in (b), that the action of the $\mathbb{Z}_2$ symmetry $P=\overline{W}^{(s)}$ on a computational basis state $\ket{\{z_j\}}$ of $B_\text{phys}$ flips all spins and multiplies by $(-1)^{(n/2)}$ where $n$ is the number of domain walls~\cite{JiWen2020categorical}, created in pairs by the $X_i$ in $W^{(s)}_{i,i+1}$ (for various $i$) acting on a state with all spins aligned.
    }
    \label{fig:DS_model_strings}
\end{figure}

In what follows we focus on the physics of such a boundary with $\mathcal{M} = \lbrace \mathbf{1}, s\bar{s}\rbrace$, which we describe by identifying $W^{(s\bar{s})}_{j,j+1}\equiv Z_jZ_{j+1}$. In this picture, $s$ particles are analogous to the $m$ particles in the $\mathbb{Z}_2$ case in that they also correspond to domain walls on the links of the lattice. However, the operators used to create, annihilate and hop them must adhere to the properties implied by reconnecting $s$-anyon strings [see Fig.~\ref{fig:DS_model_strings}(a)]. Pair creation/annihilation and hopping operators with suitable properties are given by~\cite{JiWen2020categorical,Wen_2013_GaugeAnomalies}
\begin{align}
    a_j &= Z_{j\pm 1} X_j(1 + Z_{j-1}Z_{j+1}),\\
    t_j &= X_j(1 - Z_{j-1}Z_{j+1}),
\end{align}
where $t_j$ projects to the sector with a single domain wall on the links adjacent to site $j$ and then $X_j$ hops the $s$ particle between these links. Similarly, $a_j$ projects to the sector with zero (or two) domain walls on these links and then $X_j$ creates (or annihilates) an $s$ pair there while $Z_{j\pm 1}$ takes care of the sign in Fig.~\ref{fig:DS_model_strings} (we get the same $a_j$ for either sign in $Z_{j\pm 1}$). 

Using these ingredients, one can write a Hamiltonian for the boundary of the double semion model as~\cite{JiWen2020categorical,Wen_2013_GaugeAnomalies}:
\begin{align}
\begin{split}
    H_\text{DS,1D} = \sum_{j=1}^L -J_j Z_j Z_{j+1} - h_j^{(1)} (X_j - Z_{j-1}X_j Z_{j+1})\\
    - h_j^{(2)} Z_{j-1}(X_j + Z_{j-1}X_jZ_{j+1}),
\end{split}\label{eqn:HDS_1D}
\end{align}
where we impose periodic boundary conditions, setting $Z_{L+1} \equiv Z_1$, $Z_{0}\equiv Z_L$.

This Hamiltonian is again $\mathbb{Z}_2$ symmetric, but the symmetry operator $P$ is not the usual global spin-flip $\prod_j X_j$. Rather, it corresponds to $\overline{W}^{(s)}$ that, in terms of the 1D system, again involves $\prod_j X_j$ but also accounts for the reconnecting sign in Fig.~\ref{fig:DS_model_strings}(b). The latter is achieved if, say, each $\uparrow\to\downarrow$ domain wall in a state incurs a minus sign, as implemented by $Z_{j+1}CZ_{j,j+1}$ for each bond $j,j+1$ (here $CZ_{j,j+1}$ is the controlled-Z between the corresponding sites). Hence we now have the non-onsite (anomalous) $\mathbb{Z}_2$ symmetry~\cite{JiWen2020categorical,Wen_2013_GaugeAnomalies}
\begin{align}\label{eqn:Non-onsite_sym}
    P = \prod_{i=1}^L X_i \prod_{i=1}^L Z_{i+1} CZ_{i,i+1}.
\end{align}

We next study Floquet drives associated to TQD boundaries, i.e., for 1D systems with such anomalous $\mathbb{Z}_2$ symmetry. The drives we shall consider are analogous to the $\mathbb{Z}_2$ SG and $\pi$SG Floquet unitaries.
We choose the $J_j$ uniformly at random from an interval %
$[\frac{1}{2}\bar{J}, \frac{3}{2}\bar{J}]$,
and we similarly
choose the $h_j^{(i)}$ from interval $[\lambda \frac{1}{2}\bar{J}, \lambda \frac{3}{2}\bar{J}]$ for some perturbation strength $\lambda$. By choosing $\lambda \ll 1$, the resulting phase is many-body localised, since for the fixed-point case, the $Z_jZ_{j+1}$ provide an extensive set of LIOMs, which are transformed by quasi-local unitaries when the perturbation is turned on~\cite{Vasseur2016,MBL_spin_chains_2016}. This MBL
phase is a symmetry-breaking spin-glass (SG). %
Based on our ansatz from the previous Section, we can write two fixed-point drives based on this system:
\begin{align}
    U_\text{DS}^{\text{static}} &= e^{-iH_{\text{DS,1D}}^{(0)}},\\
    U_\text{DS}^{\text{TTSB}} &= Pe^{-iH_{\text{DS,1D}}^{(0)}},
\end{align}
where $P$ is from Equation~\ref{eqn:Non-onsite_sym} and $H_\text{DS,1D}^{(0)}$ is $H_\text{DS,1D}$ with $h^{(i)}_j=0$. 
Both have SSB order, but $U_\text{DS}^\text{TTSB}$ additionally has time-translation symmetry breaking:
\begin{align}
\begin{split}
    \bra{\lbrace n_\alpha\rbrace, p} (U_\text{DS}^{\text{TTSB}\dagger})^t Z_j (U_\text{DS}^\text{TTSB})^t Z_k \ket{\lbrace n_\alpha\rbrace, p} = \\
    (-1)^t \bra{\lbrace n_\alpha\rbrace, p} Z_j Z_k \ket{\lbrace n_\alpha\rbrace, p} = (-1)^t \cdot c(j,k,\{n_\alpha\})%
\end{split}
\end{align}
where $n_\alpha$ is the eigenvalue of the LIOM $Z_\alpha Z_{\alpha+1}$,
and $p$ is the eigenvalue of $P$, while $c(j,k,\{n_\alpha\})=\pm 1$ is a constant (in $t$) that depends on the values of the LIOMs between $j$ and $k$. Hence, we find period-doubling for this drive, which we expect to also persist
for weak perturbations, 
$h_j^{(i)}\neq 0$.
This manifests as $\pi$-splitting in the spectrum, as we are now familiar with. Meanwhile, we find eigenstate degeneracy in the spectrum of $U_\text{DS}^{\text{static}}$; both $\pi$ pairing and the degeneracy receive corrections exponentially small in the system size $L$ for small nonzero $h_j$. 

\begin{figure*}
\begin{tikzpicture}
\node[inner sep=0pt] at (-6,0) {\includegraphics[width=0.32\linewidth]{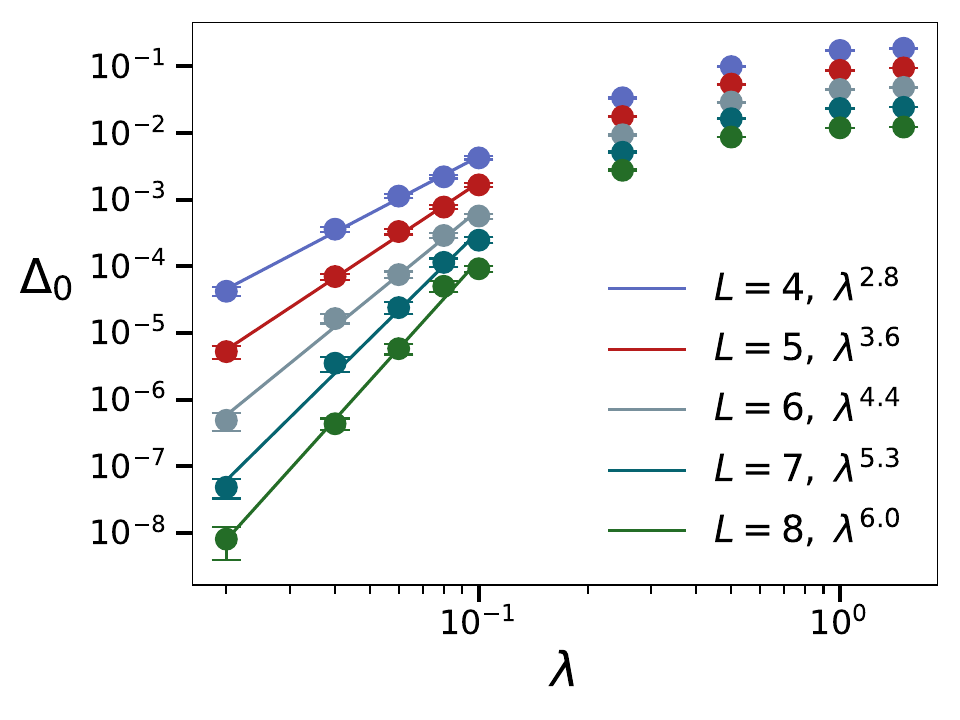}};
\node[inner sep=0pt] at (0,0) {\includegraphics[width=0.32\linewidth]{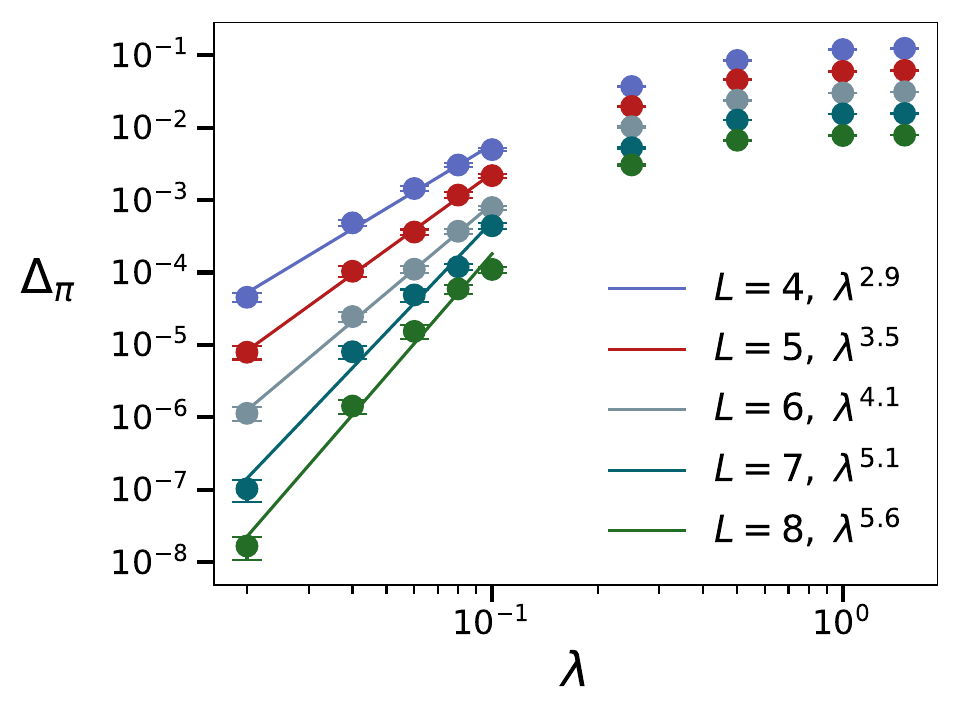}};
\node[inner sep=0pt] at (6,0) {\includegraphics[width=0.32\linewidth]{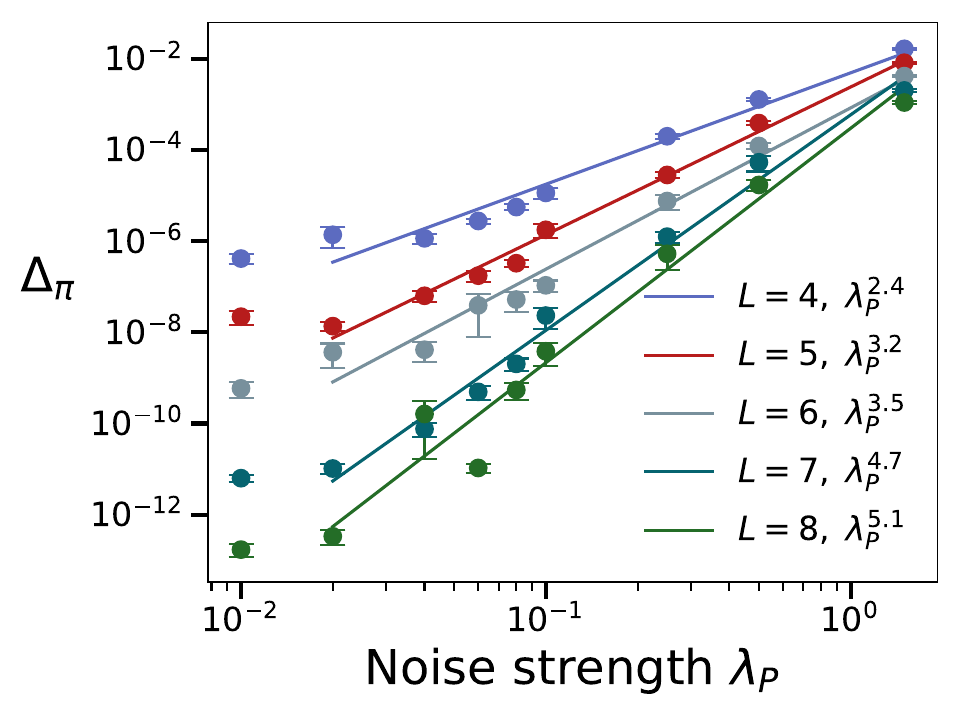}};

\node[inner sep=0pt] at (-5.5,-2.3) {(a)};
\node[inner sep=0pt] at (0.5,-2.3) {(b)};
\node[inner sep=0pt] at (6.5,-2.3) {(c)};
\end{tikzpicture}
\caption{Average eigenvalue splitting/$\pi$-splitting in the spectrum of (a) $U_\text{DS}^\text{static}$, (b) $U_\text{DS}^\text{TTSB}$, and (c) $U_\text{DS}^\text{TTSB}$ with a noisy implementation of $P$. $L$ denotes the length of the spin chain. $J_j$ were chosen uniformly at random from range $[\frac{1}{2}\bar{J}, \frac{3}{2}\bar{J}]$ for $\bar{J} = 3\pi/8$. Averages were taken over 1000 disorder realisations. (a) $\Delta_0$ is the average splitting between minimally-separated eigenvalues of the Floquet unitary.
Perturbation couplings $h_j^{(i)}$ are chosen uniformly at random from $[-\lambda \bar{J}, \lambda \bar{J}]$. 
(b) $\Delta_\pi$ is the average $\pi$-splitting in the spectrum.
Both perturbation couplings $h_j^{(i)}$ and local field strengths for $X$ and $Z$ fields are chosen at random from the same distribution as above.
The results demonstrate absolute stability in the presence of non-symmetric perturbations. 
(c) Perturbation couplings $h_j^{(i)}$ are chosen as above, for $\lambda = 0.005$. Noisy $P$ is chosen at random with noise strength $\lambda_P$,
as explained in the main text.}
    \label{fig:DS_numerics}
\end{figure*}

We confirm these predictions numerically in Fig.~\ref{fig:DS_numerics}(a) and (b). For Fig.~\ref{fig:DS_numerics}(a), we use the perturbed (and disordered) Hamiltonian $H_\text{DS,1D}$ from Eq.~\ref{eqn:HDS_1D}, with perturbation strength parametrised by $\lambda$.
We expect to find that, as we increase the system size $L$, the order of perturbation theory at which splitting between degenerate eigenstates occurs increases linearly, and hence we expect that the average splitting across the whole spectrum, $\Delta_0$, decreases as $\lambda^{\alpha L}$, with $\alpha > 0$ a constant. 
We do indeed numerically find an exponentially
small spectral splitting, $\Delta_0$. We calculated $\Delta_0$ via the minimum angular separation between eigenvalues averaged over the entire spectrum, and also averaged over 1000 disorder realisations.
The best-fit lines show a close to linear-in-$L$ exponent, up to some minor deviations that we attribute to finite-size effects and the effect of a relatively small number of samples.

We expect $U^{\text{TTSB}}_{\text{DS}}$ to show $\pi$-spectral pairing up to exponentially small splitting, similar to the effect observed with $\Delta_0$ above. In Fig.~\ref{fig:DS_numerics}(b), we present the $\pi$-spectral splitting
of $U_\text{DS}^\text{TTSB}$, $\Delta_\pi$, calculated in a similar way to $\Delta_0$, but taking the minimum angular separation between eigenvalues $e^{i\varepsilon_1}$ and $-e^{i\varepsilon_2}$, averaged again over the entire spectrum and over 1000 disorder realisations. Unlike the spin-glass case, however, we expect that the $\pi$-pairing affords the TTSB system an absolute stability to symmetry-breaking perturbations. Hence, we included in $H_\text{DS,1D}$ both a symmetric perturbation (from Eq.~\ref{eqn:HDS_1D}), and non-symmetric local field terms: $\sum_i w_i X_i + \sum_i m_i Z_i$. The strengths $w_i$ and $m_i$ were chosen from the same distribution as $h_j^{(i)}$. We found that the $\pi$-splitting still showed an exponential suppression with increasing $L$, despite this non-symmetric perturbation. This is to be expected for weakly symmetry-breaking perturbations in systems %
with $\pi$-paired spectra~\cite{KeyserlingkKhemaniSondhi2016stability}.

Finally, we consider the effect of implementing the symmetry $P$ in $U_\text{DS}^\text{TTSB}$ via a finite-depth circuit of noisy gates, as would be done in an experiment. Specifically, we consider the following:
\begin{align}\label{eqn:P_epsilon}
    P_\varepsilon = \prod_j \exp\left(i\frac{\pi + \varepsilon^{(1)}_j}{2} X_j\right) \prod_j \exp \left( i\frac{\pi/2 + \varepsilon^{(2)}_j}{2} Z_jZ_{j+1}\right)
\end{align}
for $\varepsilon_j^{(i)}$ randomly chosen uniformly from $[-\lambda_P \bar{J}, \lambda_P \bar{J}]$ for noise strength $\lambda_P$ and $\bar{J}$ the strength of the $J_j$ disorder in the Hamiltonian. It can be shown that
$P_{\varepsilon \rightarrow 0}$ is equal to $P$, up to a phase:
we have that $\exp(i\frac{\pi}{4}Z_jZ_{j+1}) \equiv S_jS_{j+1}CZ_{j,j+1}$ (up to a factor of $e^{i\pi/4}$) and hence all $ZZ$-exponential factors combine to give $\prod_j Z_{j+1}CZ_{j,j+1}$. 
As we see in Fig.~\ref{fig:DS_numerics}(c), for small values of $\lambda_P$, we also find an exponential suppression in $\Delta_\pi$ with $L$. Note that for $\lambda_P \lesssim \lambda$ (the strength of the symmetric perturbation in $H_\text{DS,1D}$), $\Delta_\pi$ is dominated by $\lambda$ and we find no further suppression. We also note that the noise in the $ZZ$ rotation gate did not alter the $\pi$-splitting. This can be understood from the form of Equation~\ref{eqn:P_epsilon}: the $ZZ$ part is expressible in terms of the LIOMs.

\subsection{Non-onsite symmetry and boundary conditions}

Many of these features, including absolute stability against symmetry-breaking perturbations, are similar to those found in the $\mathbb{Z}_2$ time crystals examined in Section~\ref{sec:Illustrative_example}. However, the drive $U_\text{DS}^\text{TTSB}$ is distinct from its non-anomalous $\mathbb{Z}_2$ counterpart, $U_F^{(m,\mathcal{M}_e)} = \overline{W}^{(m)} e^{-iH_e}$,
on the boundary of the toric code phase. This is because the non-onsite symmetry cannot be gauged~\cite{Wen_2013_GaugeAnomalies}, which affords the system unique properties when the symmetry is enforced, particularly for a system with open boundary conditions. %
The anomalous nature of the symmetry (which we will explain in this section) also prevents the drive $U_\text{DS}^{\text{TTSB}}$ from being able to be generated by local, symmetric Hamiltonian terms.

Indeed, one of the consequences of this non-onsite symmetry is that the symmetry fails to be a representation of $\mathbb{Z}_2$ (even a projective one) on a system with open boundary conditions. Indeed, to consider such a system, we set $J_L = h_1^{(i)} = h_L^{(i)} = 0$ in Eq.~\ref{eqn:HDS_1D} and define the following symmetry operator for the system with open boundary conditions (OBC):
\begin{align}
    P_\text{OBC} = \prod_{i=1}^L X_i \prod_{i=1}^{L-1} Z_{i+1}CZ_{i,i+1}.
\end{align}
This is equivalent to an $s$-string operator with endpoints in the double semion model. The symmetry operator does not square to the identity: $P_\text{OBC}^2 = Z_1 Z_L$. 

We now use $P_\text{OBC}$ to study the TTSB drive with open boundary conditions,
\begin{align}
    U_\text{DS,OBC}^{\text{TTSB}} = P_\text{OBC} e^{-iH_{\text{DS,OBC}}} .
\end{align}
Over two periods, the drive becomes
\begin{align}
    (U_\text{DS,OBC}^{\text{TTSB}})^2 &= Z_1 Z_L e^{-2iH_\text{DS,OBC}}.
\end{align}
Note that the value of $Z_1Z_L$ does not change over time, even away from the fixed-point (under only symmetric perturbations), since it is an exact symmetry (in fact, $Z_1Z_L$ is a product of LIOMs even away from the fixed-point).
Given a fixed value for $Z_1 Z_L = \pm 1$, we find $\pi$-pairing in the eigenstates of $U_\text{DS,OBC}^\text{TTSB}$,
since there are two distinct eigenstates of $P_\text{OBC}$ with eigenvalues differing by $-1$ (the eigenvalues are either $\pm 1$ for $Z_1Z_L = +1$ or $\pm i$ for $Z_1 Z_L = -1$). Hence each of the
$Z_1Z_L=\pm 1$ sectors, individually, exhibit $\pi$-pairing,
without necessarily any consistent pairing between the two sectors (this is because, while the eigenvalues of $P_\text{OBC}$ differ by a factor of $i$ between the sectors, the LIOM eigenvalues will also change, since the two sectors differ in the number of domain walls). 

When the symmetry is enforced, the $\pi$-pairing for the system with OBCs becomes exact. 
This is because, even at $\sim L^{\text{th}}$ order of perturbation theory, we cannot construct the spin-flip transformation with local terms, since the terms $X_1$ and $X_L$ cannot appear in the perturbation (they do not commute with $Z_1$ and $Z_L$ respectively, and so do not commute with $P_\text{OBC}^2$). Note that since $P_\text{OBC} = \pm \prod_{j=1}^{L} X_j$ in any sector with a fixed number of domain walls, a symmetric perturbation can produce a $\pi$-splitting if terms of the perturbation can multiply together to $\prod_{j=1}^{L}X_j$, which is not possible in this case but is possible in the case of periodic boundary conditions. Hence, the anomalous symmetry means that the system has exact $\pi$-pairing under symmetric perturbations, \textit{only} in the case of open boundary conditions.

We may observe $(\pi/2)$-pairing in the spectrum too. This can be obtained by omitting a bond $Z_{\lfloor L/2 \rfloor} Z_{\lfloor L/2 \rfloor + 1}$, along with any perturbation that straddles this bond, in the Hamiltonian. Doing so, all four symmetry sectors ($P_\text{OBC} = \pm 1, \pm i$) become degenerate in the spectrum of $H_\text{DS,1D}$ (since a domain wall can exist with zero energy penalty in the interval $[\lfloor L/2\rfloor,\lfloor L/2\rfloor+1]$) -- hence, we see (in the absence of perturbations) exact $\pi/2$-pairing in the spectrum of $U^{\text{TTSB}}_\text{DS,OBC}$. This pairing is, however, split at roughly linear order (independent of the system size) by non-symmetric perturbations, or symmetric perturbations that straddle the missing bond, which can cause fluctuations between domain wall sectors. In Appendix~\ref{app:DS_model_OBCs}, we demonstrate this numerically.

\section{Experimental Considerations}\label{sec:Experimental_demo}

\subsection{Requirements}

We suggest
ways in which these phases can be obtained in experiment on a quantum processor. Experimental demonstration of the TTSB phases we described requires the ability to probe connected correlations of $W_k^{(a)\dagger}(t)W_j^{(a)}$ in suitable initial states [this, essentially, amounts to probing a suitable spectral average of $C^{(a)}_{ij;\ket{n}}$ in Eq.~\eqref{eq:dyncorr}]~\cite{ippoliti2021manybody,mi2022time,Wahl_2024}. This capability was demonstrated, for the $\mathbb{Z}_2$ time crystal ($\pi$SG) phase, in Ref.~\onlinecite{mi2022time}. The Floquet unitary is implemented by precisely controlled two-qubit gates of the form $\exp(-i\frac{\phi}{4}Z_i Z_j)$ for phase $\phi$, along with single-qubit rotations. The challenge to probing phases with larger symmetry groups is that qubits may
need to be replaced by qudits, and multi-qudit gates need to be engineered.

A generic fixed-point unitary for a $G$-Floquet phase would involve terms built out of combinations of generalised Pauli operators $X^{(j)}_i, Z^{(j)}_i$ for qudits; these correspond to $W_{i,i+1}^{(m_j)}$ and $W_{i}^{(e_j)}$, respectively, in terms of the $\mathbb{Z}_k$ toric code layer decomposition via $G \simeq \times_{j=1}^p \mathbb{Z}_{k_j}$ from Sec.~\ref{sec:quantum_doubles}; see also Equations~\ref{tab:BAsummary}. 
Realising the Floquet unitary requires the ability to implement multi-qudit interactions (already on-site $G$-symmetric terms may feature  up to $p$-layer couplings).
Various experimental platforms have recently demonstrated qudit-based quantum computation \cite{Ringbauer_2022_universal, Tripathi_2025, Chi_2022}. Trapped-ion based setups in particular can realise universal gate sets for qudits and hence can also achieve multi-site interactions \cite{Ringbauer_2022_universal}.
Moreover, programmable $N$-body interactions have already been demonstrated for qubits in trapped-ion systems~\cite{Katz_PRXQ, Katz_2022_N-body}, and multi-qubit gates (involving $>2$ qubits) demonstrated in neutral atom arrays~\cite{Multi_qubit_gates_neutral_atoms_2019,Pelegrí_2022} and superconducting architectures~\cite{Superconducting_three_qubit_gate_2020,3_qubit_gate_superconducting_2025}.

Superconducting platforms have also realised qudits with $d=3,4$~\cite{Tripathi_2025}.
We expect that cross-resonance gates, along with single-qudit gates, could be used to generate the required terms in the Floquet unitaries~\cite{Qudit_gates_PRXQ_2023}. For groups with multiple cyclic factors, such as 
$\mathbb{Z}_{k_1}\times \mathbb{Z}_{k_2}$, $G$-symmetric phases can be achieved via parallel chains of qudits. %

Measuring the correlator requires controlled operations; for $G$ based on $\mathbb{Z}_2$ factors, this can directly adapt the protocols used to detect the $\pi$SG~\cite{mi2022time} or anyon interferometry in the toric code~\cite{Satzinger2021}. 
More generally, since one needs at least the target of the controlled-operation to be a qudit, or possibly a tensor product of multiple $Z_{k_j}$ factors, 
one needs the ability to implement controlled gates with multiple target qudits.

The correlator Eq.~\eqref{eq:dyncorr}, in general, involves operators with $a=(g,\alpha)$, $g\neq1$ which symmetry twist boundary conditions. This symmetry twisting is implemented via a gauge qudit [a link degree of freedom, generalising our discussion leading to Eq.~\eqref{eq:Xtilde}].

\subsection{Experimental Protocol}

We now describe an interferometric protocol to measure the expectation of $W_k^{(a)\dagger}(t)W_j^{(a)}$ in initial state $\ket{\psi}_\text{sys}$. Our protocol adapts the schemes from time-crystal and toric code experiments~\cite{mi2022time,Satzinger2021}.  
For ease of presentation, we consider $G=\mathbb{Z}_{N_1} \times \mathbb{Z}_{N_2}$ and $W_j^{(a)}\equiv Z_j^\alpha$ ($\alpha\in \text{Rep}G$). $Z_j^\alpha$ is defined to act on basis states as $Z_j^\alpha \ket{g} = \alpha(g)\ket{g}$.
We use basis states $\ket{(n_1,n_2)}$ with $n_i \in \mathbb{Z}_{N_i} \equiv \{ 0,1,2,..,N_i -1\}$ on each site. We also introduce an ancilla qubit which will be used for measurement, initialised in the state 
$(\ket{0} + \ket{1})/\sqrt{2}$. (Equivalently, we can use an $N$-dimensional qudit in state $(\ket{N-1}+\ket{0})/\sqrt{2}$ in a scheme where controlled operations are triggered when the control is in $\ket{N-1}$.) 
First, we perform a qubit-to-qudit controlled Pauli $Z^{\alpha}$ that enacts $Z^{\alpha}$ on the target qudit if and only if the control qubit is in state $\ket{1}$ (and similarly for $\ket{N-1}$ for an ancilla qudit).
We use the system qudits on site $j$ as target and the ancilla as control. The system  + ancilla hence branches into the superposition
\begin{equation}
    \frac{1}{\sqrt{2}}\left[\ket{\psi}_{\text{sys}}\ket{0}_{a} + Z^{\alpha}_j \ket{\psi}_{\text{sys}}\ket{1}_{a}\right].
\end{equation}
Next, we evolve the system + ancilla for $t$ periods of the Floquet drive $U^{(b,\mathcal{M})}_F$,
followed by a controlled $Z^{-\alpha}$ now at site $k$ (note that $Z^{-\alpha} = (Z^\alpha)^\dagger$). The system + ancilla state is now
\begin{align}
     \frac{1}{\sqrt{2}}\left[\left(U^{(b,\mathcal{M})}_F\right)^t\ket{\psi}_{\text{sys}}\right. &\ket{0}_{a} +  \nonumber\\
     &\left. Z^{-\alpha}_k\left(U^{(b,\mathcal{M})}_F\right)^tZ^{\alpha}_j \ket{\psi}_{\text{sys}}\ket{1}_{a}\right].
\end{align}
We now consider the expectation value of the ancilla operator $X_a$ (or $X_a + X_a^\dagger$ if the ancilla is a qudit of dimension $d>2$): %
\begin{equation}
\label{corr_meas}
\langle X_a\rangle = \text{Re}\left[\bra{\psi}_\text{sys}  \left(U^{(b,\mathcal{M})\dagger}_F\right)^{ t}Z^{-\alpha}_k\left(U^{(b,\mathcal{M})}_F\right)^tZ^{\alpha}_j \ket{\psi}_\text{sys}\right].
\end{equation}
This is (the real part of) the correlator we are after. Depending on the (ensemble of) $\ket{\psi}$ and $k$, $j$, one can generate different types of average correlation functions~\cite{ippoliti2021manybody,mi2022time,Wahl_2024}: for a generic highly-entangled state (preparable via a suitable quantum circuit from product states~\cite{mi2022time}) and $k=j$,
one can generate a good approximation to the uniform eigenstate average of $C^{(a)}_{kj;\ket{n}}(t)$ in Eq.~\eqref{eq:dyncorr} (the $k=j$ requirement is to avoid the random, eigenstate-dependent, signs in $\bra{n}W_{kj}^{(a)}\ket{n}$). If instead $\ket{\psi}=\left(U^{(b,\mathcal{M})}_F\right)^{t_0}\ket{n}_\text{FP}$, with $\ket{n}_\text{FP}$ an eigenstate of the fixed-point representative of the phase $U^{(b,\mathcal{M})}_F$ realises, then, for sufficiently large $t_0$, Eq.~\eqref{corr_meas} approximates the (real part of) $C^{(a)}_{kj;\ket{n}}(t)$ for $\ket{n}$ with the same LIOM eigenvalues as in $\ket{n}_\text{FP}$~\cite{Wahl_2024}.

\section{Conclusions and 
Outlook}\label{sec:Conclusion}

We have shown how SymTFT can classify 1D $G$-symmetric Floquet systems and how the unified the perspective it provides on symmetries and boundary conditions enriches the set of observable bulk signatures. 
We found that distinct 1D Floquet phases are labeled by $(b,\mathcal{M})$, where $\mathcal{M}$ is a Lagrangian subgroup of the anyons in the fictitious 2D bulk topological order, and $b$ labels an equivalence class of excitations relative to $\mathcal{M}$. 

Our general classification unifies symmetry-broken and SPT phases and allows for partially broken symmetries. Concretely, for a system that exhibits SSB from $G$ to a subgroup $H$, we showed that the Floquet phases are classified according to 
\begin{equation}
    \text{Cl}_{\text{F}}(G,H) = \text{Cl}_0(H) \times G/H \times \text{Rep}(H), 
\end{equation}
cf.~Eq.~\eqref{eqn:sym_broken_class_scheme}, where $\text{Cl}_0(H)$ is the classification of static (undriven) SPT phases with symmetry group $H$ and the $G/H \times \text{Rep}(H)$ factor component classifies the excitation equivalence classes characterising time-translation symmetry breaking.  We found that in such phases, the system could demonstrate both regular and dual time-crystalline behaviour.

We showed that SymTFT provides novel insights already for the first, $\text{Cl}_0(H)$, factor:
we described a simple and natural interpretation for the group cohomology classification of static SPT phases, based on a counting method using Lagrangian subgroups of the quantum double model. In the picture of the quantum double as multiple copies of $\mathbb{Z}_k$-toric code, these Lagrangian subgroups were described by a series of attachments of $e$-type anyons to $m$-type anyons in other layers, performed such that the anyons in the subgroup maintained their braiding properties.
This allowed for an interpretation of 1D SPT phases in terms of simple characteristics of the anyons in the corresponding Lagrangian subgroup, and surprisingly reproduces the classification of SPT phases without recourse to group cohomology arguments. %

Turning to dynamical signatures, we studied spatiotemporal correlators and showed how any phase with non-trivial excitation class exhibits spontaneous time-translation symmetry breaking, including bulk signatures even in phases (such as the $0\pi$PM) previously considered to be only boundary time crystals. These bulk signatures arise from considering order parameters (charges) and boundary conditions on equal footing---in other words, from considering order parameters both for standard and for dual symmetry breaking~\cite{JiWen2020categorical}.

Finally, we considered extensions beyond our classification scheme, to  novel phases with non-onsite symmetries; in SymTFT these correspond to boundaries of twisted quantum doubles. We performed an in-depth analysis of the Floquet drives possible at the boundary of the double semion model (twisted $G=\mathbb{Z}_2$ quantum double), which resulted in a time-translation-symmetry-breaking phase with a drive that cannot be obtained through local symmetric
Hamiltonian terms. In the model with open boundary conditions, the non-onsite symmetry failed to be a representation of $\mathbb{Z}_2$, which led to an unusual $(\pi/2)$-pairing in the spectrum in certain cases (instead of the usual $\pi$-pairing associated with $\mathbb{Z}_2$-symmetric time-crystals). We also observed complete stability of $\pi$-pairing in this system to symmetric perturbations, where this does not exist in the case with periodic boundary conditions. (We also showed evidence for absolute stability of the TTSB against non-symmetric perturbations.) 

We also discussed how the spatiotemporal correlators detecting time-translation symmetry breaking can be measured in quantum computing experiments. We described an interferometric protocol based in qudit systems and commented on the corresponding hardware requirements. %

In this work we have not focused on open boundary conditions to a great extent (outside of Sec.~\ref{sec:TQD_phases}). As stated in Sec.~\ref{sec:static_BCs}, in a static context one can introduce these by introducing a segment of a trivial phase along $B_\text{phys}$, which acts as the vacuum to which the 1D system interfaces~\cite{motamarri2024symtftequilibriumtimecrystals}.
In a driven context, this works similarly, as open boundary conditions become interfaces between the dynamical phase on $B_\text{phys}$ with the trivial phase (without TTSB). The Floquet unitary of interest will be the restriction of $U_F$ to the dynamical region of $B_\text{phys}$. 
Hence, depending on the Lagrangian subgroup in the dynamical region of $B_\text{phys}$, we may find twist defects located at the interfaces between regions~\cite{Bombin10,BarkeshliQi_PRX12,YouWen12,Barkeshli13c,BBCW19,motamarri2024symtftequilibriumtimecrystals}, since there is a correspondence between the twist defects of Abelian topological phases and the interfaces between different gapped boundaries of that phase~\cite{Kesselring_2018,Cong_2017,Cong_2017_defects}. 
These defects also play a role in dualities between phases~\cite{frohlich2004duality,lichtman2020bulk,WenWei15}. 
Future work could explore such directions, understanding how defects and domain walls provide further insight into dynamical phases via SymTFT.

Avenues for future work also include, for example, 
the study of Floquet systems with non-onsite (anomalous) symmetries beyond $G=\mathbb{Z}_2$. We believe that such systems open a rich unexplored direction. For example, we have not attempted to classify the phases that arise at the boundaries of twisted quantum doubles, where for each $G$ there is a suite of possibilities, each associated with a distinct 3-cocycle from $H^3(G,U(1))$~\cite{Ellison_2022_TQDs}. While we have considered open boundary conditions for the anomalous drive, additional features of these phases could be studied in the presence of closed but twisted boundary conditions, as we have done for the non-anomalous drives in this paper. It is possible that twisting boundary conditions may not be possible in the same way as it is for the non-anomalous drives, owing to the impossibility of gauging the symmetry, and the absence of a non-trivial dual symmetry or Kramers-Wannier duality.
Future work could also consider complete symmetry breaking in 1D models corresponding to the boundaries of non-Abelian twisted quantum doubles.
Further directions include the study of embedding in SymTFT other forms of time-crystallinity \cite{RMP_DTCs} such as Stark MBL time crystals \cite{Stark_TCs}, discrete time quasicrystals \cite{DTQCs}, or topologically ordered time crystals in higher spatial dimensions~\cite{Wahl_LIOMs_2020,Wahl_2024}.
SymTFT also applies to systems with beyond group-like symmetries. Another interesting direction would be thus to study driven phases in systems with categorical symmetries~\cite{bhardwaj2024fermionicnoninvertiblesymmetries11d, Bhardwaj_2025_gapped_non_inv_1d, bhardwaj2023categoricallandauparadigmgapped} (see also Ref.~\cite{ZTM} for forthcoming work).

\begin{acknowledgments}
CM thanks David Long for valuable discussions and acknowledges support from the Intelligence Advanced Research Projects Activity (IARPA), under the Entangled Logical Qubits program through Cooperative Agreement Number W911NF-23-2-0223. VM acknowledges support from the University of Cambridge Harding Distinguished Postgraduate Scholars Programme and EPSRC.  This work was supported by EPSRC grant EP/V062654/1.
\end{acknowledgments}

\appendix

\section{Quantum double models}\label{app:QD_models}

In this Appendix, we provide some more details on quantum double model for group $G$~\cite{kitaev2003fault,Cui_2020}. Let $\mathcal{L}$ be a lattice. As in the main text, we assign local Hilbert spaces $\mathbb{C}[G]$ to the edges of $\mathcal{L}$. We also assign an orientation to each edge, indicated by an arrow. We then define the following operator for each vertex $v$:
\begin{align}
    A_v(g)\Bigg|\fourvalentvertex{g_1}{g_2}{g_3}{g_4}{v} \Bigg\rangle &= \Bigg|  \fourvalentvertex{g_1g^{-1}}{gg_2}{gg_3}{gg_4}{v} \Bigg\rangle.
\end{align}
Note that the operator shifts the local algebra elements by multiplication by $g$ ($g^{-1}$) on the left (right) if the arrow proceeds out of (into) the vertex $v$. We similarly define the plaquette operators:
\begin{align}
\begin{split}
    B_{p}(h) \Bigg|\plaquette{g_1}{g_2}{g_3}{g_4}{p}{v}\Bigg\rangle = \\
    \delta_{h,(g_1g_2g_3g_4^{-1})} \Bigg|\plaquette{g_1}{g_2}{g_3}{g_4}{p}{v}\Bigg\rangle.
\end{split}
\end{align}
The group element appearing in the Kronecker delta is defined by proceeding counter-clockwise from the bottom-left vertex and multiplying by the edge group element or its inverse, if the arrow is aligned or anti-aligned with the path, respectively. We finally define:
\begin{align}
    A_v &= \frac{1}{|G|}\sum_{g\in G}A_v(g)\\
    B_p &= B_{p}(1),
\end{align}
with $1$ the identity element of $G$.
The Hamiltonian is then Eq.~\eqref{eq:H_qd} of the main text.

For general $G$, the anyonic excitations of this model are given by irreducible representations (irreps) of the Drinfeld (or quantum) double of the group algebra, $\mathbb{C}[G]$~\cite{kitaev2003fault}. This is the algebra generated by operators $A_v(g)$ and $B_{p}(h)$. Such irreps can be labelled by a conjugacy class $C$ of $G$ along with a representation of the centraliser $Z_G(h)=\lbrace g\in G \, | \, gh = hg\rbrace$ of an (arbitrarily chosen) representative $h$ of $C$. 

In the case of finite Abelian groups (our interest here), the set of conjugacy classes is the group $G$ itself, and the centraliser of any element in $G$ is also the entire group, $G$. Hence, the excitations are labelled by $(g,\alpha)$, for $g\in G$ and $\alpha \in \text{Rep}\, G$, the set of irreps of $G$. The vertex operators $A_{v}(g)$ can be viewed as transporting $(g,1)$ around a vertex (i.e., dual-lattice) plaquette, while the plaquette operator $B_p$ can be viewed being constructed from analogous operators $B_p(\alpha)$ transporting $(1,\alpha)$ around a plaquette  [i.e., $B_p(\alpha)$ applies representation $\alpha$ to group elements on the links around $p$; it is thus the electric-magnetic and lattice-dual-lattice dual of $A_{v}(g)$]
\begin{equation}\label{eq:B_p_rep}
B_p=\frac{1}{|\text{Rep}\, G|}\sum_\alpha B_p(\alpha).
\end{equation}
To obtain Eq.~\eqref{eq:B_p_rep}, we used the completeness relation of representation characters to express $\delta_{g,1}$. 

Let us examine the $G=\mathbb{Z}_2$ example more closely. Here, $A_v(0)$ acts as the identity while $A_v(1)$ flips all spins around vertex $v$. Hence, $A_v=(1+\prod_{j\in v} X_j)/2$ (where we take the product over edges whose endpoints include $v$). Meanwhile, $B_p$ enforces all spin values around a plaquette to sum to $0$ mod $2$. Hence, $B_p=(1+\prod_{j\in p} Z_j)/2$. This can also be understood as being $B_p = \frac{1}{2}(B_p(1) + B_p(-1))$ [with $1$ ($-1$) representing the trivial (non-trivial) irrep of $\mathbb{Z}_2$]. $B_p(-1)$ applies phase $+1$ ($-1$) to all states $\ket{0}$ ($\ket{1}$) on edges around $p$. Note that in the conventions of this appendix, $A_v$ and $B_p$ are projectors rather than Pauli operators, differing from the conventions of Sec.~\ref{sec:Illustrative_example}.
The $m$ anyon corresponds to the non-trivial $(g,1)$-type anyon---this corresponds to the end-point of a string (in the dual lattice) of operators sending $\ket{g}\mapsto \ket{g\oplus 1}$, i.e., a string of $X$ operators. The $e$ anyon corresponds to the end-point of a string of $Z$ operators, which apply the non-trivial $\mathbb{Z}_2$ irrep to qubit states.

\section{\texorpdfstring{(Partially-)SSB phases with symmetry breaking to any subgroup of $G$}{(Partially-)SSB phases with symmetry breaking to any subgroup of G}}\label{app:SSB_phases_subgroup_H}

We here discuss spontaneous symmetry breaking to a subgroup $H$ of $G$, for which we need to find a Lagrangian subgroup $\mathcal{M}_H = \lbrace(h,\beta)\rbrace$, whose $G$-components form subgroup $H$. To do so, we need to identify the irreps $\beta$ that act trivially on $H$: $\beta(h) = 1$ for all $h\in H$.

Because $G$ is Abelian, the cosets form a group, $G/H$, and we can find the irreps of that group. The irreps of $G/H$ lift to irreps of $G$ in a natural way. If $\tilde{\beta}$ is an irrep of $G/H$, then we define $\beta(g) = \tilde{\beta}(gH)$ to act as $\tilde{\beta}$ on the corresponding coset to which $g$ belongs. Thus $\beta$ is an irrep of $G$. It is a representation because $\beta(g_1g_2) = \tilde{\beta}([g_1g_2]H) = \tilde{\beta}([g_1H][g_2H]) = \tilde{\beta}(g_1H)\tilde{\beta}(g_2H) = \beta(g_1)\beta(g_2)$, and it is irreducible because it is 1-dimensional. In the second equality we use the multiplication definition of cosets, which is well defined since $G$ is Abelian, while in the third equality we use the fact that $\tilde{\beta}$ is a representation. $\beta$ clearly acts trivially on all $h\in H$. 

Hence, we can form a Lagrangian subgroup $\mathcal{M}_H = \overline{M}_H\times  \lbrace (1, \beta)\,|\, \tilde{\beta} \in \text{Rep}\, G/H\rbrace$, where $\overline{M}_H$ is a subgroup of $\mathcal{M}_H$ (which is closed under fusion and has a maximal number [i.e., $|H|$] of elements) and the Rep$\, G$ components of $\overline{M}_H$ act trivially on all $g\notin H$. To check this obeys the properties of a Lagrangian subgroup, first note that all anyons are bosons: we assume the anyons of $\overline{M}_H$ are bosons and braid trivially with one another, and we know that $(1,\beta)$ are bosons for all $\beta\in \text{Rep}\, G$, so it suffices to check that all anyons in $\overline{M}_H$ braid trivially with all $(1,\beta)$ for $\tilde{\beta}\in \text{Rep}\, G/H$, which is trivially true because $\beta(h)=1$ by definition. It is closed under fusion, since $H$ forms a subgroup of $G$, and similarly any product of two irreps of $G/H$ remains an irrep of $G/H$. And finally its size is maximal, since $|\mathcal{M}_H| = |H||\text{Rep}\, G/H| = |G|$.

\section{SPT classification}\label{app:SPT_class}

In this Appendix, we complete the classification of static SPT phases using Lagrangian subgroups.

\subsection{Roles of symmetry subgroups in SPT phases}\label{app:SPT_coset_cohomology}

Here, we show that, if there exists a subgroup $M\leq H$ of the unbroken symmetry group $H$, such that $(m,1)\in \mathcal{M}$ for all $m\in M$, then the SPT phase is labelled by a cocyle in $H^2(H/M,U(1))$. Let us recall how the group cohomology classification of SPT phases comes about. Consider elements of the symmetry group $h,g\in G$ and suppose the SPT phase has left and right boundaries, labelled $L$ and $R$. Then, since $H$ is Abelian, we have that the actions $U^g$ and $U^h$ commute: $U^gU^h(U^g)^{-1}(U^h)^{-1} = 1$. However, restricting $U^g$ and $U^h$ to the left-hand boundary (equiv. the right-hand boundary) need not produce commuting operators. Indeed, we can have $U^g_LU^h_L(U^g_L)^{-1}(U^h_L)^{-1} = e^{i\gamma_{gh}}$, so long as $U^g_RU^h_R(U^g_R)^{-1}(U^h_R)^{-1} = e^{-i\gamma_{gh}}$. This means that $U_L$ and $U_R$ form projective representations of the group $H$. In other words, we have that:
\begin{align}
    U_L^g U_L^h = \omega (g,h)U_L^{gh}
\end{align}
for some $\omega(g,h)\in U(1)$, with $g,h\in H$. This 2-cocycle $\omega$ determines the commutator via:
\begin{align}
    U_L^gU_L^h(U_L^g)^{-1} (U^h_L)^{-1} &= U_L^gU_L^h U_L^{(g^{-1})}U_L^{(h^{-1})}\\
    &= \omega(g,h)\omega(gh,g^{-1}) = e^{i\gamma_{gh}},
\end{align}
and, since $G$ (and therefore $H$) is Abelian, this relation can be inverted~\cite{Verresen_PRX_2021}.
In the above, we have set $U^{(g^{-1})}_L = (U_L^g)^{-1}$. To see why this is possible, note that we can always perform a transformation $U^g_L \mapsto \beta(g)U^g_L$ for some $\beta(g)\in U(1)$. This transforms the cocycle as:
\begin{align}
    \omega (g,h) \mapsto \tilde{\omega}(g,h) = \frac{\beta(g)\beta(h)}{\beta(gh)}\omega(g,h) \equiv \omega(g,h)d\beta(g,h).
\end{align}
Namely, the cocycles differ by a coboundary--these are considered as equivalent. Set $U_L^e = \mathds{1}$, so that $\omega(g,g^{-1}) = U_gU_{g^{-1}}$ and $\omega(g^{-1},g) = U_{g^{-1}}U_g$, so $U_{g^{-1}} = \omega (g,g^{-1})U_g^\dagger = \omega(g^{-1},g)U_g^\dagger$ and hence $\omega(g,g^{-1})= \omega(g^{-1},g)$. Therefore, take $\beta(g) = 1/\sqrt{\omega(g,g^{-1})} = 1/\sqrt{\omega(g^{-1},g)}$ so that $\tilde{\omega}(g,g^{-1}) = 1$ and hence $(U_L^g)^{-1} = U_L^{(g^{-1})}$. Furthermore, we have $U^g_L U^e_L = \omega(g,e)U^g_L$ and, since we have set $U^e_L = \mathds{1}$, $\omega(g,e) = 1$ for all $g\in H$.

If $(m,1)\in \mathcal{M}$ for all $m\in M$, then $U^m$ acts as the identity on the ground state of the system. Indeed, it may be transformed (via stabilisers of the system) away from the boundaries, to act only in the bulk of the 1D system. Hence, $U^m_L$ and $U^m_R$ must commute with $U^g_L$ and $U^g_R$, respectively. Therefore, $\omega (g,h)$ must only depend on the cosets of $M$ in $G$, since $\omega(g,h)U^{gh}_L = U^g_LU^h_L \sim U_L^gU_L^h U^m = \omega(g,m)\omega(gm,h)U^{ghm}_L= \omega(m,h)\omega(g,hm)U^{ghm}_L$, where $\sim$ represents equality within the ground space of the system. Meanwhile, $U^{ghm} = U^{gh}U^m \sim U^{gh}$, so $U^{ghm}_L \sim U^{gh}_L$. Hence, we have that $\omega(g,m)\omega(gm,h) = \omega(g,h)$ is independent of $m\in M$. By choosing $h=e$ (similarly, $g=e$), we see that $\omega(g,m) = 1$ ($\omega(m,h)=1$) for all $m\in M$. Therefore, we see that $\omega(g,h) = \omega(gm,h) = \omega(g,hm)$, for all $m\in M$. So we can define $\bar{\omega}\in H^2(H/M,U(1))$ in the obvious way, which labels the SPT order.

\subsection{Reproducing the SPT classification using Lagrangian subgroups}\label{app:SPT_Lemma}

Here we expand on the argument that reproduces the classification of SPT phases in Section~\ref{sec:Simple_SPT_Class}. Specifically, we will prove the following:
\begin{lem}\label{lem:sum_of_fracs}
    The number of solutions to $\frac{q}{k} + \frac{p}{m}\in \mathbb{Z}$ for $q\in\mathbb{Z}_k$ and $p\in\mathbb{Z}_m$ is gcd$(k,m)$.
\end{lem}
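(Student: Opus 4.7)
The plan is to reduce the divisibility condition to a statement about two coprime factors, then count the resulting solutions directly.

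First I would clear denominators: the condition $\frac{q}{k} + \frac{p}{m} \in \mathbb{Z}$ is equivalent to $km \mid qm + pk$. Let $d = \gcd(k,m)$ and write $k = dk'$, $m = dm'$ with $\gcd(k',m') = 1$. The strategy is to extract necessary divisibility constraints on $q$ and $p$ individually, parametrise the solutions, and then determine the residual constraint.

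Next, I would reduce modulo $k$: the condition implies $k \mid qm$, i.e., $dk' \mid q \cdot dm'$, so $k' \mid qm'$, and since $\gcd(k',m') = 1$, we conclude $k' \mid q$. Symmetrically, $m' \mid p$. Writing $q = k' s$ and $p = m' t$, and using $q \in \{0,1,\ldots,k-1\}$ and $p \in \{0,1,\ldots,m-1\}$, we find $s, t \in \{0, 1, \ldots, d-1\}$.

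Substituting into the original condition gives $qm + pk = dk'm'(s+t)$, while $km = d^2 k' m'$, so the condition $km \mid qm + pk$ reduces exactly to $d \mid s+t$. The main obstacle is simply ensuring this reduction is clean; the coprimality of $k'$ and $m'$ is the key ingredient that makes the separate divisibility conclusions possible.

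Finally, I would count: the pairs $(s,t) \in \{0,\ldots,d-1\}^2$ with $d \mid s+t$ are $(0,0)$ together with $(s, d-s)$ for $s=1,\ldots,d-1$, giving $1 + (d-1) = d$ solutions in total. This is $\gcd(k,m)$, as claimed. As a by-product, the explicit solutions $(q,p) = (k' s,\, m'(d-s) \bmod m)$ for $s = 0,1,\ldots,d-1$ can be recorded for use in the main-text classification.
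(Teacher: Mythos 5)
Your proof is correct and follows essentially the same route as the paper's: both use the decomposition $k=dk'$, $m=dm'$ with $\gcd(k',m')=1$, exploit this coprimality to force $k'\mid q$ and $m'\mid p$, and then count the resulting $d$ pairs. The only cosmetic difference is that you obtain the divisibility constraints by reducing the condition modulo $k$ and modulo $m$ separately and land on the clean residual condition $d\mid s+t$, whereas the paper pins down $m'q+k'p=m'k'd$ via a size bound and solves for $p$ and $q$; both arguments produce the same explicit parametrisation of solutions recorded for the main-text classification.
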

\begin{proof}
    Let $d = \text{gcd}(k,m)$ and let $m = m'd$ and $k = k'd$. Finding $q,p$ such that $\frac{q}{k} + \frac{p}{m}$ is an integer is equivalent to finding solutions to:
    \begin{align}
        mq + kp &\equiv 0 \quad (\text{mod}\, mk)\\
        \implies d(m' q + k' p) &\equiv 0 \quad (\text{mod}\, mk)\\
        \implies m' q + k' p &\equiv 0 \quad (\text{mod}\, m' k' d).
    \end{align}
    Ignoring the trivial solution for now ($q=p=0$), it must be the case that:
    \begin{align}\label{eqn:app_c_integers}
        m' q + k' p = m' k' d,
    \end{align} 
    since $q<k$ and $p<m$ means that $m'q + k'p < 2m' k' d$. This implies that $m'$ divides $(m' k' d - k'p)$ evenly, and so $p = m' l$ for some integer $l$ (since $k'$ and $m'$ are coprime). Therefore, substituting this value for $p$ into Eq.~\ref{eqn:app_c_integers}, we find $q = k' (d-l)$. Since $l$ is an integer, we see that there are $d-1$ solutions for $0<q<k$ and $0<p<m$. Including the trivial one, there are therefore $d$ solutions in total. 
\end{proof}

Note that we can produce the solutions by taking $0\leq l \leq d$, $p= m'l$ and $q= k'(d-l)$. This is useful as it determines the Lagrangian subgroups corresponding to the SPT phases.

\subsection{From Lagrangian subgroup to cohomology}\label{app:SPT_Lagrangian_subgroup_to_cohomology}

Here, we demonstrate how to obtain the projective representation associated with a static SPT phase from the associated Lagrangian subgroup in SymTFT. Recall that the Lagrangian subgroup associated with an SPT phase can be written as $\mathcal{M} = \langle m_1 e_{j_1}^{p_1}, m_2 e_{j_2}^{p_2},\ldots \rangle$, where $j_i \neq i$ (see Section~\ref{sec:Simple_SPT_Class}). The $m$ anyons are associated with the generators of $G$ and hence let us consider the on-site action of the symmetry, $U^{m_i}$. Then consider the following graphical series of transformations, where $\sim$ denotes equality up to application of elements in $\mathcal{M}$ and deformation of string operators in spacetime:
\begin{align}
    U^{m_j}U^{m_i}&(U^{m_j})^{-1}(U^{m_i})^{-1}\\[0.5em]
    &= \adjincludegraphics[scale=0.52,valign=c]{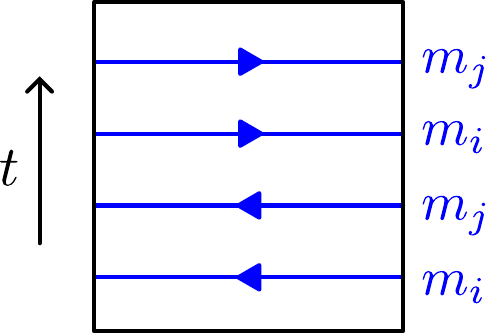}\\[0.5em]
    &\sim \adjincludegraphics[scale=0.52,valign=c]{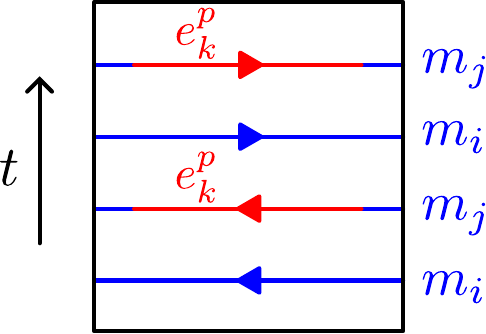}\\[0.5em]
    &\sim \adjincludegraphics[scale=0.52,valign=c]{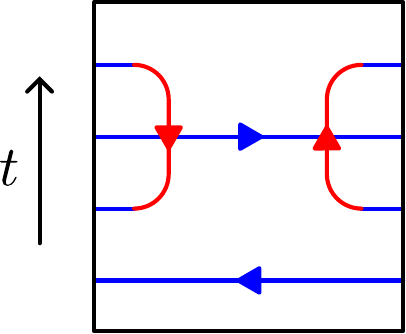}\\[0.5em]
    &\sim \adjincludegraphics[scale=0.52,valign=c]{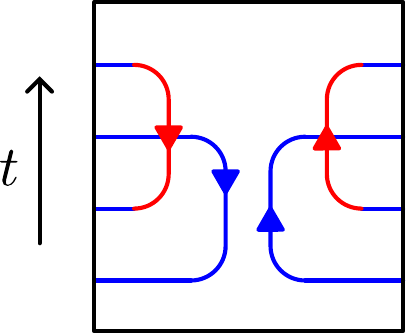}.\label{eqn:graph_projective_rep_final}
\end{align}
In the above, the 1D system runs horizontally, while time runs vertically. We can deform certain $m$ anyon strings (in the bulk) into $e$ anyon strings via $\mathcal{M}$ elements, and we have supposed that $m_j e_k^p\in \mathcal{M}$. In Equation~\ref{eqn:graph_projective_rep_final}, we can readily identify four intersecting loops, two confined to the left of the system and two to the right of the system. Each of these is equivalent to the vacuum up to a phase, since the overall operator is equivalent to the application of symmetry operators to the (non-degenerate) ground state. 

In these loops, we can observe an $m$ anyon braiding with an $e$ anyon. We can identify the left/right loops with the $U_L$/$U_R$ actions, respectively, obtaining:
\begin{align}
    U_L^{m_j}U_L^{m_i}(U_L^{m_j})^{-1}(U_L^{m_i})^{-1} &= \exp (i\theta_{e_k^p, m_i}),\\
    U_R^{m_j}U_R^{m_i}(U_R^{m_j})^{-1}(U_R^{m_i})^{-1} &= \exp (i\theta_{\overline{e_k^p}, m_i})\\
    &= \exp (-i\theta_{e_k^p, m_i}).
\end{align}
Hence, the projective representation $U_L$ (equivalently, $U_R$) is obtained from the braiding relation between $e_k^p$, which is attached to anyon $m_j$ in $\mathcal{M}$, and $m_i$.

\section{Extending Disorder Correlators Away from Fixed Points}\label{app:extending_from_fixed_points}

Here we see how the disorder correlators decay once we have perturbed the drive away from fixed points [cf. Sec.~\ref{sec:ST_order_eigenstates}]. For such a drive,
the LIOMs will in general have exponentially decaying tails. We will refer to the LIOMs at this point as $D_j$, related to the fixed-point LIOMs by a quasi-local unitary $\mathcal{U}$. Now suppose we are perturbing away from a unitary $U_F^{(b,\mathcal{M})}$ with $b\notin \mathcal{M}$ (see Eq.~\ref{eqn:fp_drives}). We will imagine the perturbation may, in general, not be symmetric or dual symmetric. In such cases we will still be able to find emergent (dual) symmetries, related to the original symmetry by the same unitary transforming the LIOMs, $\mathcal{U}$. In particular, let us define the emergent operator $\widetilde{W}^{(b)} = \mathcal{U}^\dagger \overline{W}^{(b)}\mathcal{U}$. Then, since the $D_j$ operators retain the same commutation relations away from the fixed point, the perturbed Floquet unitary can be written as:
\begin{align}
    U_{F,\text{pert}} &= \widetilde{W}^{(b)}\exp[-i\mathcal{H}({D_j})]\\
    &= \left( \prod_k \tau^x_k\right) \exp[-i\mathcal{H}(D_j)]
\end{align}
with $\mathcal{H}$ some function of the LIOMs,
and we write $\widetilde{W}^{(b)}$ as a product of operators, $\tau^x_j$ ($\mathcal{U}$-conjugated versions of local operators making up $\overline{W}^{(b)}$), that do not individually commute with the LIOMs.
Suppose that the interval $I = [\ell,r]$ is much larger than the localisation length for the system, $\xi$, but much smaller than the full size of the system. Truncating $U_{F,\text{pert}}$ to this interval therefore approximately does not affect its form, in terms of the $D_j$ and $\tau^x_j$ operators, deep in the bulk. We can therefore write $U_{F,\text{pert}}(\ell,r)$ in the following way:
\begin{align}\label{eq:UFtrunc}
    U_{F,\text{pert}}(\ell,r) &= V_\ell V_r \left(\prod_{k\in I} \tau^x_k\right) \exp[-i\mathcal{H}(D_{j\in I})] 
\end{align}
where the middle terms are written explicitly in terms of the $D_j$ and $\tau^x_j$ operators, and $V_\ell$ and $V_r$ are unitary operators exponentially localised to the boundaries of $I$. Since the $D_j$ and $\tau^x_j$ operators have only quasi-local support, truncating $U_{F,\text{pert}}$ to $I$ is not equivalent to selecting only those $D_j$ and $\tau^x_j$ with $j\in I$. Doing so results in an operator that differs from $U_{F,\text{pert}}(\ell,r)$ by some amount exponentially localised to $\ell$ and $r$---the operators $V_\ell$ and $V_r$ ``correct" this error.

Now let us consider the disorder correlator associated to TTSB. We can label the eigenstates of $U_{F,\text{pert}}$ with the eigenvalues of the LIOMs, $d_j$,
and the eigenvalues of some (potentially emergent) symmetry and dual symmetry operators, which we collectively denote $\tilde{w}$: $\ket{n} \coloneqq \ket{\{ d_j\}, \tilde{w}}$. The disorder correlator then becomes:
\begin{align}
\begin{split}
    &\bra{n}U_{F,\text{pert}}(\ell,r)\ket{n} \\
    &= \bra{\{ d_j\}, \tilde{w}} V_\ell V_r \left(\prod_{k\in I} \tau^x_k\right) \exp[-i\mathcal{H}(D_{j\in I})] \ket{\{ d^z_j\}, \tilde{w}}
\end{split}\\
&= \exp[-i\mathcal{H}(d^z_{j\in I})]\bra{\{ d^z_j\}, \tilde{w}}V_\ell V_r \left(\prod_{k\in I} \tau^x_k\right)\ket{\{ d^z_j\}, \tilde{w}}.\label{eqn:disorder_corr_calc}
\end{align}
Now, since the $\tau_k^x$ flip some of the eigenvalues $d_j$ near the boundaries of $I$ (owing to $b\notin \mathcal{M}$), $\bra{\{d_j\}, \tilde{w}} \prod_{k\in I}\tau_k^x \ket{\{ d^z_j\}, \tilde{w}} = 0$, and hence the only way for Equation~\ref{eqn:disorder_corr_calc} to be non-zero is for there to be some term in $V_\ell V_r$ proportional to $\prod_{k\in I}(\tau_k^x)^\dagger$ (up to a product of LIOMs and symmetry/dual symmetry operators), cancelling the effect of the off-diagonal product of $\tau^x_k$. However, due to the exponential localisation of $V_\ell V_r$ to the boundaries of $I$, any such term will be exponentially small, i.e., will have an operator norm that is $\sim e^{-|\ell-r|/\xi}$. This explains the exponential decay of disorder correlators in the TTSB case (whenever $b\notin \mathcal{M}$), as seen in Equation~\ref{eqn:Disorder_corr_two_cases}.

\section{Structure of Lagrangian subgroups \& their excitation classes}\label{app:excitation_classes}
We consider a Lagrangian subgroup $\mathcal{M}$, with corresponding unbroken symmetry group $H$. From Appendix~\ref{app:SSB_phases_subgroup_H}, we can identify a decomposition of the lagrangian subgroup $\mathcal{M}_H = \overline{M}_H \times \mathcal{M}_{\text{Rep}\, G/H}$, where $\mathcal{M}_{\text{Rep}\, G/H} = \lbrace (1,\beta)\, |\, \beta\in \text{Rep}\, G,\, \beta(h)=1\, \forall \, h\in H\rbrace$ and $\overline{M}_H$ has $|H|$ elements whose $G$-components generate $H$.  %

With a view on the structure of $\mathcal{M}$ we can now classify excitations. We have an excitation for any $b \notin \mathcal{M}$. However, if $a \in \mathcal{M}$, $b$ and $b \times a$ are equivalent excitations. Thus excitations fall into equivalence classes; distinct excitations correspond to distinct classes in $\mathcal{A}_G/\mathcal{M}$. The number of these is $|\mathcal{A}_G/\mathcal{M}| = |\mathcal{A}_G|/|\mathcal{M}|=|G|^2/|G|= |G|$.

It is useful to consider two particular variants of excitations. First, we have $b=(1,\beta) \in \mathcal{M}_{\text{Rep}\, G}$ such that $\beta(h) \neq 1$ for some $h \in H$. Since $b\times (1,\alpha)$ [with $(1,\alpha) \in \mathcal{M}_{\text{Rep}\, G/H}$] corresponds to an equivalent excitation, we have equivalence classes $b \mathcal{M}_{\text{Rep}\, G/H}$. Denoting the group these form by $\mathcal{M}^{\text{exc}}_{\text{Rep}\, G}$, we have $\mathcal{M}^{\text{exc}}_{\text{Rep}\, G} = \mathcal{M}_{\text{Rep}\, G} / \mathcal{M}_{\text{Rep}\, G/H}$. Since $\mathcal{M}_{\text{Rep}\, G/H} \simeq \text{Rep} (G/H)$, we have $\mathcal{M}^{\text{exc}}_{\text{Rep}\, G} \simeq \text{Rep}(G)/\text{Rep} (G/H)\simeq \text{Rep}(H)$. %

Next, for any $g \notin H$ we have $a' = (g,1) \notin \mathcal{M}$. Now $a' \overline{M}_H$ forms an equivalence class of excitations. Thus, if we let $\mathcal{M}_{G}= \lbrace(g,1) \,| \, g \in G\rbrace$, the group $\mathcal{M}^{\text{exc}}_{G}$ of inequivalent excitations is $\mathcal{M}^{\text{exc}}_{G} = \mathcal{M}_{G} / \overline{M}_{H} \simeq G/H$. %

The above two classes of excitations can be fused and together yield $|\mathcal{M}^{\text{exc}}_{G}| |\mathcal{M}^{\text{exc}}_{\text{Rep}\, G}| = |G|$ distinct excitations, thus exhausting $\mathcal{A}_G / \mathcal{M}$. 
That is, we obtained the decomposition $\mathcal{A}_G / \mathcal{M} = \mathcal{M}^{\text{exc}}_{G} \times \mathcal{M}^{\text{exc}}_{\text{Rep}\, G}$ in the sense of a fusion group. For the purpose of the classification of Floquet drives it is useful to note that $\mathcal{A}_G / \mathcal{M} \simeq G/H \times \text{Rep}(H)$.

\section{Double Semion Boundary Model with OBCs}\label{app:DS_model_OBCs}

\begin{figure}
\centering
\includegraphics[width=\linewidth]{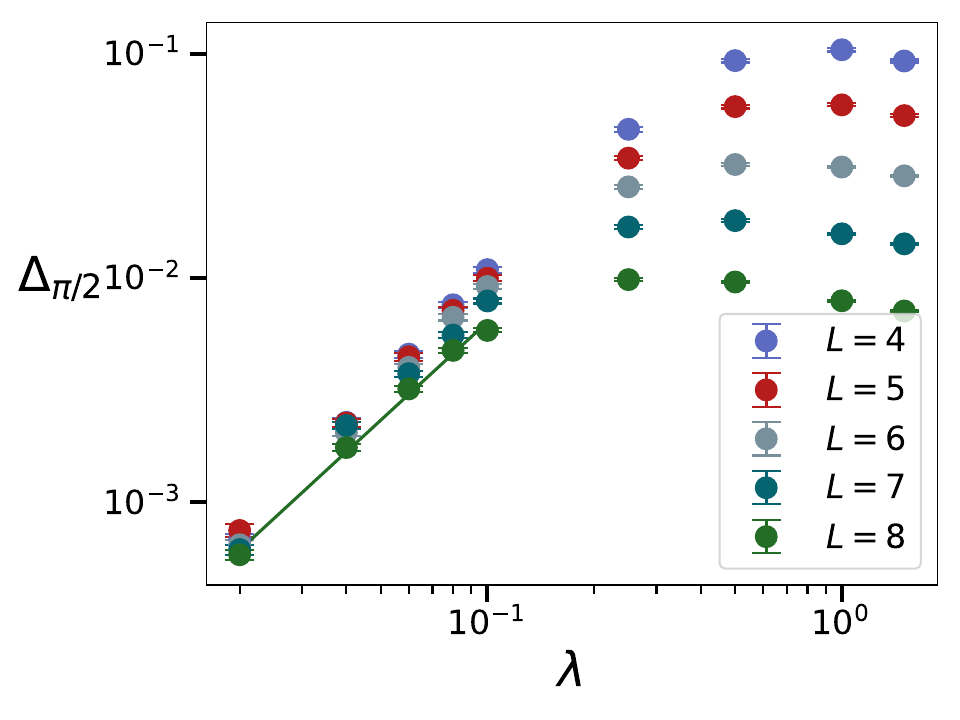}
\caption{Average splitting between ($\pi/2$)-paired eigenvalues of $U^{\text{TTSB}}_{\text{DS,OBC}}$ with all terms in $H_\text{DS,OBC}$ that straddle bond $[\lfloor L/2\rfloor, \lfloor L/2\rfloor + 1]$ being omitted. $\lambda$ is the strength of both symmetric perturbations (not including those that straddle the bond) and non-symmetric local $X$ and $Z$ fields. The $(\pi/2)$-splitting is averaged over 1000 disorder realisations. The $L=8$ data points
with $\lambda \leq 0.1$ are used to produce the fit shown, which is $\Delta_{\pi/2} \sim \lambda^{1.5}$.\label{fig:pi_2_splitting_missing_bond}}
\end{figure}

Here, we provide data for the model considered in Section~\ref{sec:TQD_phases}, with open boundary conditions and a missing bond, to allow for $(\pi/2)$-pairing in the spectrum. We first observed that all $\pi$-splitting is smaller than $10^{-15}$ in all cases tested, whenever only a symmetric perturbation is included, and the splitting did not depend on $\lambda$ -- this suggests no splitting induced by this symmetric perturbation (we take the finite splitting to be resulting from machine imprecision). Similarly, when we only include symmetric perturbations and omit all terms in the Hamiltonian that straddle bond $[\lfloor L/2\rfloor, \lfloor L/2\rfloor + 1]$, we observe the same stability in $(\pi/2)$-pairing---it is not split at all by the perturbation.

Meanwhile, if we include symmetric perturbations that straddle the bond, and/or if we include non-symmetric perturbations, the $(\pi/2)$-pairing is split at an order in $\lambda$ independent of the system size. This is shown in Fig.~\ref{fig:pi_2_splitting_missing_bond}, where for small $\lambda$, the decay in $\Delta_{\pi/2}$ with decreasing $\lambda$ is roughly independent of $L$, and we observe that $\Delta_{\pi/2} \sim \lambda^{1.5}$. This suggests the $(\pi/2)$-pairing is not stable.

\end{document}